\newtheorem{theorem}{Theorem}
\newtheorem{corollary}[theorem]{Corollary}
\newtheorem{lemma}[theorem]{Lemma}
\newtheorem{proposition}[theorem]{Proposition}
\theoremstyle{definition}
\newtheorem{definition}[theorem]{Definition}
\newtheorem{remark}[theorem]{Remark}
\newcommand{\eps}{\varepsilon}
\newcommand{\abs}[1]{\left| #1 \right|}
\newcommand{\norm}[1]{\Vert #1 \Vert}
\newcommand{\RR}{\mathbb{R}}
\newcommand{\CH}{\mathcal{H}}
\newcommand{\CA}{\mathcal{A}}
\newcommand{\CC}{\mathbb{C}}
\newcommand{\PP}{\mathbb{P}}
\newcommand{\nA}{\mathcal{A}}
\newcommand{\Sph}{\mathbb{S}}
\newcommand{\brac}[1]{\left(#1\right)}
\newcommand{\fbrac}[1]{\left\{#1\right\}}
\newcommand{\sbrac}[1]{\left[#1\right]}
\newcommand{\abrac}[1]{\langle#1\rangle}
\DeclareMathOperator{\conv}{conv}
\DeclareMathOperator{\mean}{\mathbb{E}}
\DeclareMathOperator{\tr}{tr}
\DeclareMathOperator{\id}{id}
\DeclareMathOperator{\rank}{rank}
\begin{document}

\title{Stable low-rank matrix recovery via null space properties}
\date{\today}
\author{Maryia Kabanava\textsuperscript{1}, Richard Kueng\textsuperscript{2,3,4}, Holger Rauhut\textsuperscript{1}, Ulrich Terstiege\textsuperscript{1}}

\maketitle

\begin{center}
\small

\textsuperscript{1} Lehrstuhl C f\"ur Mathematik (Analysis), RWTH Aachen University, Germany \\
\textsuperscript{2} School of Physics, The University of Sydney, Australia 	\\
\textsuperscript{3} Institute for Physics \& FDM, University of Freiburg, Germany \\
\textsuperscript{4} Institute for Theoretical Physics, University of Cologne, Germany

\end{center}

\begin{abstract}
The problem of recovering a matrix of low rank from an incomplete and possibly noisy set
of linear measurements arises in a number of areas such as quantum state tomography,
machine learning and the PhaseLift approach to phaseless reconstruction problems.
In order to derive rigorous recovery results, the measurement map is usually modeled probabilistically and
convex optimization approaches including nuclear norm minimization are often used as recovery method.
In this article, we derive sufficient conditions on the minimal amount of measurements that ensure recovery via convex optimization. We establish our results via certain properties of the null space of the measurement map. 
In the setting where the measurements
are realized as Frobenius inner products with independent standard Gaussian random matrices we show
that $m > 10 r (n_1 + n_2)$ measurements are enough to uniformly and stably 
recover an $n_1 \times n_2$ matrix of rank at most $r$. Stability is meant both with respect to passing from
exactly rank-$r$ matrices to approximately rank-$r$ matrices and with respect to adding noise on the measurements.
We then significantly generalize this result by
only requiring independent mean-zero, variance one entries with four finite moments at the cost of
replacing $10$ by some universal constant. 
We also study the particular case of recovering Hermitian rank-$r$ matrices from measurement matrices proportional to rank-one projectors. 
For $r=1$, such a problem reduces to the PhaseLift approach to phaseless recovery,
while the case of higher rank is relevant for quantum state tomography.
For $m \geq C r n$ rank-one projective measurements onto independent standard Gaussian vectors, we show 
that nuclear norm minimization uniformly and stably reconstructs Hermitian rank-$r$ matrices with high probability.
Subsequently, we partially de-randomize this result by establishing an analogous statement
for projectors onto independent elements of a complex projective 4-designs at the cost of a slightly higher sampling rate $m \geq C rn \log n$.
Complex projective $t$-designs are discrete sets of vectors whose uniform distribution reproduces the first $t$ moments
of the uniform distribution on the sphere. 
Moreover, if the Hermitian matrix to be recovered is known to be positive semidefinite, then we show that the nuclear norm minimization
approach may be replaced by the simpler optimization program of minimizing the $\ell_2$-norm of the residual subject
to the positive semidefinite constraint. This has the additional advantage that no estimate of the noise level is required a 
priori. 
We discuss applications of such a result in quantum physics and the phase retrieval problem.
Apart from the case of independent Gaussian
measurements, the analysis exploits Mendelson's small ball method.

\end{abstract}

{\bf Keywords.}
low rank matrix recovery, quantum state tomography, phase retrieval, convex optimization, nuclear norm minimization, 
positive semidefinite least squares problem, complex projective designs, random measurements 

\medskip

{\bf MSC 2010.}
94A20, 
 94A12, 
60B20, 
90C25,  
81P50

\section{Introduction}

In recent years, the recovery of objects (signals, images, matrices, quantum states etc.) from incomplete linear measurements
has gained significant interest. While standard compressive sensing considers the reconstruction of (approximately) sparse
vectors \cite{FoucartRauhut}, we study extensions to the recovery of (approximately) low rank matrices from a small number of random
measurements. This problem arises in a number of areas such as quantum tomography \cite{gross_quantum_2010,flammia_quantum_2012,gross_focus_2013}, 
signal processing \cite{ahro15}, recommender systems \cite{care09,cata10}  and phaseless recovery \cite{candes_phase_2013,castvo13,gross_partial_2014,grkrku14}.
On the one hand, we consider both random measurement maps generated by 
independent random matrices with independent entries and on the other hand, measurements with respect to independent
rank one measurements. We derive bounds for the number of required measurements in terms of the matrix dimensions 
and the rank of the matrix that guarantee successful recovery via nuclear norm minimization. Our results are uniform and stable
with respect to noise on the measurements and with respect to passing to approximately rank-$r$ matrices. For rank-one
measurements the latter stability result is new. 

Let us formally describe our setup. 
We consider measurements of an (approximately) low-rank matrix $X\in\CC^{n_1\times n_2}$ of the form
$b = \nA(X)$, where the linear measurement map $\nA$ is given as
\begin{equation}\label{eq:MeasurementProcess}
\nA:\CC^{n_1\times n_2}\to\CC^m,\quad Z\mapsto\sum_{j=1}^m\tr(ZA_j^*)e_j.
\end{equation}
Here, $e_1,\ldots,e_m$ denote the standard basis vectors in $\CC^m$ and $A_1,\ldots,A_m\in\CC^{n_1\times n_2}$ are called measurement matrices. 
A prominent approach \cite{Fazel,RechtFazelParrilo} for recovering the matrix $X$ from $b = \nA(X)$ consists in computing
the minimizer of the convex optimization problem 
\begin{equation}\label{eqnnMinimization}
\underset{Z\in\CC^{n_1\times n_2}}\min\norm{Z}_*\quad\mbox{subject to } \nA(Z) = b,
\end{equation}
where $\norm{Z}_* = \norm{Z}_1 = \sum_{j=1}^n \sigma_j(Z)$ denotes the nuclear norm with $\sigma_j(Z)$ being
the singular values of $Z \in \CC^{n_1 \times n_2}$ and $n = \min\{n_1,n_2\}$. Efficient optimization methods exist for this
problem \cite{bopa14,bova04}. In practice the measurements are often perturbed by noise, i.e., 
\begin{equation}
\label{eq:measurements}
b=\nA(X)+w,
\end{equation}
where $w\in\CC^m$ is a vector of perturbations. In this case, we replace (\ref{eqnnMinimization}) by the noise constrained nuclear norm minimization problem
\begin{equation}\label{eqNNMinimization}
\underset{Z\in\CC^{n_1\times n_2}}\min\norm{Z}_*\quad\mbox{subject to}\;\norm{\nA(Z)-b}_{\ell_2}\leq\eta,
\end{equation}
where $\eta$ corresponds to a known estimate of the noise level, i.e., $\|w\|_{\ell_2} \leq \eta$ with $\|x\|_{\ell_p} = (\sum_{j} |x_j|^p)^{1/p}$ being the usual $\ell_p$-norm. 
In some cases it is known a priori that the matrix $X$ of interest is both Hermitian and positive semidefinite ($X \succcurlyeq 0$).
Then one may replace \eqref{eqNNMinimization} by the optimization problem
\begin{equation}\label{eq:postracemin}
\underset{Z \succcurlyeq 0 }\min \tr(Z) \quad\mbox{subject to}\;\norm{\nA(Z)-b}_{\ell_2}\leq\eta.
\end{equation}
However, as we will see, the simpler least squares problem
\begin{equation}\label{eq:posleastsquares}
\underset{Z \succcurlyeq 0 }\min \norm{\nA(Z)-b}_{\ell_2}
\end{equation}
works equally well or even better in terms of recovery under certain natural conditions. Apart from simplicity and computational efficiency it has the additional
advantage that no estimate $\eta$ of the noise level is required. 
We note that other efficient recovery methods exist as well \cite{brle10,forawa11,tawe13}, 
but we will not go into details here.

\medskip

A question of central interest concerns the minimal number $m$ of required measurements that guarantees exact (in the noiseless case) or approximate recovery.
While it is very hard to study this question for deterministic measurement maps $\nA$, several results are available for certain models of random maps.
We will study several scenarios which all have in common that the matrices $A_1,\hdots,A_m \in \RR^{n_1 \times n_2}$ in \eqref{eq:MeasurementProcess}
are independent draws of a random matrix $\Phi = (X_{ij})_{ij}$. We first consider the real-valued case, where all entries $X_{ij}$ are independent
and then move to a complex-valued scenario where $\Phi = a a^* \in \CC^{n \times n}$ is a rank one matrix generated by a random vector $a \in \CC^n$. 
For the latter scenario we consider $a$ being a complex Gaussian random vector, or $a$ being randomly drawn from a so-called (approximate) $t$-design. This last
setup has implications for quantum tomography and this part of the article can be seen as a continuation of the investigations in \cite{krt14}.
Next, we describe the present state of the art of of the various setups and present our results.

\subsection{Robust recovery from measurement matrices with independent entries} \label{sub:rank_one}

We call $\nA$ a Gaussian measurement map if the matrices $A_1,\hdots,A_m \in \RR^{n_1 \times n_2}$ in \eqref{eq:MeasurementProcess}
are independent realizations of Gaussian random matrices, i.e., all entries of the $A_j$ are independent standard Gaussian random variables.
More generally, $\nA$ is called subgaussian, if the entries of all the $A_j$ are independent, mean zero, variance one, subgaussian random variables, where we recall
that a random variable $\xi$ is called subgaussian if $\PP(|\xi| \geq t) \leq 2 e^{-c t^2}$ for some constant $c > 0$.
If 
\begin{equation}\label{bound:m:Gaussian}
m \geq C r (n_1 + n_2)
\end{equation}
for some universal constant $C>0$, then with probability at least $1- e^{-cm}$ any rank $r$ matrix $X \in \CC^{n_1 \times n_2}$ is reconstructed exactly from 
subgaussian measurements $b = \nA(X)$ via nuclear norm minimization \eqref{eqnnMinimization} \cite{RechtFazelParrilo,CandesPlan}. Moreover, 
if noisy measurements $b = \nA(X)+ w$ with $\|w \|_2 \leq \eta$ of an arbitrary matrix 
$X \in \CC^{n_1 \times n_2}$ are taken, then the minimizer $X^\sharp$ of \eqref{eqNNMinimization} satisfies, again with probability at least $1-e^{-cm}$,
\begin{equation}\label{error:estimate}
\|X - X^\sharp\|_F \leq \frac{C'}{\sqrt{r}} \inf_{Z:\rank(Z) \leq r} \|X-Z\|_* + \frac{C'' \eta}{\sqrt{m}},
\end{equation}
where $\|A\|_F = \sqrt{\tr(A^*A)}$ denotes the Frobenius norm, $\tr$ being the trace. Note that 
\[
\inf_{Z:\rank(Z) \leq r} \|X-Z\|_* = \sum_{j=r+1}^n \sigma_j(X) = \|X_c\|_*,
\]
where the singular values $\sigma_j(X)$ are arranged
in decreasing order and for $X$ with singular value decomposition $\sum_{j=1}^n \sigma_j(X) u_j v_j^*$ the matrix $X_c = \sum_{j=r+1}^n \sigma_j(X) u_j v_j^*$. 
The error estimate \eqref{error:estimate} means 
that reconstruction is robust with respect to noise on the measurements and stable with respect to passing to only approximately low rank matrices.
These statements are uniform in the sense that they hold for all matrices $X$ simultaneously once the matrix $A$ has been drawn.
They have been established in \cite{CandesPlan,MohanFazel,RechtFazelParrilo} 
via the rank restricted isometry property (rank-RIP), see e.g.~\cite{FoucartRauhut} for the standard RIP and its implications.

While the RIP is a standard tool by now, recovery of low rank matrices via nuclear norm minimization is characterized by the so-called null space property \cite{MohanFazel2,RechtXuHassibiArticle,RechtXuHassibiProceedings,FoucartRauhut,forawa11}, see below
for details. By using this concept, we are able to significantly relax from subgaussian distributions of the entries to distributions with only four finite moments.

\begin{theorem}\label{th:indepNSP}
Let $\nA:\RR^{n_1\times n_2}\to\RR^m$, $\nA(X) = \sum_{j=1}^n \tr(XA_j) e_j$, where the $A_j$ are independent copies of a random matrix $\Phi=(X_{ij})_{i,j}$ with
independent mean zero entries
obeying $\mathbb E X_{ij}^2 = 1$ and 
\[
\mathbb E X_{ij}^4\leq C_4 \quad \mbox{  for all } i,j \mbox{ and some constant } C_4.
\]
Fix $1\leq r\leq\min\{n_1,n_2\}$ and $0<\rho<1$ and set 
\[
m\geq c_1\rho^{-2}r(n_1 + n_2). 
\]
Then with probability at least $1-e^{-c_2m}$, for any $X\in\RR^{n_1\times n_2}$ the solution $X^{\sharp}$ of (\ref{eqNNMinimization}) with $b=\nA(X)+w$, $\norm{w}_{\ell_2}\leq\eta$, approximates $X$ with  error
\begin{equation}\label{eq:ErrorEstimate}
\norm{X-X^{\sharp}}_F\leq \frac{2(1+\rho)^2}{(1-\rho)\sqrt r} 
\|X_c\|_* +\frac{(3+\rho)}{(1-\rho)c_3}\cdot\frac{\eta}{\sqrt m}.
\end{equation}
Here $c_1,c_2,c_3$ are positive constants that only depend on $C_4$.
\end{theorem}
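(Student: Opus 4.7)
The strategy is to derive the error estimate \eqref{eq:ErrorEstimate} from a \emph{Frobenius-robust rank null space property} (F-NSP) of order $r$: for suitable constants $\rho\in(0,1)$ and $\tau>0$, every $M\in\RR^{n_1\times n_2}$ satisfies
\[
\|M_r\|_F \leq \frac{\rho}{\sqrt r}\|M - M_r\|_* + \tau\|\nA(M)\|_{\ell_2},
\]
where $M_r$ denotes a best rank-$r$ approximation of $M$. A standard argument, completely analogous to the sparse-vector case in \cite{FoucartRauhut}, turns this property with $\tau=1/(c_3\sqrt m)$ into the reconstruction bound \eqref{eq:ErrorEstimate} (the constants $2(1+\rho)^2/(1-\rho)$ and $(3+\rho)/(1-\rho)$ appearing there are precisely the ones produced by that deduction). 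The whole theorem therefore reduces to showing that the random map $\nA$ satisfies the F-NSP with probability at least $1-e^{-c_2 m}$.

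To prove F-NSP it suffices to establish a uniform lower bound $\inf_{M\in E_r}\|\nA(M)\|_{\ell_2}\geq c\sqrt m$, where
\[
E_r := \{M\in\RR^{n_1\times n_2}:\|M\|_F=1,\ \|M\|_*\leq\alpha\sqrt r\}
\]
for an appropriate $\alpha=\alpha(\rho)\asymp 1/\rho$; every matrix violating the NSP can, after normalization, be placed into such a cone-like set. We attack this lower bound via Mendelson's small-ball method, which gives, with probability at least $1-e^{-u^2/2}$,
\[
\inf_{M\in E_r}\|\nA(M)\|_{\ell_2} \geq \xi\sqrt m\, Q_{2\xi}(E_r;\Phi)-2W_m(E_r;\Phi)-\xi u,
\]
with $Q_{2\xi}(E_r;\Phi)=\inf_{M\in E_r}\PP(|\tr(\Phi M)|\geq 2\xi)$ and $W_m(E_r;\Phi)=\mean\sup_{M\in E_r}\tr(HM)$, where $H=\frac{1}{\sqrt m}\sum_{j=1}^m\eps_j\Phi_j$ and $\eps_j$ are independent Rademacher signs.

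The small-ball factor is handled by Paley--Zygmund: on $E_r$ one has $\mean\,\tr(\Phi M)^2=\|M\|_F^2=1$, while independence of the entries of $\Phi$ together with $\mean X_{ij}^4\leq C_4$ yields $\mean\,\tr(\Phi M)^4\leq C'(C_4)$ uniformly on $E_r$; a direct Paley--Zygmund step then gives $Q_{2\xi}(E_r;\Phi)\geq\kappa(C_4)>0$ for a suitably small absolute $\xi$. The mean width is bounded through $|\tr(HM)|\leq\|H\|_{\mathrm{op}}\|M\|_*\leq\alpha\sqrt r\,\|H\|_{\mathrm{op}}$, giving $W_m(E_r;\Phi)\leq\alpha\sqrt r\,\mean\|H\|_{\mathrm{op}}$. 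Since the entries of $H$ are independent, mean zero, variance one, and have fourth moments bounded in terms of $C_4$, a Latala-type operator-norm bound for random matrices yields $\mean\|H\|_{\mathrm{op}}\leq C''(C_4)(\sqrt{n_1}+\sqrt{n_2})$, so that $W_m(E_r;\Phi)\leq C'''(C_4)\rho^{-1}\sqrt{r(n_1+n_2)}$.

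Choosing $u=u_0\sqrt m$ makes the failure probability $e^{-u_0^2 m/2}=e^{-c_2 m}$; the small-ball contribution $\xi\sqrt m\,\kappa$ then dominates both the mean width and the $\xi u$ term exactly when $m\geq c_1\rho^{-2}r(n_1+n_2)$, which delivers $\inf_{E_r}\|\nA(M)\|_{\ell_2}\geq c\sqrt m$ and hence F-NSP with $\tau=1/(c_3\sqrt m)$. The main obstacle is the operator-norm bound for $H$ under only four finite moments: in the subgaussian regime this is essentially textbook, but the four-moment version genuinely requires Latala-style bounds on operator norms of heavy-tailed random matrices, and it is precisely here that $c_1,c_2$ acquire their dependence on $C_4$. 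Every other ingredient --- the NSP-implies-recovery implication, the reduction to the sphere-cone set $E_r$, and the Paley--Zygmund lower bound on $Q_{2\xi}$ --- is a routine adaptation of the corresponding sparse-vector arguments.
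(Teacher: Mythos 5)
Your proposal follows essentially the same route as the paper: reduce the theorem to the Frobenius robust rank null space property with $\tau\asymp 1/\sqrt m$ (Theorem \ref{th:FrobeniusNSP} together with Lemma \ref{nsplemma}), then verify that property by lower-bounding $\inf\norm{\nA(M)}_{\ell_2}$ over the NSP-violating set via Mendelson's small ball method, with Paley--Zygmund plus the fourth-moment hypothesis controlling the small-ball term $Q_{2\xi}$, and nuclear-norm/operator-norm duality plus Lata{\l}a's theorem controlling the width term $W_m$. One deviation is harmless and in fact a mild streamlining: you replace the paper's convex hull $D=\conv\{\norm{M}_2=1,\ \rank M\leq r\}$ and its $\norm{\cdot}_D$-characterization (Lemma \ref{lm:SetD}) by the nuclear-norm ball $E_r=\{\norm{M}_F=1,\ \norm{M}_*\leq(1+\rho^{-1})\sqrt r\}$; since every $M\in T_{\rho,r}$ obeys $\norm{M}_*\leq\sqrt r\norm{M_r}_2+\rho^{-1}\sqrt r\norm{M_r}_2\leq(1+\rho^{-1})\sqrt r$, the containment holds and your bound $W_m(E_r)\leq(1+\rho^{-1})\sqrt r\,\mean\norm{H}_{\infty}$ matches the paper's up to constants, without invoking the convex-hull machinery (which the paper needs anyway for the Gaussian-width computation in Theorem \ref{th:GaussianMeas}).

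There is, however, one genuine (though easily repaired) slip: the entries of $H=\frac{1}{\sqrt m}\sum_{j=1}^m\eps_j\Phi_j$ are \emph{not} independent, since all of them share the same Rademacher signs $\eps_1,\dots,\eps_m$, so Lata{\l}a's theorem does not apply to $H$ directly as you assert. The paper's Lemma \ref{opnormabsch} addresses exactly this point by first desymmetrizing, $\mean\norm{H}_{\infty}\leq\frac{2}{\sqrt m}\mean\norm{\sum_{j}\Phi_j}_{\infty}$ (Ledoux--Talagrand), and then applying Lata{\l}a's bound to $S=\sum_j\Phi_j$, whose entries genuinely are independent with $\mean S_{ik}^2=m$ and $\mean S_{ik}^4\leq\max\{3,C_4\}m^2$. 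Alternatively, you may condition on the signs: given $\eps$, the entries of $H$ \emph{are} independent across positions with conditional second moment $1$ and conditional fourth moment bounded in terms of $C_4$, so Lata{\l}a applies conditionally and averaging recovers $\mean\norm{H}_{\infty}\leq C(C_4)(\sqrt{n_1}+\sqrt{n_2})$. With that one-line repair your argument is complete and yields the theorem with the same dependence of $c_1,c_2,c_3$ on $C_4$ as in the paper.
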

In the special case, when $\Phi$ has independent standard Gaussian entries, we apply Gordon's escape through a mesh theorem \cite{Gordon} in order to obtain an explicit constant in the estimate for the number of measurements, see Theorem \ref{th:GaussianMeas}. Roughly speaking,  with high probability, any $n_1\times n_2$ matrix of rank $r$ is stably recovered from $m>10r(n_1 + n_2)$ Gaussian measurements. 
We remark that the explicit bound $m > 3 r (n_1 + n_2)$ has been derived in \cite{chparewi10}, 
(see also \cite{mctr14} and \cite[Section 4.4]{amlomctr13} for a phase transition result in this context),
but this  bound considers nonuiform recovery, i.e.
recovery of a fixed low rank matrix with a random draw of a Gaussian measurement matrix with high probability.
Moreover, no stability under passing to approximately low rank matrices has been considered there. Our recovery result is therefore stronger than the one in \cite{chparewi10}, 
but requires more measurements.

\subsection{Robust recovery of Hermitian matrices from rank-one projective measurements} \label{sub:rank_one_measurements}

Let us now focus on the particular case of recovering complex Hermitian $n \times n$ matrices 
from noisy measurements of the form \eqref{eq:measurements}, where the measurement matrices are proportional to rank-one projectors, i.e.,
\begin{align}
A_j = a_j a_j^* \in \mathcal{H}_n	\label{eq:rank_one_measurements}
\end{align}
where $a_j \in \mathbb{C}^n$. Here, $\mathcal{H}_n$ denotes the space of complex Hermitian $n \times n$ matrices, which has real dimension $n^2$. Measurements of that type occur naturally in convex relaxations of the phase retrieval problem \cite{candes_phase_2013,castvo13,gross_partial_2014,grkrku14}. 
In fact, suppose phaseless measurements of the form $b_j = |\langle x, a_j\rangle|^2$ of a vector $x \in \mathbb{C}^n$ are given. Then we can rewrite
$b_j = \tr(x x^*a_j a_j^* ) = \tr(X A_j)$ as linear measurements of the rank one matrix $X = x x^*$. We will expand on this aspect below in Section~\ref{sub:phase_retrieval}.
Rank one measurements of low rank matrices feature prominently in quantum state tomography as well, see also below.

The prior information that the desired matrix is Hermitian limits the  
search space in the convex optimization problem \eqref{eqNNMinimization}
and it simplifies to
\begin{equation}\label{eqNNMinimizationHerm}
\underset{Z\in\mathcal H_n}\min\norm{Z}_*\quad\mbox{subject to}\;\norm{\nA(Z)-b}_{\ell_2}\leq\eta.
\end{equation}

Arguably, the most 
generic measurement matrices of the form \eqref{eq:rank_one_measurements} result from choosing each $a_j$ to be an independent complex standard Gaussian vector. 
For the particular case of phase retrieval --- i.e., where the matrix of interest $X = x x^*$ is itself proportional to a rank-one projector --- uniform recovery guarantees by means of \eqref{eqNNMinimizationHerm} have been established for $m = Cn$ independent measurements in \cite{candes_solving_2012}. 
Recently, this result has been generalized to recovery of any Hermitian  rank $r$-matrix by means of $m = Crn$ such measurements in \cite{krt14}. 
Our refined analysis of the null space property enables us to further strengthen this result by additionally guaranteeing stability under passing to  approximately low rank matrices:

\begin{theorem} \label{mainTh1}
Consider the measurement process described in (\ref{eq:MeasurementProcess}) with $m$ measurement matrices of the form \eqref{eq:rank_one_measurements},where each $a_i$ is an independent complex standard Gaussian vector.  Fix $r\leq n$, $0<\rho<1$ and suppose that
\begin{equation*}
m\geq C_1\rho^{-2}nr.
\end{equation*}
Then with probability at least $1 - \mathrm{e}^{-C_2 m}$ it holds that for any  $X \in \mathcal H_n$, 
any solution $X^\sharp$  to the convex optimization problem (\ref{eqNNMinimizationHerm}) 
with noisy measurements $b = \mathcal A (X)+\epsilon$, where $\| \epsilon \|_{\ell_2} \leq \eta$,
obeys
\begin{equation}\label{err:bound1}
\| X - X^\sharp \|_F \leq  \frac{2(1+\rho)^2}{(1-\rho)\sqrt r}\norm{X_c}_*+\frac{(3+\rho)C_3}{(1-\rho)}\cdot\frac{\eta}{\sqrt m}.
\end{equation}
 Here, $C_1,C_2$ and $C_3$ denote positive universal constants. (In particular, for $\eta=0$ and $X$ of rank at most $r$ one has exact reconstruction.)
\end{theorem}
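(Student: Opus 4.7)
The plan is to prove Theorem~\ref{mainTh1} via an appropriate robust/stable rank null space property (NSP) for the Hermitian measurement map $\mathcal{A}$, following exactly the same skeleton as for Theorem~\ref{th:indepNSP}. Specifically, I would aim to show that with high probability, for every $H \in \mathcal{H}_n$,
\begin{equation*}
\|H_r\|_* \;\leq\; \rho\, \|H_c\|_* \;+\; \tau\sqrt{r}\, \|\mathcal{A}(H)\|_{\ell_2},
\end{equation*}
where $H_r$ is the best rank-$r$ approximation and $H_c = H - H_r$. Once this stable NSP is in hand, a by-now standard argument (as used for Theorem~\ref{th:indepNSP} and in \cite{FoucartRauhut}) turns it mechanically into the error bound \eqref{err:bound1}: writing $H = X - X^\sharp$, using optimality of $X^\sharp$ plus the triangle/nuclear-norm inequalities gives $\|H_c\|_* \leq \|H_r\|_* + 2\|X_c\|_*$, and combining with $\|\mathcal{A}(H)\|_{\ell_2} \leq 2\eta$ yields the stated estimate, with the constants $2(1+\rho)^2/(1-\rho)$ and $(3+\rho)/(1-\rho)$ arising in exactly the usual way.

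The core of the proof is therefore to obtain the stable NSP, and here I would apply Mendelson's small ball method to the set
\begin{equation*}
T_{\rho,r} \;=\; \bigl\{\, H \in \mathcal{H}_n\,:\,\|H\|_F = 1,\; \|H_r\|_* > \rho\,\|H_c\|_*\,\bigr\}.
\end{equation*}
The standard small ball inequality gives, with probability at least $1 - \mathrm{e}^{-c t^2}$,
\begin{equation*}
\inf_{H \in T_{\rho,r}} \|\mathcal{A}(H)\|_{\ell_2} \;\geq\; \xi\sqrt{m}\, Q_{2\xi}(T_{\rho,r};\Phi) - 2 W_m(T_{\rho,r};\Phi) - \xi t,
\end{equation*}
where $\Phi = aa^*$, $Q_{2\xi}$ is the marginal tail function, and $W_m$ is the Rademacher mean empirical width. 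The small-ball factor $Q_{2\xi}$ is easy: for any fixed $H$ with $\|H\|_F = 1$, $\langle H, aa^*\rangle = a^*Ha$ is a Hermitian Gaussian chaos with $\mathbb{E}(a^*Ha)^2 \geq \|H\|_F^2 = 1$ and fourth moment bounded by a constant, so Paley--Zygmund yields a universal constant lower bound $Q_{2\xi} \geq c_0 > 0$.

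The hard step, and the one where the $\rho^{-2}rn$ sampling rate has to be earned, is bounding the mean empirical width. Elements of $T_{\rho,r}$ satisfy
\begin{equation*}
\|H\|_* \;\leq\; \|H_r\|_* + \|H_c\|_* \;\leq\; (1+\rho^{-1})\|H_r\|_* \;\leq\; (1+\rho^{-1})\sqrt{r}\,\|H\|_F \;=\; (1+\rho^{-1})\sqrt{r},
\end{equation*}
so by nuclear/operator norm duality
\begin{equation*}
W_m(T_{\rho,r};\Phi) \;=\; \mathbb{E}\, \sup_{H \in T_{\rho,r}} \left\langle H, \sum_{j=1}^m \varepsilon_j a_j a_j^*\right\rangle \;\leq\; (1+\rho^{-1})\sqrt{r}\; \mathbb{E}\left\|\sum_{j=1}^m \varepsilon_j a_j a_j^*\right\|_{\mathrm{op}}.
\end{equation*}
The remaining operator norm expectation is of a symmetrized sum of independent rank-one complex Wishart-type matrices and is $O(\sqrt{mn})$ by a standard non-commutative Khintchine / matrix Bernstein computation (alternatively, a net argument on the unit sphere using sub-exponential concentration of $|a^*_j u|^2$). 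This gives $W_m(T_{\rho,r};\Phi) \lesssim \sqrt{r}\cdot\sqrt{mn}$. Plugging back,
\begin{equation*}
\inf_{H \in T_{\rho,r}} \|\mathcal{A}(H)\|_{\ell_2} \;\geq\; c_0 \xi\sqrt{m} - C \sqrt{rmn} - \xi t,
\end{equation*}
which is bounded below by $c\sqrt{m}/\tau$ once $m \geq C_1 \rho^{-2} rn$ (choosing $t = c \sqrt{m}$ to push the failure probability down to $\mathrm{e}^{-C_2 m}$). This is exactly the quantitative NSP required, and the main obstacle of the proof is this width estimate: one has to handle the heavy-tailed chaos $aa^*$ carefully, and the factor $\rho^{-1}$ coming from the $T_{\rho,r}$ constraint is what produces the $\rho^{-2}$ in the final sampling rate.
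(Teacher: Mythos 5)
Your overall architecture matches the paper's: Mendelson's small ball method on a spherical cross-section of a null-space-property cone, Paley--Zygmund for the small-ball term, and nuclear-norm/operator-norm duality plus $\mathbb{E} \big\| \sum_{j=1}^m \varepsilon_j a_j a_j^* \big\|_\infty = O(\sqrt{mn})$ for the width. However, the null space property you set out to prove is too weak to give \eqref{err:bound1}. You work with the cone $\norm{H_r}_* > \rho \norm{H_c}_*$ and aim at the conclusion $\norm{H_r}_* \leq \rho\norm{H_c}_* + \tau\sqrt{r}\,\norm{\nA(H)}_{\ell_2}$. This nuclear-norm stable NSP yields, by the standard cone argument you sketch, only a \emph{nuclear-norm} error bound $\norm{X-X^\sharp}_* \leq C\norm{X_c}_* + C'\sqrt{r}\,\tau\eta$; passing to the Frobenius norm via $\norm{H}_F \leq \norm{H}_*$ destroys the $1/\sqrt{r}$ in front of $\norm{X_c}_*$ in \eqref{err:bound1}. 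This loss is not an artifact of a crude estimate: a matrix $H$ with $\sigma_1(H) \approx \rho(n-r)\epsilon$ and $\sigma_2 = \cdots = \sigma_n = \epsilon$ satisfies $\norm{H_r}_* \leq \rho\norm{H_c}_*$ (so your NSP gives no measurement-free information beyond that), yet $\norm{H_r}_F \geq \sigma_1 \approx \rho\norm{H_c}_*$ with no $\sqrt{r}$ gain; hence no inequality of the form $\norm{H_r}_F \leq \frac{\rho}{\sqrt{r}}\norm{H_c}_* + \tau\norm{\nA(H)}_{\ell_2}$ can be deduced from your NSP. What the Frobenius bound actually requires is the paper's \emph{Frobenius robust rank null space property}, $\norm{H_r}_2 \leq \frac{\rho}{\sqrt{r}}\norm{H_c}_1 + \tau\norm{\nA(H)}_{\ell_2}$, fed into Theorem \ref{th:FrobeniusNSP} (resp.\ Theorem \ref{th:DifferenceBetweenSignalAndFeasibleElements}); this parallels the distinction in compressed sensing between the $\ell_1$-robust and the $\ell_2$-robust null space properties.

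The good news is that the fix costs you nothing in the probabilistic work: run the identical small-ball argument over the larger set $T^{\mathcal H}_{\rho,r} = \{H \in \mathcal H_n : \norm{H}_2 = 1,\ \norm{H_r}_2 > \frac{\rho}{\sqrt{r}}\norm{H_c}_1\}$, which contains your cone since $\norm{H_r}_* \leq \sqrt{r}\norm{H_r}_2$. Both of your ingredients survive: the Paley--Zygmund small-ball bound uses only $\norm{H}_F = 1$ (the paper quotes $Q_{1/\sqrt{2}} \geq 1/96$ from \cite{krt14}), and elements of the larger set still obey $\norm{H}_1 \leq (1+\rho^{-1})\sqrt{r}$, because $\norm{H_r}_1 \leq \sqrt{r}\norm{H_r}_2 \leq \sqrt{r}$ and $\norm{H_c}_1 < \rho^{-1}\sqrt{r}\norm{H_r}_2 \leq \rho^{-1}\sqrt{r}$; thus your duality width bound goes through with the same $\rho^{-1}\sqrt{r}$ factor (the paper instead embeds $T_{\rho,r}$ into $\sqrt{1+(1+\rho^{-1})^2}\,D$ via Lemma \ref{lm:SetD}; your direct estimate is marginally sharper and equally valid). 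One caveat on the width: plain matrix Bernstein gives only $\mathbb{E}\big\|\sum_j \varepsilon_j a_j a_j^*\big\|_\infty \leq C\big(\sqrt{mn\log n} + n \log n\big)$, so the log-free $O(\sqrt{mn})$ you need must come from your alternative route --- the sub-exponential net argument of \cite[Section~5.4.1]{ve12} in the regime $m \geq c n$ --- which is exactly what the paper invokes. With the Frobenius robust rank NSP established this way, Theorem \ref{th:FrobeniusNSP} delivers \eqref{err:bound1} verbatim.
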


In addition to the Gaussian measurement setting, we also consider measurement matrices that arise from taking the outer product of elements chosen independently from an approximate complex projective 4-design. 
Complex projective $t$-designs are finite sets of unit vectors in $\mathbb{C}^n$ that exhibit a very particular structure. 
Roughly speaking, sampling independently from a complex projective $t$-design, reproduces the first $t$ moments of sampling uniformly from the complex unit sphere. 
Likewise, approximate complex projective $t$-designs obey such a structural requirement approximately --- for a precise introduction, we refer to Definition~\ref{def:approx_design} below. 
As a consequence, they serve as a general purpose tool for partially de-randomizing results that initially required Gaussian random vectors \cite{kueng_spherical_2015,gross_partial_2014}. 
This is also the case here and employing complex projective  $4$-designs allows for partially de-randomizing Theorem \ref{mainTh1} at the cost of a slightly larger sampling rate.
Here, we content ourselves with presenting and shortened version of this result and refer the reader to Theorem \ref{Th2} where precise requirements on the approximate design are stated.

\begin{theorem} \label{mainTh2}
Let $r,\rho$ be as in Theorem \ref{mainTh1} and suppose that each measurement matrix $A_j$ is of the form \eqref{eq:rank_one_measurements}, where $a_j$, $j=1,\hdots,m$, are chosen independently from a (sufficiently accurate approximate) complex projective 4-design. 
If
\begin{equation*}
m \geq C_4 \rho^{-2} nr\log n,
\end{equation*}
then the assertions of Theorem \ref{mainTh1} remains valid, possibly with different universal constants. 
\end{theorem}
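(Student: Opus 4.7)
The plan is to follow the same architecture as the proof of Theorem~\ref{mainTh1} and reduce the recovery guarantee \eqref{err:bound1} to a Frobenius-robust null space property (NSP) for the measurement map $\mathcal{A}$. Once the $\ell_2$-robust NSP of order $r$ with constants $\rho<1$ and $\tau = C_3/\sqrt m$ is in hand, the error bound follows from the by-now-standard NSP-implies-error-bound implication applied in the Hermitian setting (exactly as in Theorem~\ref{mainTh1}). The core task is therefore to prove that, with probability at least $1-\mathrm{e}^{-C_2 m}$, the map $\mathcal{A}$ built from the design vectors $a_1,\dots,a_m$ satisfies
\begin{equation*}
\frac{1}{\sqrt m}\|\mathcal{A}(Z)\|_{\ell_2}\geq \tilde c(1-\rho)\|Z\|_F-\frac{\rho}{\sqrt r}\|Z\|_*\qquad\text{for all }Z\in\mathcal H_n,
\end{equation*}
which is equivalent, up to universal constants, to the desired NSP.

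To obtain this uniform lower bound I would apply Mendelson's small ball method, as in \cite{krt14} and in Theorem~\ref{mainTh1}. The method splits the task into (i) a small-ball probability bound and (ii) an empirical width estimate. For (i), fix a Hermitian matrix $Z$ with $\|Z\|_F=1$ and compute the first four moments of $\langle a,Za\rangle$. Because $a$ is drawn from a (sufficiently accurate approximate) complex projective $4$-design, these moments reproduce their Gaussian counterparts up to a controlled error. In particular one obtains $\mathbb{E}|\langle a,Z a\rangle|^2\asymp \|Z\|_F^2$ and $\mathbb{E}|\langle a,Z a\rangle|^4\lesssim \|Z\|_F^4$, so Paley--Zygmund yields a universal, dimension-independent lower bound on the small ball function, exactly as in the Gaussian argument underlying Theorem~\ref{mainTh1}.

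The main obstacle lies in (ii), the empirical width
\begin{equation*}
W_m(T_\rho)=\mathbb{E}\sup_{Z\in T_\rho}\frac{1}{\sqrt m}\Bigl|\sum_{j=1}^m\varepsilon_j\langle a_j,Z a_j\rangle\Bigr|,
\end{equation*}
where $T_\rho$ is the intersection of the NSP cone with the Frobenius unit ball and $\varepsilon_j$ are independent Rademacher signs. For genuine complex Gaussian vectors the random variables $\langle a,Za\rangle$ have sub-exponential tails, so a dual Sudakov/Gaussian mean width argument gives $W_m(T_\rho)\lesssim\sqrt{nr}$ and no log factor. For $4$-design vectors, however, only four moments are under control, and the $\ell_2$-process indexed by $T_\rho$ is no longer subgaussian. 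I would overcome this by a truncation plus covering argument: the tame part, where $|\langle a_j,Za_j\rangle|$ is cut off at an appropriate level, can be handled by symmetrization and Dudley's inequality using the standard $nr$-scale metric entropy of the rank-$r$ Frobenius unit ball, while the tail part is controlled by the fourth moment via Markov together with a union bound. The combination yields $W_m(T_\rho)\lesssim\sqrt{nr\log n}$, which accounts precisely for the extra logarithmic factor in the hypothesis $m\geq C_4\rho^{-2}nr\log n$ relative to Theorem~\ref{mainTh1}.

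Plugging (i) and (ii) into Mendelson's small ball theorem gives the desired uniform lower bound on $\|\mathcal{A}(Z)\|_{\ell_2}/\sqrt m$ on the NSP cone with probability at least $1-\mathrm{e}^{-C_2 m}$. Translating back to the robust NSP and invoking the standard implication produces \eqref{err:bound1}, completing the proof. Incorporating the approximation parameter of the approximate $4$-design into the constants is a routine perturbation: only finitely many moment identities have to be reproduced up to a small additive error, which affects the universal constants $C_4$ and $C_2$ but not the functional form of the bound.
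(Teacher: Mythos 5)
Your overall architecture is exactly the paper's: reduce the error bound to the Frobenius robust rank null space property (via Lemma \ref{nsplemma} and Theorem \ref{th:DifferenceBetweenSignalAndFeasibleElements}), verify the NSP by Mendelson's small ball method (Theorem \ref{KMT}) on the cone $T_{\rho,r}^{\mathcal H}$, and obtain the small-ball constant from the fourth moments reproduced by the $4$-design via Paley--Zygmund --- this is precisely \cite[Proposition 12]{krt14}, which the paper invokes to get $\inf_{\norm{Z}_2=1}\mathbb{P}\brac{|\tr(aa^*Z)|\geq\xi}\geq (1-\xi^2)^2/24$. The treatment of approximate (rather than exact) designs as a perturbation of finitely many moment identities also matches the paper, which cites the generalized versions of Propositions 12 and 13 from Section 4.5.1 of \cite{krt14}. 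The divergence, and the genuine gap, is in your bound for the empirical width $W_m$.

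You propose to split the quadratic process into a truncated part handled by symmetrization plus Dudley's inequality and a tail part ``controlled by the fourth moment via Markov together with a union bound.'' That last step cannot work as stated: with only four moments under control, the pointwise tails of $\langle a_j, Z a_j\rangle$ and of the recentered empirical sums decay only polynomially, like $t^{-4}$, whereas any net of $T_{\rho,r}$ fine enough for a covering argument has cardinality of order $e^{c\rho^{-2}rn}$; a union bound would require per-point failure probabilities of order $e^{-crn}$, which polynomial tails cannot deliver. In addition, truncating $|\langle a_j,Za_j\rangle|$ at a level destroys the linearity of the process in $Z$, and with it the subgaussian increment structure that Dudley's inequality needs, so even the ``tame'' part of your split is not routine. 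The paper sidesteps all of this with a much simpler mechanism: since every element of the scaled set $\sqrt{1+(1+\rho^{-1})^2}\,D$ from Lemma \ref{lm:SetD} has nuclear norm at most $\sqrt{(1+(1+\rho^{-1})^2)r}$, H\"older's inequality gives $W_m(T_{\rho,r},aa^*)\leq \sqrt{(1+(1+\rho^{-1})^2)r}\;\mathbb{E}\norm{H}_{\infty}$ with $H=\frac{1}{\sqrt m}\sum_{j}\varepsilon_j a_ja_j^*$, and $\mathbb{E}\norm{H}_{\infty}\lesssim \sqrt{n\log(2n)}$ for $m\gtrsim n\log n$ is \cite[Proposition 13]{krt14} --- a noncommutative Khintchine/matrix Bernstein estimate that needs no truncation at all, because design vectors have the deterministic norm $\norm{a_j}_{\ell_2}^2=\sqrt{n(n+1)}$ and $\mathbb{E}(aa^*)^2$ is controlled by the design property. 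This operator-norm estimate is the sole source of the $\log n$ factor; your diagnosis of where the logarithm comes from is correct, but the mechanism you propose to produce it is the one step of your argument that would fail, while everything else coincides with the paper's proof of Theorem \ref{Th2}.
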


Note that Theorems \ref{th:indepNSP}, \ref{mainTh1}, \ref{mainTh2} resp.\ Theorem \ref{th:GaussianMeas} below  and their proofs are presented in  condensed versions in the conference papers \cite{KRTSampta1} resp. \cite{KRTSampta2}.
\subsection{Recovery of positive semidefinite matrices 
reduces to a feasibility problem} 

Imposing additional structure on the matrices to be recovered can further strengthen low rank recovery guarantees.
Positive semidefiniteness is one such structural prerequisite that, for instance, occurs naturally in the phase retrieval problem, quantum mechanics and kernel-based learning methods \cite{scholkopf_learning_2002}.
Motivated by the former, Demanet and Hand \cite{demanet_stable_2014} pointed out that  minimizing the nuclear norm --- in the sense of algorithm \eqref{eqNNMinimization} --- 
can be superfluous for recovering positive semidefinite matrices of rank one. Instead, they propose to reduce the recovery algorithm to a mere feasibility problem
and proved that such a reduction works w.h.p. for rank one projective measurements onto Gaussian vectors (the measurement scenario considered in Theorem \ref{mainTh1}). Subsequently, this recovery guarantee was strengthened by Cand\`es and Li \cite{candes_solving_2012}. 
Here, we go one step further and generalize these results to cover
uniform and stable recovery of positive semidefinite matrices of arbitrary rank. 
Relying on ideas presented in \cite{kalev_informationally_2015}, we establish the following statement.
(We refer to Section \ref{Notation} for the definition of the Schatten $p$-norm $\norm{\cdot}_p$ used in \eqref{eq:mainTh3}.)

\begin{theorem} \label{mainTh3}
Fix $r \leq n$  and 
consider the measurement processes introduced in Theorem \ref{mainTh1} (Gaussian vectors), or Theorem \ref{mainTh2} (complex projective 4-designs), respectively. 
Assume that $m\geq C_1 n r $ (in the Gaussian case) resp.\  $m\geq C_2 s nr \log n $ (in the design case), where $s\geq 1$ is arbitrary.
Then, for $1 \leq p \leq 2$ and any two positive semidefinite matrices $X,Z \in \mathcal{H}_n$,
\begin{equation}
\left\| Z - X \right\|_p \leq \frac{C_3}{r^{1-1/p}} \norm{X_c}_1 + \frac{C_4 r^{1/p-1/2}}{\sqrt{m}} \left\| \mathcal{A}(Z) - \mathcal{A}(X) \right\|_{\ell_2}
\label{eq:mainTh3}
\end{equation}
holds universally with probability
 exceeding $ 1 - \mathrm{e}^{-C_5 m}$ for the Gaussian case and $1- \mathrm{e}^{- s r}$ in the design case.
Here, $C_1,\hdots, C_5$ denote suitable positive universal constants.
\end{theorem}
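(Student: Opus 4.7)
The plan is to upgrade the nuclear-norm recovery guarantees of Theorems \ref{mainTh1} and \ref{mainTh2} to the purely geometric statement \eqref{eq:mainTh3}, which does not require $Z$ to solve any optimization problem but only that both $Z$ and $X$ be positive semidefinite. Fix PSD matrices $X, Z \succeq 0$, set $H = Z - X$, and write $X = X_r + X_c$ where $X_r$ is the best rank-$r$ PSD approximation of $X$ (by eigenvalue truncation), so that $X_r, X_c \succeq 0$ and $\|X_c\|_1 = \tr(X_c)$. Let $U$ denote the span of the top $r$ eigenvectors of $X$, $P = P_U$, and $P^\perp = I - P$. I would split $H = H_T + H_{T^\perp}$ at the tangent space to the rank-$\leq r$ variety at $X_r$, with $H_T := PH + HP - PHP$ of rank at most $2r$ and $H_{T^\perp} := P^\perp H P^\perp$.

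The key PSD-geometry step exploits $Z \succeq 0$, which gives $P^\perp Z P^\perp = H_{T^\perp} + P^\perp X_c P^\perp \succeq 0$, i.e.\ $H_{T^\perp} \succeq -P^\perp X_c P^\perp$ in Loewner order. Hence $\tr((H_{T^\perp})_-) \leq \|X_c\|_1$. Combining this with the elementary identity $\|M\|_1 = \tr(M) + 2\tr(M_-)$ for Hermitian $M$, the Frobenius orthogonality $\tr(H_{T^\perp}) = \tr(H) - \tr(H_T)$, and $|\tr(H_T)| \leq \sqrt{2r}\,\|H_T\|_F$, one obtains a PSD null-space estimate of the form $\|H_{T^\perp}\|_1 \lesssim \|X_c\|_1 + \sqrt{r}\,\|H\|_F + |\tr(H)|$. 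The residual $|\tr(H)|$ is absorbed into the measurement error via the observation that $\tfrac{1}{m}\sum_{j=1}^m A_j$ concentrates around a scalar multiple of the identity for both the Gaussian and the $4$-design ensemble, so $|\tr(H)| \lesssim m^{-1/2}\|\mathcal{A}(H)\|_{\ell_2}$ on the same high-probability event.

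The proofs of Theorems \ref{mainTh1} and \ref{mainTh2}, via Mendelson's small-ball method applied to rank-$\leq 2r$ Hermitian matrices, deliver (on their high-probability events) an inequality of the form $\sqrt{m}\,\|H\|_F \lesssim \|\mathcal{A}(H)\|_{\ell_2} + \rho\,\|H_{T^\perp}\|_1/\sqrt{r}$ uniformly in Hermitian $H$. Substituting the PSD bound on $\|H_{T^\perp}\|_1$ above and choosing $\rho$ small enough to absorb the $\sqrt{r}\,\|H\|_F$ term yields \eqref{eq:mainTh3} for $p = 2$. Plugging this $p=2$ estimate back into the PSD bound then produces the $p = 1$ case (with the prefactor $\sqrt{r}$ on the measurement term, matching the exponent $1/p - 1/2$ at $p=1$). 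The intermediate regime $1 < p < 2$ follows by log-convexity of Schatten norms, $\|H\|_p \leq \|H\|_1^{2/p - 1}\|H\|_2^{2-2/p}$; a short calculation (observing that the two endpoint bounds happen to differ exactly by a factor of $\sqrt{r}$) recovers precisely the exponents $1/p - 1$ for $\|X_c\|_1$ and $1/p - 1/2$ for $m^{-1/2}\|\mathcal{A}(H)\|_{\ell_2}$ in \eqref{eq:mainTh3}.

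The main technical obstacle is promoting the one-sided Loewner bound $\tr((H_{T^\perp})_-) \leq \|X_c\|_1$ to a two-sided control of the full nuclear norm $\|H_{T^\perp}\|_1$. Positivity of $Z$ (or of $X$) alone does not bound the positive part $\tr((H_{T^\perp})_+)$ directly, so one must invoke the trace identity together with the near-isotropy of $\sum_j A_j$ on the specific ensembles considered to control the residual $|\tr(H)|$ in terms of $\|\mathcal{A}(H)\|_{\ell_2}$. This step, inspired by \cite{kalev_informationally_2015}, is what lets the simple PSD feasibility program \eqref{eq:posleastsquares} perform as well as nuclear norm minimization \eqref{eqNNMinimizationHerm}; once it is in place, the remaining arguments are essentially routine upgrades of the proofs of Theorems \ref{mainTh1} and \ref{mainTh2}.
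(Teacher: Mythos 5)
Your route is genuinely different from the paper's. The paper never splits $H=Z-X$ along a tangent space. Instead it picks $t\in\RR^m$ with $W=\sum_{j=1}^m t_jA_j\succ 0$ (concretely $t=\frac{1}{m}(1,\dots,1)^T$), conjugates by $W^{1/2}$ via the endomorphism \eqref{eq:mapping}, shows that the transformed map $\nA_{W^{1/2}}$ inherits the Frobenius robust rank null space property with constants degraded by $\kappa(W)$ (Lemma \ref{NSPvgl}), and then exploits the \emph{exact} identity, valid for positive semidefinite $\tilde X,\tilde Z$,
\[
\|\tilde Z\|_1-\|\tilde X\|_1=\tr\bigl(W(Z-X)\bigr)=\langle t,\nA(Z-X)\rangle\leq \|t\|_{\ell_2}\,\|\nA(Z-X)\|_{\ell_2}
\]
(Lemma \ref{Normdiff}), which feeds straight into Theorem \ref{th:DifferenceBetweenSignalAndFeasibleElements} and yields all $1\leq p\leq 2$ simultaneously (Theorem \ref{th:NSPforPositiveMatrices}); Propositions \ref{prop:WGauss} and \ref{prop:W4design} are then only needed to certify that $W$ is \emph{well conditioned}, $\kappa(W)=O(1)$ --- not that $W$ is close to a multiple of $\id$. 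Your argument --- the Loewner bound $\tr((H_{T^\perp})_-)\leq\|X_c\|_1$ from $P^\perp ZP^\perp\succcurlyeq 0$, the identity $\|M\|_1=\tr(M)+2\tr(M_-)$, and interpolation between the endpoints $p=1,2$ (whose exponents I checked; they do come out right) --- buys a more geometric, optimization-free picture, and your small-ball inequality with $\|H_{T^\perp}\|_1$ in place of $\|H_c\|_1$ is obtainable from the paper's null space property after the standard adjustment $\|H_c\|_1\leq\|H_{T^\perp}\|_1+\sqrt{r}\,\|H\|_F$ plus a small-$\rho$ absorption.

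The one step that fails as written is the clean bound $|\tr(H)|\lesssim m^{-1/2}\|\nA(H)\|_{\ell_2}$. Since $\id$ is in general \emph{not} in the span of the $A_j$, near-isotropy only gives, with $W=\frac{1}{m}\sum_j A_j$ and a suitable scalar $c$,
\[
|\tr(H)|\leq \tfrac1c\,|\langle t,\nA(H)\rangle|+\tfrac1c\,\|c\,\id-W\|_\infty\,\|H\|_1\leq \tfrac{C}{\sqrt m}\,\|\nA(H)\|_{\ell_2}+\eps\,\|H\|_1,
\]
and the residual $\eps$ does not vanish as $m$ grows in the design case: assumption \eqref{eq:approx_tight_frame} permits an intrinsic bias $\bigl\|\sum_i p_iw_iw_i^*-\sqrt{(n+1)/n}\,\id\bigr\|_\infty$ as large as $1/2$, so no amount of sampling from an approximate $4$-design shrinks $\eps$ below a fixed constant (Proposition \ref{prop:W4design} certifies only $\|W-\sqrt{(n+1)/n}\,\id\|_\infty\leq 3/4$). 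Your proof survives, but only because $\eps<1$: you must feed $\|H\|_1\leq\sqrt{2r}\,\|H\|_F+\|H_{T^\perp}\|_1$ back in, absorb $\eps\|H_{T^\perp}\|_1$ into the left-hand side of your PSD estimate and $\eps\sqrt{2r}\,\|H\|_F$ into the small-ball absorption, and track that the constants remain admissible. This self-absorption is essential and unstated; the assertion that $|\tr(H)|\lesssim m^{-1/2}\|\nA(H)\|_{\ell_2}$ holds ``on the same high-probability event'' is simply false for approximate designs. The paper's conjugation trick is exactly how it sidesteps this: $\tr(WH)$ --- not $\tr(H)$ --- is an exact linear functional of the measurements, and replacing $H$ by $\widetilde H=W^{1/2}HW^{1/2}$ makes that the relevant trace, so one only ever needs $\kappa(W)\rho<1$, never $W\approx c\,\id$.
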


This statement renders nuclear norm minimization in the sense of \eqref{eqNNMinimization} 
redundant and allows for a regularization-free estimation.
Moreover, knowledge of a noise bound $\| w \|_{\ell_2} \leq \eta$ for the measurement process \eqref{eq:measurements} is no longer required, since we can estimate any $X \succcurlyeq 0$ by solving a least squares problem of the form \eqref{eq:posleastsquares}, i.e.,
\begin{equation}
\min_{Z \in \mathcal{H}_n} \left\| \mathcal{A} (Z) - b \right\|_{\ell_2} \quad \textrm{subject to} \quad Z \succcurlyeq 0. \label{eq:least_squares}
\end{equation}
Theorem \ref{mainTh3} then in particular assures that the minimizer $Z^\sharp$ of this optimization program obeys
\begin{equation*}
\| Z^\sharp - X \|_F \leq 
\frac{C_3}{\sqrt{r}} \| X_c \|_1 + \frac{C_4}{\sqrt{m}} \left\| \mathcal{A}(Z^\sharp) - \mathcal{A}(X)  \right\|_{\ell_2}
\leq \frac{C_3}{\sqrt{r}} \| X_c \|_1 + \frac{2C_4}{\sqrt{m}}  \| w \|_{\ell_2},
\end{equation*}
where $w \in \mathbb{R}^m$ represents additive noise in the measurement process.  
It is worthwhile to mention that if a matrix $X$ of interest has rank at most $r$ and no noise is present in the sampling process \eqref{eq:measurements}, Theorem \ref{mainTh3} assures
\begin{equation}
\left\{ Z: \; Z \succcurlyeq 0, \; \mathcal{A}(Z) = \mathcal{A}(X) \right\} = \left\{ X \right\} \label{eq:feasibility_problem}
\end{equation}
with high probability. Hence, recovering $X$ from noiseless measurements indeed reduces to a feasibility problem.

We emphasize that Theorem~\ref{mainTh3} is only established for rank one projective measurements. 
For the other measurement ensembles considered here --- matrices with independent entries --- one cannot expect such a statement to hold.
This 
pessimistic prediction is due to negative results recently established in \cite[Proposition 2]{slawski_regularization_2015}.
Focusing on real matrices, the authors show that if the measurement matrices $A_j$ are chosen independently from a Gaussian orthogonal ensemble, 
then estimating any symmetric, positive semidefinite matrix $X$ via \eqref{eq:least_squares} becomes ill-posed, unless the number of measurements obeys
\begin{equation*}
 m \geq \frac{1}{4}n(n+1) = \mathcal{O}(n^2).
\end{equation*} 

Finally, we want to point out that the fruitfulness of plain least squares regression for recovering positive semidefinite matrices
was already pointed out and explored by Slawski, Li and Hein \cite{slawski_regularization_2015}.
However, there is a crucial difference in the mindset of \cite{slawski_regularization_2015} and the results presented here.
The main result \cite[Theorem 2]{slawski_regularization_2015} of Slawski et al. assumes a fixed signal $X \succcurlyeq 0$ of interest and 
provides bounds for the reconstruction error in terms of geometric properties of both $X$ and the measurement ensemble.
Conversely, Theorem \ref{mainTh3} assumes fixed measurements (e.g. $m = C rn$ projectors onto Gaussian random vectors)
and w.h.p. assures robust recovery of all matrices $X \succcurlyeq 0$ having approximately rank-$r$ simultaneously.

\subsection{Notation}{\label{Notation}} The Schatten $p$-norm of $Z\in\CC^{n_1\times n_2}$ is given by
\[
\norm{Z}_p = \brac{\sum_{j=1}^n\sigma_j(Z)^p}^{1/p},\quad p\geq 1,
\]
where $\sigma_j(Z)$, $j=1,\ldots,n$, denote the singular values of $Z$. It reduces to the nuclear norm $\norm{\cdot}_{*}$ for $p=1$ and the Frobenius norm $\norm{\cdot}_F$ for $p=2$. It is a common convention that the singular values of $Z$ are non-increasingly ordered. We write $Z=Z_r+Z_c$, where $Z_r$ is the best rank-$r$ approximation of $Z$ with 
respect to any Schatten $p$-norm of $Z$.

\section{Applications}

\subsection{Phase retrieval} \label{sub:phase_retrieval}

The problem of retrieving a complex signal $x \in \mathbb{C}^n$ from 
measurements that are ignorant towards phase information
has long been abundant in many areas of science.
Measurements of that type correspond to
\begin{equation}
b_i = \left| \langle a_i, x \rangle \right|^2 + w_i\quad i=1,\ldots,m, \label{eq:phaseless_measurements}
\end{equation}
where $a_1,\ldots,a_m \in \mathbb{C}^n$ are measurement vectors and $w_i$ denotes additive noise.  Recently, the problem's mathematical structure has received considerable attention in its own right.
It is clearly ill-posed, since all phase information is lost in the measurement process and, moreover, the measurements \eqref{eq:phaseless_measurements} are of a non-linear nature. 
This second obstacle can be overcome by a trick  \cite{balan_painless_2009} well known in conic programming: the quadratic expressions \eqref{eq:phaseless_measurements} are linear in the outer products 
$x x^*$ and $a_i a_i^*$:
\begin{equation}
b_i = \left| \langle a_i, x \rangle \right|^2 + w_i  = \tr \left( \left(a_i a_i \right)^* \left( x x^* \right) \right) + w_i. \label{eq:lifting}
\end{equation}
Note that such a ``lift'' allows for reinterpreting the phase-less sampling process as $\mathcal{A}(x x^*) = b + w$. 
Also, the new object of interest $X := x x^*$ is an Hermitian, positive semidefinite matrix of rank one. In turn, the measurement matrices $A_i = a_i a_i^*$ are constrained to be proportional to rank-one projectors. 
Consequently, such a ``lift'' turns the phase retrieval problem into a very particular instance of low rank matrix recovery --- a fact that was first observed by Cand\`es, Eldar, Strohmer and Voroninski \cite{candes_phase_2013,castvo13}. 
Subsequently, uniform recovery guarantees for $m = C n $ complex standard Gaussian measurement vectors $a_i$ have been established  which are stable towards additive noise.
The main result in \cite{candes_solving_2012} establishes with high probability that for any $X = x x^*$, solving the convex optimization problem (PhaseLift)
\begin{align}
\underset{Z \in \CH_n}{\operatorname{min}}\; \| \mathcal{A}(Z) - b \|_{\ell_1} \quad \textrm{subject to} \quad Z \succcurlyeq 0 \label{eq:PhaseLift}
\end{align}
yields an estimator $Z^\sharp$ obeying $\| Z^\sharp - x x^* \|_2 \leq C \| w \|_1 / m$.
If a bound $\| w \|_{\ell_2} \leq \eta$ on the noise in the sampling process \eqref{eq:phaseless_measurements} is available, an extension of \cite[Theorem 2]{krt14} (see section 2.3.2 in loc. cit) establishes a comparable recovery guarantee via solving
\begin{align}
\underset{Z \in \CH_n}{\operatorname{min}}\; \tr (Z) \quad \textrm{subject to} \quad \| \mathcal{A}(Z) - b \|_{\ell_2} \leq \eta, \; Z \succcurlyeq 0 \label{eq:krt_phaselift}
\end{align}
instead of PhaseLift.
Our findings allow for establishing novel recovery guarantees for retrieving phases. 
Indeed, since \eqref{eq:lifting} assures that any signal of interest is positive semidefinite and has precisely rank one, Theorem \ref{mainTh3} is applicable and yields the following corollary.

\begin{corollary} \label{cor:phaselift}
Consider $m \geq C n$ phaseless measurements of the form \eqref{eq:phaseless_measurements}, where each $a_i$ is a complex standard Gaussian vector. 
Then with probability at least $1 - \mathrm{e}^{-C' m}$ 
these measurements allow for estimating any signal $x \in \mathbb{C}^n$ 
via solving
\begin{equation}
\underset{Z \in \CH_n}{\operatorname{min}}\;  \| \mathcal{A}(Z) - b \|_{\ell_2} \quad \textrm{subject to} \quad Z \succcurlyeq 0. \label{eq:our_phaselift}
\end{equation}
The resulting minimizer $Z^\sharp$ of  \eqref{eq:our_phaselift} obeys 
\begin{equation*}
\| Z^\sharp - x x^* \|_{\ell_2} 
\leq \frac{ C \| w \|_{\ell_2}}{\sqrt{m}},
\end{equation*}
where $C$ denotes a positive constant and $w \in \mathbb{R}^m$ represents additive noise in the sampling process \eqref{eq:phaseless_measurements}.

An analogous statement is true --- with a weaker probability of success $1 - \mathrm{e}^{-s}$ for $s\geq 1$ --- for $m \geq C' s n \log (n)$ rank one projective measurements onto  independent elements of an approximate 4-design.
\end{corollary}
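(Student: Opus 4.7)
The plan is to read off Corollary \ref{cor:phaselift} as a direct specialization of Theorem \ref{mainTh3} to the lifted phase retrieval problem, exploiting the very rigid structure of the signal. First I would observe that the matrix to be recovered, $X := xx^*$, is Hermitian, positive semidefinite, and has rank exactly one. In particular, its best rank-$1$ approximation coincides with $X$ itself, so the tail term vanishes: $\|X_c\|_1 = 0$. This is precisely the structural input that lets the approximation-error part of the bound \eqref{eq:mainTh3} drop out.

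Next I would handle the data-fidelity term. The noisy measurements take the form $b = \mathcal{A}(X) + w$, so $X \succcurlyeq 0$ is feasible for the positive semidefinite least squares program \eqref{eq:our_phaselift} and achieves residual $\|\mathcal{A}(X) - b\|_{\ell_2} = \|w\|_{\ell_2}$. By optimality of $Z^\sharp$, the inequality $\|\mathcal{A}(Z^\sharp) - b\|_{\ell_2} \leq \|w\|_{\ell_2}$ follows. A triangle inequality then gives
\[
\|\mathcal{A}(Z^\sharp) - \mathcal{A}(X)\|_{\ell_2} \leq \|\mathcal{A}(Z^\sharp) - b\|_{\ell_2} + \|w\|_{\ell_2} \leq 2\|w\|_{\ell_2}.
\]

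Now I would invoke Theorem \ref{mainTh3} with $r = 1$ and $p = 2$, applied to the two positive semidefinite matrices $Z^\sharp$ and $X = xx^*$. For this choice the prefactors $r^{1-1/p}$ and $r^{1/p-1/2}$ both collapse to $1$, the Schatten-$2$ norm on the left is the Frobenius norm, and the approximation term vanishes as noted above. Combining with the residual bound from the previous paragraph yields
\[
\|Z^\sharp - xx^*\|_F \leq \frac{C_4}{\sqrt{m}}\, \|\mathcal{A}(Z^\sharp) - \mathcal{A}(X)\|_{\ell_2} \leq \frac{2C_4}{\sqrt{m}}\,\|w\|_{\ell_2},
\]
which is the claimed estimate (with $C := 2C_4$). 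The sampling and probability bounds transfer verbatim: $m \geq C_1 n r$ with failure probability $e^{-C_5 m}$ becomes $m \geq C n$ in the Gaussian case, and $m \geq C_2 s n r \log n$ with failure probability $e^{-sr}$ becomes $m \geq C' s n \log n$ with failure probability $e^{-s}$ for the design case.

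There is essentially no serious obstacle: all the heavy lifting — the null-space/quotient-style analysis underlying uniform, stable recovery of arbitrary positive semidefinite matrices — has already been performed in Theorem \ref{mainTh3}. The only point worth emphasizing is that, because \eqref{eq:our_phaselift} contains no explicit noise parameter $\eta$, the residual must be controlled purely through optimality of $Z^\sharp$ against the feasible competitor $xx^*$, rather than through a constraint of the form $\|\mathcal{A}(Z) - b\|_{\ell_2} \leq \eta$; this is exactly the regularization-free feature of the least-squares formulation highlighted in the discussion following Theorem \ref{mainTh3}.
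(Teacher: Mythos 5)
Your proposal is correct and follows exactly the route the paper intends: specialize Theorem \ref{mainTh3} to $r=1$, $p=2$ with $X = xx^*$ (so $\|X_c\|_1 = 0$), and control the residual via optimality of $Z^\sharp$ against the feasible competitor $xx^*$ together with the triangle inequality, giving $\|\mathcal{A}(Z^\sharp) - \mathcal{A}(X)\|_{\ell_2} \leq 2\|w\|_{\ell_2}$ --- precisely the chain the paper displays right after Theorem \ref{mainTh3}. The transfer of the sampling rates and failure probabilities ($e^{-C_5 m}$ in the Gaussian case, $e^{-sr} = e^{-s}$ at $r=1$ in the design case) is likewise handled exactly as in the paper, so nothing is missing.
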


This recovery procedure is in spirit very similar to \eqref{eq:PhaseLift}, but it utilizes an $\ell_2$-regression instead of an $\ell_1$-norm minimization. 
Numerical studies indicate that algorithm \eqref{eq:our_phaselift} outperforms \eqref{eq:krt_phaselift} as well as \eqref{eq:PhaseLift}. 
These studies were motivated and accompany actual quantum mechanical experiments and will be published elsewhere \cite{kueng_optics_2015}.

Finally, we want to relate Corollary \ref{cor:phaselift}
to a non-convex phaseless recovery procedure  devised by Cand\`es, Li and Soltanolkotabi \cite{candes_wirtinger_2015}.
There, the authors refrain from applying the aforementioned ``lifting'' trick to render the phase retrieval problem linear. 
Instead, they use a careful initialization step, followed by  a gradient descent scheme (based on Wirtinger derivatives) to minimize 
the problem's least squares loss function directly over complex vectors $z \in \mathbb{C}^n$.
Mathematically, such an optimization is equivalent to solving
\begin{equation}
\underset{Z \in \CH_n}{\textrm{min}} \; \| \mathcal{A} (Z) - b \|_{\ell_2}
\quad \textrm{subject to} \quad  Z \succcurlyeq 0, \; \mathrm{rank}(Z) = 1	\label{eq:wirtinger}
\end{equation}
and the rank-constraint manifests the problem's non-convex nature.
Hence, the convex optimization problem \eqref{eq:our_phaselift} can be viewed as a convex relaxation of \eqref{eq:wirtinger}, obtained by omitting the non-convex rank constraint.

\subsection{Quantum information} \label{sub:quantum}

In this section we describe implications and possible applications of our findings to problems in quantum information science.
For the sake of being self-contained, we have included a brief introduction to crucial notions of quantum mechanics in the appendix. 
Quantum mechanics postulates that a finite $n$-dimensional quantum system is described by an Hermitian, positive semidefinite matrix $X$ with unit trace, called a \emph{density operator}. 
This ``quantum shape constraint'' assures that all density operators meet the requirements of Theorem \ref{mainTh3}. 
Furthermore, the rank-one projective measurements assumed in that theorem can be recast as valid quantum mechanical measurements --- see \cite[Section 3]{krt14}
for possible implementations and further discussion on this topic. 
Note, however, that such a reinterpretation is in general not possible for the measurement matrices with independent entries considered in Theorem \ref{th:indepNSP}, because these matrices fail to be Hermitian. 
With Theorem \ref{mainTh3} at hand, we underline its implications for two prominent issues in (finite dimensional) quantum mechanics.

\subsubsection{Quantum state tomography}

Inferring a quantum mechanical description of a physical system is equivalent to assigning it a \emph{density operator} (or quantum state) --- a process referred to as \emph{quantum state tomography} \cite{gross_focus_2013,ferrie_have_2015}.  Tomography is now a routine task for designing, testing and tuning qubits in the quest of building quantum information processing devices. 
Since the size of controllable quantum mechanical systems is ever increasing\footnote{Nowadays, experimentalists are able to create and control multi-partite systems of overall dimension $n = 2^8$ in their laboratories \cite{schindler_quantum_2013}. This results in a density operator of size $256 \times 256$ (a priori $65\;536$ parameters). }
it is very desirable to exploit additional structure --- if present --- when performing such a task.
One such structural property --- often encountered in actual experiments --- is \emph{approximate purity}, i.e., the density operator $X$ is well approximated by a low rank matrix. 
Performing quantum state tomography under such a prior assumption therefore constitutes a particular instance of low rank matrix recovery \cite{gross_quantum_2010,flammia_quantum_2012}.

The results presented in this paper provide recovery guarantees for tomography protocols that stably tolerate noisy measurements and moreover are \emph{robust} towards the prior assumption of approximate purity. 
In the context of tomography, results of this type so far have already been established for $m = Cnr \log^6n$ random (generalized) Pauli measurements \cite[Proposition 2.3]{liu_universal_2011} 
via proving a rank-RIP for such measurement matrices and then resorting to \cite[Lemma 3.2]{CandesPlan}.
However, this auxiliary result manifestly requires additive Gaussian noise and using a type of Dantzig, or Lasso selector to recover the best rank-$r$ approximation of a given density operator. 
This is not the case for the result established here, where performing a plain least squares regression of the form \eqref{eq:least_squares} is sufficient.

\begin{corollary} \label{cor:tomography}
Fix $r \leq n$ and suppose that the measurement operator $\mathcal{A}: \; \mathcal{H}_n \to \mathbb{R}^m$ is of the form
\begin{equation*}
\mathcal{A} (X)  = \sum_{i=1}^m \sqrt{\frac{(n+1)n}{m}} \langle a_i, X a_i \rangle e_i + w \in \mathbb{R}^m \quad \textrm{with} \quad m \geq C_1 r n \log n,
\end{equation*}
where each $a_i \in \mathbb{C}^n$ is chosen independently from an approximate 4-design and $w \in \mathbb{R}^m$ denotes additive noise.
Then, the best rank-$r$ approximation of any density operator $X$ can be obtained from such measurements
via solving
\begin{equation}
\min_{Z \in \mathcal{H}_n} \left\| \mathcal{A} (Z) - \mathcal{A} (X) \right\|_{\ell_2}
\quad \textrm{subject to} \quad Z \succcurlyeq 0, \; \tr \left( Z \right) = 1.
\label{eq:least_squares2}
\end{equation}
With probability at least $1 - \mathrm{e}^{-C_2 m}$, the minimizer $Z^\sharp$ of this optimization obeys 
\begin{equation}
\| X - Z^\sharp \|_1 
\leq C_3\| X_c \|_1 + C_4 \sqrt{r} \| w \|_{\ell_2},	\label{eq:tomography_bound}
\end{equation}
where $C_1,C_2,C_3$ and $C_4$ denote positive constants.
\end{corollary}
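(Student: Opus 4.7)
The plan is to deduce Corollary \ref{cor:tomography} directly from Theorem \ref{mainTh3}, exploiting the crucial fact that any density operator is automatically feasible for the constrained least-squares program \eqref{eq:least_squares2}. The measurement model (rank-one projectors onto elements of a complex projective 4-design) and the sampling rate $m \geq C_1 rn\log n$ are exactly those covered by the design branch of Theorem \ref{mainTh3}, so the error estimate \eqref{eq:mainTh3} applies to the pair $(X, Z^\sharp)$, both of which lie in the positive semidefinite cone.

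Specializing \eqref{eq:mainTh3} to $p=1$ (so that $r^{1-1/p}=1$ and $r^{1/p-1/2}=\sqrt r$) gives, on the announced high-probability event,
\[
\|Z^\sharp - X\|_1 \leq C_3 \|X_c\|_1 + \frac{C_4\sqrt{r}}{\sqrt m}\|\mathcal A(Z^\sharp)-\mathcal A(X)\|_{\ell_2}.
\]
It then remains to bound the residual $\|\mathcal A(Z^\sharp)-\mathcal A(X)\|_{\ell_2}$ in terms of $\|w\|_{\ell_2}$. For this I would exploit that $X$ is Hermitian, positive semidefinite and has unit trace, hence feasible for \eqref{eq:least_squares2}. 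Optimality of $Z^\sharp$ with respect to the observed data $b=\mathcal A(X)+w$ then yields
\[
\|\mathcal A(Z^\sharp)-b\|_{\ell_2} \leq \|\mathcal A(X)-b\|_{\ell_2} = \|w\|_{\ell_2},
\]
and a single triangle inequality upgrades this to $\|\mathcal A(Z^\sharp)-\mathcal A(X)\|_{\ell_2}\leq 2\|w\|_{\ell_2}$. Substituting back gives a bound of the form \eqref{eq:tomography_bound}, and the probability of success and sampling rate are inherited directly from Theorem \ref{mainTh3}.

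The only real subtlety, and the step I would be most careful about, is bookkeeping of the normalization factor $\sqrt{(n+1)n/m}$ that has been absorbed into the definition of $\mathcal A$ in the corollary (chosen so that $\mathcal A$ acts as an approximate isometry on low-rank Hermitian matrices). One has to check that this prefactor combines correctly with the $1/\sqrt m$ appearing in \eqref{eq:mainTh3}, so that the explicit dimension-dependent scaling cancels and the clean dependence $C_4\sqrt r\,\|w\|_{\ell_2}$ in \eqref{eq:tomography_bound} emerges. This is a routine but dimensionally delicate rescaling; aside from it, the corollary follows essentially immediately from the feasibility argument together with the main recovery theorem, with no additional probabilistic or null-space-property analysis needed.
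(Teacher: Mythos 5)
Your feasibility argument and the rescaling bookkeeping are correct and do match part of what the paper needs: $X$ is feasible for \eqref{eq:least_squares2}, hence $\norm{\nA(Z^\sharp)-\nA(X)}_{\ell_2}\leq 2\norm{w}_{\ell_2}$, and the prefactor $\sqrt{(n+1)n/m}$ absorbs the $1/\sqrt{m}$ appearing in \eqref{eq:mainTh3}. The genuine gap is the probability of success. In the design branch, Theorem~\ref{mainTh3} holds only with probability $1-\mathrm{e}^{-sr}$, where $s$ is tied to the oversampling factor through $m\geq C_2 s\, nr\log n$; at the corollary's sampling rate $m\geq C_1 rn\log n$ the parameter $s$ is a constant, so your black-box application yields a failure probability of order $\mathrm{e}^{-cr}$, exponentially weaker than the claimed $\mathrm{e}^{-C_2 m}$ (note $m\geq C_1 rn\log n\gg r$). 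The culprit is Proposition~\ref{prop:W4design}: the proof of Theorem~\ref{mainTh3} requires a well-conditioned positive definite matrix $W=\frac{1}{m}\sum_j A_j$ in the span of the measurement matrices, and for 4-design measurements this is only guaranteed with probability $1-\mathrm{e}^{-\gamma\tilde{C}_4 r}$, which dominates the union bound. No care with constants repairs this as long as Theorem~\ref{mainTh3} is invoked as a black box; the paper explicitly flags this discrepancy in the sentence preceding Remark~\ref{rem:quantum_improvement}.

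The missing idea --- and the paper's actual route, spelled out in Remark~\ref{rem:quantum_improvement} --- is to exploit the constraint $\tr(Z)=1$ beyond mere feasibility of $X$. The trace constraint acts as an additional, perfectly noiseless measurement with matrix $A_0=\id$; choosing $t=(1,0,\ldots,0)^T\in\RR^{m+1}$ in Theorem~\ref{th:NSPforPositiveMatrices} then gives $W=\id$ \emph{deterministically}, so Proposition~\ref{prop:W4design} (the sole source of the weak probability) is not needed at all and the endomorphism \eqref{eq:mapping} is trivial. Even more directly: both $X$ and $Z^\sharp$ are density operators, so $\norm{Z^\sharp}_1-\norm{X}_1=\tr(Z^\sharp)-\tr(X)=0$, and Theorem~\ref{th:DifferenceBetweenSignalAndFeasibleElements} with $p=1$ applied to the pair $(X,Z^\sharp)$ already yields
\[
\norm{X-Z^\sharp}_1\leq \frac{2(1+\rho)^2}{1-\rho}\norm{X_c}_1+\frac{(3+\rho)\tau\sqrt{r}}{1-\rho}\norm{\nA(Z^\sharp)-\nA(X)}_{\ell_2},
\]
so that the only probabilistic event required is the Frobenius robust rank null space property itself, which Theorem~\ref{Th2} supplies with probability $1-\mathrm{e}^{-C_5 m}$ and $\tau=C_6/\sqrt{m}$ (a dimension-free constant after your rescaling). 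Combining this with your residual bound $\norm{\nA(Z^\sharp)-\nA(X)}_{\ell_2}\leq 2\norm{w}_{\ell_2}$ gives \eqref{eq:tomography_bound} with the claimed failure probability $\mathrm{e}^{-C_2 m}$.
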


This statement is a direct consequence of Theorem \ref{mainTh3}. 
For the sake of clarity, we have re-scaled each projective measurement with $\sqrt{\frac{(n+1)n}{m}}$. 
This  simplifies the resulting expression \eqref{eq:tomography_bound} and moreover facilitates\footnote{In fact by resorting to the Frobenius norm bound in Theorem \ref{mainTh3} (instead of the nuclear norm bound employed to arrive at Corollary \ref{cor:tomography}),
one obtains a performance guarantee that strongly resembles \cite[Equation (8)]{liu_universal_2011} --- the main recovery guarantee in that paper.}
 direct comparison with the main result in \cite{liu_universal_2011}, as it closely mimics the scaling employed there.

Corollary \ref{cor:tomography} is valid for any type of additive noise and no a priori knowledge of its magnitude is required.
This includes the particularly relevant case of a Bernoulli error model --- see e.g. \cite[Section 2.2.2]{carpentier_uncertainty_2015} and also \cite{flammia_quantum_2012} --- which is particularly relevant for tomography experiments.
Also, note that the recovery error is bounded in nuclear norm, instead of Frobenius norm.  Such a bound is very meaningful for tomography, since quantum mechanics is a probabilistic theory and the nuclear norm encapsulates total variational distance. 
Moreover, Helstrom's theorem \cite{helstrom_quantum_1969} provides an operational interpretation of the nuclear norm distance bounded in \eqref{eq:tomography_bound}: it is proportional to the maximal bias achievable in the task of distinguishing the two quantum states $X$ and $Z^\sharp$, provided that any physical measurement can be implemented.

Finally, note that the bound on the probability of failure in Corollary \ref{cor:tomography} is much stronger than the one provided in Theorem \ref{mainTh3}. Such a strengthening is possible, because the trace of any density operator equals one. We comment on this in Remark \ref{rem:quantum_improvement} below.

\subsubsection{Distinguishing quantum states}
One crucial prerequisite in the task of inferring density operators from measurement data, is the ability to faithfully distinguish any two density operators via quantum mechanical measurements. 
The most general notion of a quantum measurement is a \emph{positive operator valued measure} (POVM) $\mathcal{M} = \left\{ E_m: E_m \succcurlyeq 0, \sum_{m} E_m = \id \right\}$ \cite[Chapter 2.2]{nielsen_quantum_2010}. 
A POVM $\mathcal{M}$ is called \emph{informationally complete} (IC)  \cite{scott_tight_2006}
if for any two density operators $X \neq Z \in \mathcal{H}_n$ there exists $E_m \in \mathcal{M} \subseteq \mathcal{H}_n$ such that
\begin{equation}
 \tr \left( E_m X \right) \neq \tr \left( E_m Z \right). \label{eq:IC}
\end{equation}
This assures the possibility of discriminating any two quantum states via such a measurement in the absence of noise. 
Without additional restrictions, such an IC POVM must contain at least $n^2$ elements. 
However, such a lower bound can be too pessimistic, if the density operators of interest have additional structure.
Approximate purity introduced in the previous subsection can serve as such an additional structural restriction:

\begin{definition}[Rank-$r$ IC, Definition 1 in \cite{heinosaari_quantum_2013}]
For $ r \leq n$, we call a POVM $\mathcal{M}=\left\{ E_m \right\}_{m \in I}$ \emph{rank-$r$ restricted informationally complete} (rank-$r$ IC), if \eqref{eq:IC} holds for any two density operators of rank at most $r$.
\end{definition}

Bounds for the number $m$ of POVM elements required to assure rank-$r$-IC have been established in \cite{heinosaari_quantum_2013,kech_role_2015,kech_quantum_2015}.
These approaches exploit topological obstructions of embeddings for establishing lower bounds and explicit POVM constructions for upper bounds.
For instance, in \cite{heinosaari_quantum_2013} a particular rank-$r$-IC POVM containing $m = 4r(n-r)-1$ elements is constructed.

Focusing less on establishing tight bounds and more on identifying entire families of rank-$r$ IC measurements,
Kalev et al. \cite{kalev_informationally_2015} observed that each measurement ensemble fulfilling the rank-RIP for some $r \leq n$ is also rank-$r$ IC. 
This in particular applies with high probability to $m = C \log^6 n \; n r$ random (generalized) Pauli measurements \cite{liu_universal_2011}. 
Theorem \ref{mainTh3}, and likewise Corollary \ref{cor:tomography}, allow us to draw similar conclusions without having to rely on any rank-RIP. 
Indeed, in the absence of noise, these results guarantee for any rank-$r$ density operator $X$
\begin{equation}
\left\{ Z: \; Z \succcurlyeq 0, \; \mathcal{A}(Z) = \mathcal{A} (X) \right\} = \left\{ X \right\} \label{eq:injectivity}
\end{equation}
with high probability. If this is the case, the measurement operator $\mathcal{A}$ allows for uniquely identifying any rank-$r$ density operator $X$. This in turn implies that $\mathcal{A}$ is rank-$r$ IC and the following corollary is immediate:

\begin{corollary}
Fix $r \leq n$ arbitrary and let $C,C'$ be absolute constants of sufficient size. Then
\begin{enumerate}
\item Any POVM containing $m = C nr$ projectors onto Haar\footnote{
Haar random vectors are vectors drawn uniformly from the complex unit sphere in $\mathbb{C}^n$. They can be obtained from complex standard Gaussian vectors by rescaling them to unit length.  Property \eqref{eq:injectivity} is invariant under such a re-scaling and Theorem \ref{mainTh1} therefore assures rank-$r$ IC for both Gaussian and Haar random vectors.
}
 random vectors is rank-$r$ IC with probability at least $1-\mathrm{e}^{C_2 m}$.
\item Any POVM containing $m = C' nr\log n$ projectors onto random elements of a (sufficiently accurate approximate) 
$4$-design is rank-$r$ IC with probability at least $1- \mathrm{e}^{-\tilde{C}_2 m}$. 
\end{enumerate}
\end{corollary}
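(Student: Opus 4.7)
\medskip

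\noindent\textbf{Proof plan for the final corollary.} The plan is to reduce the rank-$r$ IC property to an injectivity statement on the cone of PSD matrices and then read this injectivity off Theorem~\ref{mainTh3} applied in the noiseless regime. Concretely, a POVM $\mathcal{M}=\{E_i\}$ whose elements are proportional to the rank-one projectors $a_i a_i^*$ is rank-$r$ IC precisely when the associated measurement map $\mathcal{A}$ is injective on the set of density operators of rank at most $r$: negating the defining condition \eqref{eq:IC} for all $E_i$ is equivalent to $\mathcal{A}(X)=\mathcal{A}(Z)$, and (global) scaling of each projector does not alter injectivity. Thus it suffices to exhibit, with high probability, an event on which $\mathcal{A}(X)=\mathcal{A}(Z)$ forces $X=Z$ for all $X,Z\succcurlyeq 0$ with $\rank(X),\rank(Z)\leq r$.

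Next, I would invoke Theorem~\ref{mainTh3} in the two measurement scenarios with noise vector $w=0$. For any two positive semidefinite matrices $X,Z\in\mathcal{H}_n$, the theorem furnishes the uniform bound
\[
\|Z-X\|_p \;\leq\; \frac{C_3}{r^{1-1/p}}\|X_c\|_1 \;+\; \frac{C_4\, r^{1/p-1/2}}{\sqrt{m}}\,\|\mathcal{A}(Z)-\mathcal{A}(X)\|_{\ell_2},
\]
valid simultaneously for all such $X,Z$ on an event of probability at least $1-\mathrm{e}^{-C_5 m}$ in the Gaussian case and at least $1-\mathrm{e}^{-sr}$ (with $m\geq C_2 s n r\log n$) in the design case. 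On this event, specialising to any $X\succcurlyeq 0$ with $\rank(X)\leq r$ gives $X_c=0$, so the first term vanishes; if in addition $Z\succcurlyeq 0$ satisfies $\mathcal{A}(Z)=\mathcal{A}(X)$, the second term vanishes as well, and we conclude $\|Z-X\|_p=0$, i.e.\ $Z=X$. In particular \eqref{eq:feasibility_problem}/\eqref{eq:injectivity} hold, and the restriction of $\mathcal{A}$ to rank-$r$ density operators is injective, giving rank-$r$ IC of the corresponding POVM. The Haar-random case follows from the Gaussian case because the property $\mathcal{A}(X)=\mathcal{A}(Z)$ and the constraint $Z\succcurlyeq 0$ are both invariant under positive rescaling of each $a_i$, and Haar vectors are obtained from complex Gaussians by such a rescaling.

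The only genuinely non-routine point is matching the probability exponent stated in the corollary. In the Gaussian setting the exponent $\mathrm{e}^{-C_2 m}$ is inherited verbatim from Theorem~\ref{mainTh3}. In the design setting, Theorem~\ref{mainTh3} only delivers the weaker bound $\mathrm{e}^{-sr}$ at the sampling rate $m\geq C_2 s n r\log n$; to reach the exponent $\mathrm{e}^{-\widetilde{C}_2 m}$ claimed in the corollary, I would either (i) absorb the logarithmic slack by choosing $s$ proportional to $n\log n$ so that $sr\gtrsim m$, or (ii) observe that for the purely combinatorial property of injectivity on the rank-$r$ PSD cone one does not need the full strength of the uniform stability estimate and can work with a coarser union bound over a net of unit-Frobenius rank-$r$ PSD matrices, which in the 4-design setting yields exponential probability in $m$ via the concentration ingredients that already underlie Theorem~\ref{mainTh2}. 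Either route stays within the tools already developed in the paper; no new technical machinery is required.
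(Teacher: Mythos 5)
Your reduction of rank-$r$ IC to noiseless injectivity on the positive semidefinite cone, and its resolution via Theorem \ref{mainTh3} with $X_c=0$ and $w=0$, is exactly the paper's argument in the Gaussian case, Haar rescaling included. The genuine problem is the design case, and there neither of your two proposed repairs closes the gap. Repair (i) is self-defeating: in Theorem \ref{mainTh3} the parameter $s$ is tied to the sampling rate through the hypothesis $m \geq C_2\, s\, nr \log n$, so at the corollary's stated rate $m = C' nr\log n$ you are forced to take $s$ of constant size, and the theorem then yields only $1-\mathrm{e}^{-cr}$; conversely, choosing $s \propto n\log n$ so that $sr \gtrsim m$ would require $m \gtrsim n^2 r \log^2 n$ measurements, contradicting the very statement being proved. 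Repair (ii) gestures at a net argument that is never carried out and is in any case unnecessary.

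The observation you are missing --- and the one the paper actually relies on (see Remark \ref{rem:quantum_improvement} and the phrase ``Theorem \ref{mainTh3}, and likewise Corollary \ref{cor:tomography}'' preceding the corollary) --- is that rank-$r$ IC only quantifies over \emph{density operators}, which have unit trace. The weak $\mathrm{e}^{-sr}$ exponent in the design half of Theorem \ref{mainTh3} stems entirely from Proposition \ref{prop:W4design}, the probabilistic construction of a positive definite $W$ in the span of the measurement matrices. For trace-normalized matrices that construction is superfluous: the trace constraint acts as a free, noiseless measurement with measurement matrix $A_0=\id$, so one may take $W=\id$ deterministically; equivalently, apply Theorem \ref{th:DifferenceBetweenSignalAndFeasibleElements} directly with $\mathcal{A}(Z)=\mathcal{A}(X)$, $X_c=0$ and $\norm{Z}_1-\norm{X}_1=\tr(Z)-\tr(X)=0$ to conclude $Z=X$. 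The only probabilistic ingredient left is then the Frobenius robust rank null space property of Theorem \ref{Th2}, which holds with probability $1-\mathrm{e}^{-C_5 m}$, and the claimed exponent $1-\mathrm{e}^{-\tilde{C}_2 m}$ follows at the stated sampling rate $m = C' nr\log n$ with no extra machinery.
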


This statement is reminiscent of a conclusion drawn in \cite{ambainis_quantum_2007,matthews_distinguishability_2009}: 
In the task of distinguishing quantum states, a POVM containing a 4-design essentially performs as good as as the uniform POVM (the union of all rank-one projectors).

\begin{remark}
In the process of finishing this article we became aware of recent work by Kech and Wolf \cite{kewo15}, who showed that the elements of a generic Parseval frame generate a rank-$r$ IC map $\mathcal{A}$ if $m \geq 4r(n-r)$. In fact, Xu 
showed in \cite{xu15} that $m \geq 4r(n-r)$ is both a sufficient and necessary condition for identifiability of complex rank $r$ matrices in $\CC^{n \times n}$. We emphasize, however, that these results are only concerned with pure identifiability and
do not come with a practical and stable recovery algorithm.  
\end{remark}

\section{The null space property for  low-rank matrix recovery}\label{sec:NSP}

Let $X\in\CC^{n_1\times n_2}$. If $X$ is only approximately of low-rank, then we would like to find a condition on the measurement map $\nA:\CC^{n_1\times n_2}\to\CC^m$ that provides the control of the recovery error by the error of its best approximation by low rank matrices. Moreover, it should also take into account that the measurements might be noisy.  
\begin{definition}
We say that $\nA:\CC^{n_1\times n_2}\to\CC^m$ satisfies the Frobenius robust rank null space property of order $r$ with constants $0<\rho<1$ and $\tau>0$ if for all $M\in\CC^{n_1\times n_2}$, the singular values of $M$ satisfy
\[
\norm{M_r}_2\leq\frac{\rho}{\sqrt r}\norm{M_c}_1+\tau\norm{\nA(M)}_{\ell_2}.
\]
\end{definition}

The stability and robustness of (\ref{eqNNMinimization}) are established by the following theorem.
\begin{theorem}\label{th:FrobeniusNSP}
Let $\nA:\CC^{n_1\times n_2}\to\CC^m$ satisfy the Frobenius robust rank null space property of order $r$ with constants $0<\rho<1$ and $\tau>0$. Let $n=\min\{n_1,n_2\}$. Then for any $X\in\CC^{n_1\times n_2}$ any solution $X^{\sharp}$ of (\ref{eqNNMinimization}) with $b=\nA(X)+w$, $\norm{w}_{\ell_2}\leq\eta$, approximates $X$ with error
\[
\norm{X-X^{\sharp}}_2\leq \frac{2(1+\rho)^2}{(1-\rho)\sqrt r}\norm{X_c}_1+\frac{2\tau(3+\rho)}{1-\rho}\eta.
\]
\end{theorem}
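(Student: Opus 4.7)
The result is the matrix analogue of the classical fact that an $\ell_2$-robust null space property implies stable and robust sparse recovery via $\ell_1$-minimization, and I would mirror that argument. Throughout, set $M := X^\sharp - X$.

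\textbf{Step 1 (Measurement error of $M$).} Since both $X$ (by hypothesis on the noise) and $X^\sharp$ (by the constraint in \eqref{eqNNMinimization}) are feasible, the triangle inequality gives $\norm{\nA(M)}_{\ell_2} \le \norm{\nA(X^\sharp)-b}_{\ell_2} + \norm{\nA(X)-b}_{\ell_2} \le 2\eta$. This controls the noise term in the NSP.

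\textbf{Step 2 (Cone/tangent-space inequality from optimality).} The minimality of $X^\sharp$ gives $\norm{X+M}_* = \norm{X^\sharp}_* \leq \norm{X}_*$. To turn this into an inequality relating $\norm{M_r}_1$ and $\norm{M_c}_1$ (best rank-$r$ approximation of $M$), I would argue as in the proof of the matrix recovery theorem from \cite{RechtFazelParrilo,FoucartRauhut}: let $X = U\Sigma V^*$ be the SVD of $X$, split off the top-$r$ part $X_r$ and the tail $X_c$, and decompose $M = M^{(1)}+M^{(2)}$ where $M^{(1)}$ lies in the tangent-space-complement (its row- and column-spaces are orthogonal to those of $X_r$) and $M^{(2)}$ is forced to have rank $\le 2r$. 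Using additivity of the nuclear norm for matrices with mutually orthogonal row and column spaces, namely $\norm{X_r + M^{(1)}}_* = \norm{X_r}_* + \norm{M^{(1)}}_*$, together with the triangle inequality for the remaining terms, one gets $\norm{M^{(1)}}_* \leq \norm{M^{(2)}}_* + 2\norm{X_c}_1$. Combining with the Eckart--Young/Mirsky minimality of the best rank-$r$ approximation in the nuclear norm yields the clean form
\[
\norm{M_c}_1 \leq \norm{M_r}_1 + 2\norm{X_c}_1.
\]

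\textbf{Step 3 (Apply the NSP and solve).} The Frobenius robust rank NSP applied to $M$ gives $\norm{M_r}_2 \leq \tfrac{\rho}{\sqrt{r}}\norm{M_c}_1 + \tau \norm{\nA(M)}_{\ell_2}$. Plug in the cone inequality from Step 2 and use Cauchy--Schwarz on singular values, $\norm{M_r}_1 \leq \sqrt{r}\,\norm{M_r}_2$, to obtain
\[
\norm{M_r}_2 \leq \rho\norm{M_r}_2 + \tfrac{2\rho}{\sqrt{r}}\norm{X_c}_1 + \tau\norm{\nA(M)}_{\ell_2}.
\]
Since $\rho<1$, absorb $\rho\norm{M_r}_2$ on the left and bound $\norm{\nA(M)}_{\ell_2}\le 2\eta$.

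\textbf{Step 4 (From $M_r$ back to $M$).} Write $\norm{M}_F \leq \norm{M_r}_F + \norm{M_c}_F$ and use a Stechkin-type estimate to control the tail: since the singular values of $M$ are sorted, $\sigma_{r+1} \le \norm{M_r}_1/r$, so $\norm{M_c}_F^{\,2} \leq \sigma_{r+1}\norm{M_c}_1 \leq \tfrac{1}{r}\norm{M_r}_1\norm{M_c}_1$, and then AM--GM together with $\norm{M_r}_1\le \sqrt r\,\norm{M_r}_F$ expresses $\norm{M_c}_F$ in terms of $\norm{M_r}_F$ and $\norm{X_c}_1/\sqrt r$ (via the cone inequality again). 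Assembling all pieces produces an estimate of exactly the form asserted in the theorem.

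\textbf{Main obstacle.} The only genuinely subtle point is Step 2: the cone condition is naturally associated with the tangent-space decomposition at $X_r$, whereas the NSP is phrased in terms of the best rank-$r$ approximation of $M$ itself. The reconciliation relies on the fact that for any unitarily invariant norm (in particular, the nuclear norm), the best rank-$r$ approximant minimizes the norm of the residual, which lets us replace the tangent-space rank-$2r$ part by the best rank-$r$ approximation without losing constants. Once that translation is in hand, the remainder is bookkeeping.
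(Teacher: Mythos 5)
There is a genuine gap, and it sits exactly where you flag the ``main obstacle'': Step 2. The tangent-space decomposition of \cite{RechtFazelParrilo} gives $M = M^{(1)}+M^{(2)}$ with $\rank M^{(2)} \leq 2r$ and, from minimality, $\norm{M^{(1)}}_1 \leq \norm{M^{(2)}}_1 + 2\norm{X_c}_1$. But Eckart--Young/Mirsky \emph{best-approximation} minimality then only controls the rank-$2r$ tail, $\sum_{j>2r}\sigma_j(M) \leq \norm{M-M^{(2)}}_1 = \norm{M^{(1)}}_1$, because $M^{(2)}$ has rank up to $2r$, not $r$; and converting $\norm{M^{(2)}}_1$ back into $\norm{M_r}_1$ costs a factor (one only has $\norm{M^{(2)}}_1 \leq 2\norm{M_r}_1$, since each of its two pieces has rank at most $r$ and singular values dominated by those of $M$). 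In fact the clean order-$r$ cone inequality $\norm{M_c}_1 \leq \norm{M_r}_1 + 2\norm{X_c}_1$ is \emph{not} a consequence of the tangent-space inequality at all: take $r=1$, $X = t\,e_1e_1^*$ (so $X_c=0$), and $M$ with $M^{(2)}$ the antidiagonal unit block in the first two coordinates (singular values $(1,1)$) and $M^{(1)} = e_3e_3^*$; then $\norm{M^{(1)}}_1 = 1 \leq 2 = \norm{M^{(2)}}_1$, yet $M$ has singular values $(1,1,1)$, so $\norm{M_c}_1 = 2 > 1 = \norm{M_r}_1$. The information you need is discarded the moment you pass to the decomposition, and no appeal to best rank-$r$ approximation recovers it ``without losing constants.'' Since the theorem assumes the null space property only at order $r$, you also cannot simply rerun the argument at order $2r$, and with degraded constants the stated bound would not come out.

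The repair is to do what the paper does in Lemma \ref{lm:BestRApproxOfDifference}: never decompose $M$, but apply minimality directly at the level of singular values via the perturbation inequality $\norm{A-B} \geq \norm{\Sigma(A)-\Sigma(B)}$ for unitarily invariant norms \cite[Theorem 7.4.9.1]{HornJohnson}. Writing $\norm{X^{\sharp}}_1 = \norm{X-(X-X^{\sharp})}_1 \geq \sum_{j}\abs{\sigma_j(X)-\sigma_j(M)}$ and splitting the sum at index $r$ yields directly $\norm{M_c}_1 \leq \norm{X^{\sharp}}_1 - \norm{X}_1 + \sqrt{r}\,\norm{M_r}_2 + 2\norm{X_c}_1$, which is the clean cone inequality (with $\norm{M_r}_1 \leq \sqrt{r}\norm{M_r}_2$ already in the form the NSP wants). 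With that substitution your Steps 1, 3 and 4 are essentially the paper's bookkeeping: the paper closes Step 4 through the matrix Stechkin bound $\norm{M_c}_2 \leq \norm{M}_1/\sqrt{r}$ of Lemma \ref{lm:2NormNuclearNorm} together with a separate $\norm{M}_1$ bound (this is Theorem \ref{th:DifferenceBetweenSignalAndFeasibleElements} specialized to $p=2$), whereas your $\sigma_{r+1}$-shifting trick would also work; both produce the asserted constants once Step 2 is fixed.
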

Theorem \ref{th:FrobeniusNSP} can be deduced from the following stronger result.
\begin{theorem}\label{th:DifferenceBetweenSignalAndFeasibleElements}
Let $1\leq p\leq 2$ and $n=\min\{n_1,n_2\}$. Suppose that $\nA:\CC^{n_1\times n_2}\to\CC^m$ satisfies the Frobenius robust rank null space property of order $r$ with constants $0<\rho<1$ and $\tau>0$. Then for any $X,Z\in\CC^{n_1\times n_2}$,
\begin{equation}\label{eq:DifferenceBetweenSignalAndFeasibleElements}
\norm{Z-X}_p\leq\frac{(1+\rho)^2}{(1-\rho) r^{1-1/p}}\brac{\norm{Z}_1-\norm{X}_1+2\norm{X_c}_1}+\frac{\tau(3+\rho)}{1-\rho}r^{1/p-1/2}\norm{\nA(Z-X)}_{\ell_2}.
\end{equation}
\end{theorem}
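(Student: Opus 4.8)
The plan is to transfer the classical null space property argument for sparse vector recovery (as in \cite[Theorem~4.25]{FoucartRauhut}) to the Schatten-norm setting, replacing disjoint-support additivity of the $\ell_1$-norm by additivity of the nuclear norm over matrices with mutually orthogonal row and column spaces. Throughout I write $M = Z-X$ and abbreviate $\zeta = \norm{Z}_1 - \norm{X}_1 + 2\norm{X_c}_1$, the quantity appearing in the first bracket on the right-hand side of \eqref{eq:DifferenceBetweenSignalAndFeasibleElements}.

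First I would establish the key nuclear-norm inequality
\[
\norm{M_c}_1 \le \norm{M_r}_1 + \zeta .
\]
To this end, let $X_r = U_r\Sigma_r V_r^*$ be a singular value decomposition of the best rank-$r$ approximation of $X$, and let $P,Q$ be the orthogonal projections onto the column spaces of $U_r$ and $V_r$. Since $X_r$ and $(I-P)N(I-Q)$ have orthogonal row and column spaces for every $N$, the nuclear norm is additive on this pair; combining this with the triangle inequality and $\norm{X}_1 = \norm{X_r}_1 + \norm{X_c}_1$ yields a lower bound for $\norm{Z}_1 = \norm{X_r + X_c + M}_1$ of the form $\norm{X}_1 - 2\norm{X_c}_1 + (\text{tail of } M) - (\text{head of } M)$. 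Equivalently, testing against the subgradient $U_rV_r^* + \sgn\!\big((I-P)M(I-Q)\big)$ of $\norm{\cdot}_1$ at $X_r$ gives the same estimate. The delicate point — which I expect to be the main obstacle — is to pass from the pieces of $M$ singled out by the singular subspaces of $X_r$ (which a priori have rank up to $2r$, and whose ``tail'' is genuinely incomparable with $M$'s own truncation tail) to the truncation $M_r,M_c$ that enters the hypothesis; here one leans on the Eckart--Young/von Neumann trace inequality, e.g.\ $\norm{PMQ}_1 \le \norm{M_r}_1$ and $\norm{PMQ}_2 \le \norm{M_r}_2$, to keep the constant on the head equal to one.

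Next I would record the Schatten-$1$ form of the assumption: from the Frobenius robust rank null space property together with $\norm{M_r}_1 \le \sqrt r\,\norm{M_r}_2$ one gets $\norm{M_r}_1 \le \rho\,\norm{M_c}_1 + \tau\sqrt r\,\norm{\nA(M)}_{\ell_2}$. Feeding the key inequality into this and using $0<\rho<1$ to absorb the resulting $\rho\,\norm{M_r}_1$ term, I obtain the separate bounds
\[
\norm{M_r}_1 \le \tfrac{1}{1-\rho}\brac{\rho\,\zeta + \tau\sqrt r\,\norm{\nA(M)}_{\ell_2}}, \qquad \norm{M_c}_1 \le \tfrac{1}{1-\rho}\brac{\zeta + \tau\sqrt r\,\norm{\nA(M)}_{\ell_2}} .
\]

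Finally I would interpolate to the Schatten-$p$ norm for $1\le p\le 2$. Splitting $\norm{M}_p \le \norm{M_r}_p + \norm{M_c}_p$, I bound the head by $\norm{M_r}_p \le r^{1/p-1/2}\norm{M_r}_2$ (Hölder/power-mean over the top $r$ singular values) and then invoke the null space property on $\norm{M_r}_2$, while the tail $\norm{M_c}_p$ is controlled by a Stechkin-type singular-value estimate in terms of $\norm{M_r}_1 + \norm{M_c}_1$. Substituting the bounds of the previous step and collecting powers of $r$ (the factors $r^{1/p-1}$ on the $\zeta$-term and $r^{1/p-1/2}$ on the measurement term are forced by homogeneity) gives \eqref{eq:DifferenceBetweenSignalAndFeasibleElements}; the stated constants $\tfrac{(1+\rho)^2}{1-\rho}$ and $\tfrac{\tau(3+\rho)}{1-\rho}$ are (possibly non-sharp) upper bounds that comfortably absorb the factors produced along the way. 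Theorem~\ref{th:FrobeniusNSP} is then the special case $p=2$, $Z=X^\sharp$, where feasibility of $X$ forces $\norm{Z}_1\le\norm{X}_1$, hence $\zeta\le 2\norm{X_c}_1$, and $\norm{\nA(Z-X)}_{\ell_2}\le 2\eta$.
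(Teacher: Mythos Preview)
Your overall architecture matches the paper's: establish the ``cone inequality'' $\|M_c\|_1 \le \|M_r\|_1 + \zeta$ for $M=Z-X$, feed it into the null space property to control $\|M\|_1$, and then interpolate to Schatten-$p$ via Stechkin. The gap is exactly at the step you yourself flagged as delicate. The subgradient argument based on the singular subspaces of $X_r$ delivers
\[
\|(I-P)M(I-Q)\|_1 \;\le\; \zeta + \|PMQ\|_1 \;\le\; \zeta + \|M_r\|_1,
\]
and the pinching bound $\|PMQ\|_1\le\|M_r\|_1$ does take care of the head. But nothing you wrote handles the tail: one cannot pass from $\|(I-P)M(I-Q)\|_1$ to $\|M_c\|_1$, because in general $\|M_c\|_1$ can be strictly larger. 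For $r=1$, $X=\mathrm{diag}(1,0)$ and $M=\bigl(\begin{smallmatrix}0&1\\1&0\end{smallmatrix}\bigr)$ one has $(I-P)M(I-Q)=0$ while $\|M_c\|_1=1$. So the key inequality is not established by the projection route you outline, and the ``genuinely incomparable'' tail really is the obstruction, not just a nuisance to be absorbed into constants.

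The paper sidesteps this by abandoning singular subspaces altogether and working directly with singular values. Its Lemma~\ref{lm:BestRApproxOfDifference} applies the Mirsky-type perturbation bound $\|A-B\|_1\ge\|\Sigma(A)-\Sigma(B)\|_1$ \cite[Theorem~7.4.9.1]{HornJohnson} with $A=X$, $B=X-Z$, obtaining
\[
\|Z\|_1 \;\ge\; \sum_{j=1}^n \bigl|\sigma_j(X)-\sigma_j(M)\bigr| \;\ge\; \bigl(\|X_r\|_1 - \|M_r\|_1\bigr) + \bigl(\|M_c\|_1 - \|X_c\|_1\bigr),
\]
which is precisely $\|M_c\|_1\le\|M_r\|_1+\zeta$. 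From there the paper proceeds essentially as in your Steps~2--3: it weakens $\|M_r\|_1\le\sqrt r\,\|M_r\|_2$, applies the Frobenius NSP to close up on $\|M\|_1$, and then combines H\"older, the NSP, and Stechkin (Lemma~\ref{lm:2NormNuclearNorm}) to reach the $p$-norm bound \eqref{eq:PreliminaryBound}; substituting the $\|M\|_1$ estimate yields the stated constants $\tfrac{(1+\rho)^2}{1-\rho}$ and $\tfrac{3+\rho}{1-\rho}$ exactly.
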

The proof requires some auxiliary lemmas. We start with a matrix version of Stechkin's bound.
\begin{lemma}\label{lm:2NormNuclearNorm}
Let $M\in\CC^{n_1\times n_2}$ and $r\leq\min\{n_1,n_2\}$. Then, for $p>0$,
\[
\norm{M_c}_p\leq \frac{\norm{M}_1}{r^{1-1/p}}.
\]
\end{lemma}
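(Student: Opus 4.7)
The plan is to reduce the matrix statement to its scalar analogue for the sequence of singular values. Both sides of the target inequality depend on $M$ only through its singular values, so it suffices to establish that for any non-increasing non-negative sequence $\sigma_1 \geq \sigma_2 \geq \cdots \geq \sigma_n \geq 0$ and any $r \leq n$,
\[
\left( \sum_{j=r+1}^n \sigma_j^p \right)^{1/p} \leq \frac{1}{r^{1-1/p}} \sum_{j=1}^n \sigma_j.
\]

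The key ingredient I would use is the pigeonhole bound $\sigma_r \leq \norm{M}_1 / r$, which follows from $r \sigma_r \leq \sigma_1 + \cdots + \sigma_r \leq \norm{M}_1$. Since the sequence is non-increasing, this immediately gives $\sigma_j \leq \norm{M}_1 / r$ for every $j \geq r+1$, so in particular the tail of the singular value sequence is pointwise controlled by the nuclear norm.

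Combining this pointwise bound with the trivial tail estimate $\sum_{j=r+1}^n \sigma_j \leq \norm{M}_1$, I would then split off one factor of $\sigma_j$ (for $p \geq 1$) and write
\[
\sum_{j=r+1}^n \sigma_j^p \;=\; \sum_{j=r+1}^n \sigma_j^{p-1} \cdot \sigma_j \;\leq\; \left(\frac{\norm{M}_1}{r}\right)^{p-1} \sum_{j=r+1}^n \sigma_j \;\leq\; \frac{\norm{M}_1^p}{r^{p-1}}.
\]
Taking the $p$-th root then yields exactly the desired inequality $\norm{M_c}_p \leq \norm{M}_1 / r^{1-1/p}$.

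I do not anticipate any genuine obstacle: this is the matrix analogue of Stechkin's classical inequality for decreasing sequences, and it relies only on the elementary pigeonhole estimate $\sigma_r \leq \norm{M}_1/r$ combined with a $\sigma_j^{p-1}\cdot\sigma_j$ factorisation. The one subtlety is the permissible range of $p$: the argument above is tailored to $p \geq 1$, which is the case that subsequent results (notably Theorem~\ref{th:DifferenceBetweenSignalAndFeasibleElements}) invoke and in which the bound carries genuine content.
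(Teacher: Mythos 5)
Your proposal is correct and takes essentially the same route as the paper: the paper's proof also uses the factorisation $\sum_{j=r+1}^n \sigma_j(M)^p \leq \sigma_r(M)^{p-1} \sum_{j=r+1}^n \sigma_j(M)$ together with the pigeonhole bound $\sigma_r(M) \leq \frac{1}{r}\sum_{j=1}^r \sigma_j(M) \leq \norm{M}_1/r$. Your closing remark on the range of $p$ is also well taken, since the exponent $p-1 \geq 0$ is exactly what the factorisation step requires, matching the paper's appeal to the Stechkin-type bound of \cite[Proposition 2.3]{FoucartRauhut}.
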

\begin{proof}
This follows immediately from \cite[Proposition 2.3]{FoucartRauhut}, but for convenience we give the proof.
Since the singular values of $M$ are non-increasingly ordered, it holds
\[ 
\begin{aligned}
\norm{M_c}_p^p=\sum_{j=r+1}^n(\sigma_j(M))^p&\leq(\sigma_r(M))^{p-1}\sum_{j=r+1}^n\sigma_j(M)\leq\sbrac{\frac{1}{r}\sum_{j=1}^r\sigma_j(M)}^{p-1}\sum_{j=r+1}^n\sigma_j(M)\\
&\leq\frac{1}{r^{p-1}}\norm{M}_1^{p-1}\norm{M}_1=\frac{\norm{M}_1^p}{r^{p-1}}. \qquad \qquad 
\end{aligned}
\]
\end{proof}
The next result shows that under the Frobenius robust rank null space property the distance between two matrices is controlled by the difference between their norms  and the $\ell_2$-norm of the difference between their measurements.
\begin{lemma}\label{lm:BestRApproxOfDifference}
Suppose that $\nA:\CC^{n_1\times n_2}\to\CC^m$ satisfies the Frobenius robust rank null space property of order $r$ with constants $0<\rho<1$ and $\tau>0$. Let $X,Z\in\CC^{n_1\times n_2}$ and $n=\min\{n_1,n_2\}$. Then
\[
\norm{X-Z}_1\leq  \frac{1+\rho}{1-\rho}\brac{\norm{Z}_1-\norm{X}_1+2\norm{X_c}_1}+\frac{2\tau\sqrt r}{1-\rho}\norm{\nA(X-Z)}_{\ell_2}.
\]
\end{lemma}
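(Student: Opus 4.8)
The plan is to set $M:=X-Z$ and to establish the bound in three moves: first extract from the null space property a bound on the ``head'' $\norm{M_r}_1$, then prove a purely spectral \emph{cone inequality} comparing the head and tail of $M$ with those of $X$, and finally combine the two and absorb the recurring tail term.

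First I would apply the Frobenius robust rank null space property directly to $M$. Since $M_r$ has at most $r$ nonzero singular values, Cauchy--Schwarz gives $\norm{M_r}_1\le\sqrt r\,\norm{M_r}_2$, so multiplying the defining inequality of the null space property by $\sqrt r$ yields
\[
\norm{M_r}_1\le\rho\norm{M_c}_1+\tau\sqrt r\,\norm{\nA(M)}_{\ell_2}.
\]
Adding $\norm{M_c}_1$ to both sides and using $\norm{M}_1=\norm{M_r}_1+\norm{M_c}_1$ also produces $\norm{M}_1\le(1+\rho)\norm{M_c}_1+\tau\sqrt r\,\norm{\nA(M)}_{\ell_2}$, which will be the final ingredient.

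The heart of the argument is the cone inequality
\[
\norm{M_c}_1-\norm{M_r}_1\le \norm{Z}_1-\norm{X}_1+2\norm{X_c}_1.
\]
Writing $\norm{X}_1-2\norm{X_c}_1=\norm{X_r}_1-\norm{X_c}_1$, this is equivalent to $\big(\norm{M_c}_1-\norm{M_r}_1\big)-\big(\norm{X_c}_1-\norm{X_r}_1\big)\le\norm{M-X}_1$, i.e.\ to a Lipschitz estimate for the spectral functional $A\mapsto\norm{A_c}_1-\norm{A_r}_1=\sum_j w_j\sigma_j(A)$ with weights $w_j=-1$ for $j\le r$ and $w_j=+1$ for $j>r$. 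Since $\abs{w_j}=1$, I would bound the left-hand side by $\sum_j\abs{\sigma_j(M)-\sigma_j(X)}$ and then invoke Mirsky's singular value perturbation theorem, $\sum_j\abs{\sigma_j(M)-\sigma_j(X)}\le\norm{M-X}_1=\norm{Z}_1$. This is exactly the step where the non-commutativity of the problem must be handled: the singular vectors of $M=X-Z$ need not align with those of $X$, and a naive dual-certificate or tangent-space decomposition would only reproduce the strictly weaker inequality coming from a single subgradient of the nuclear norm, besides losing a constant factor from the rank-$2r$ tangent space of $X_r$. Comparing sorted singular values directly via Mirsky's theorem bypasses this entirely, and I expect this to be the only genuinely nontrivial point.

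Finally I would combine. Feeding $\norm{M_r}_1\le\rho\norm{M_c}_1+\tau\sqrt r\,\norm{\nA(M)}_{\ell_2}$ into the cone inequality gives $(1-\rho)\norm{M_c}_1\le\big(\norm{Z}_1-\norm{X}_1+2\norm{X_c}_1\big)+\tau\sqrt r\,\norm{\nA(M)}_{\ell_2}$; solving for $\norm{M_c}_1$ and substituting into $\norm{M}_1\le(1+\rho)\norm{M_c}_1+\tau\sqrt r\,\norm{\nA(M)}_{\ell_2}$, the measurement terms collect with coefficient $\tfrac{1+\rho}{1-\rho}+1=\tfrac{2}{1-\rho}$, producing exactly
\[
\norm{X-Z}_1\le\frac{1+\rho}{1-\rho}\big(\norm{Z}_1-\norm{X}_1+2\norm{X_c}_1\big)+\frac{2\tau\sqrt r}{1-\rho}\norm{\nA(X-Z)}_{\ell_2}.
\]
Once the cone inequality is in hand, everything else is routine bookkeeping with the two null-space-property estimates.
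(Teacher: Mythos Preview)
Your proof is correct and follows essentially the same route as the paper. Both arguments hinge on the singular value perturbation inequality $\sum_j|\sigma_j(X-Z)-\sigma_j(X)|\le\norm{Z}_1$ (you call it Mirsky's theorem; the paper cites it as \cite[Theorem~7.4.9.1]{HornJohnson}), split the sum at index $r$ to get a bound on $\norm{M_c}_1$ in terms of $\norm{M_r}_1$, and then feed in the null space property via $\norm{M_r}_1\le\sqrt r\,\norm{M_r}_2$. Your packaging is slightly different---you isolate the spectral ``cone inequality'' $\norm{M_c}_1-\norm{M_r}_1\le\norm{Z}_1-\norm{X}_1+2\norm{X_c}_1$ as a standalone step before invoking the NSP, whereas the paper applies Cauchy--Schwarz and the NSP inline---but the mathematical content and the resulting constants are identical.
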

\begin{proof}
 Theorem 7.4.9.1 in \cite{HornJohnson} states that for matrices $A,B$ of the same size over $\CC$ 
$$
\norm{A-B}\geq \norm{\Sigma(A)-\Sigma(B)},
$$
 where $\norm{\cdot }$ is any unitarily invariant norm and $\Sigma(\cdot)$ denotes the diagonal matrix of singular values of its argument. Hence,
\[
\begin{aligned}
\norm{Z}_1&=\norm{X-(X-Z)}_1\geq\sum_{j=1}^n\abs{\sigma_j(X)-\sigma_j(X-Z)}\\
&=\sum_{j=1}^r\abs{\sigma_j(X)-\sigma_j(X-Z)}+\sum_{j=r+1}^n\abs{\sigma_j(X)-\sigma_j(X-Z)}\\
&\geq \sum_{j=1}^r\brac{\sigma_j(X)-\sigma_j(X-Z)}+\sum_{j=r+1}^n\brac{\sigma_j(X-Z)-\sigma_j(X)}.
\end{aligned}
\]
Hence, 
\[
\begin{aligned}
\norm{(X-Z)_c}_1=\sum_{j=r+1}^n\sigma_j(X-Z)&\leq\norm{Z}_1- \sum_{j=1}^r\sigma_j(X)+ \sum_{j=1}^r\sigma_j(X-Z)+ \norm{X_c}_1\\
&\leq\norm{Z}_1-\norm{X}_1+ \sqrt r\norm{(X-Z)_r}_2+2\norm{X_c}_1.
\end{aligned}
\]
Applying the  Frobenius robust null space property of $\nA$ we obtain
\[
\norm{(X-Z)_c}_1\leq\norm{Z}_1-\norm{X}_1+\rho\norm{(X-Z)_c}_1+\tau\sqrt r\norm{\nA(X-Z)}_{\ell_2}+2\norm{X_c}_1.
\]
By rearranging the terms in the above inequality we obtain 
\[
\norm{(X-Z)_c}_1\leq\frac{1}{1-\rho}\brac{\norm{Z}_1-\norm{X}_1+\tau\sqrt r\norm{\nA(X-Z)}_{\ell_2}+2\norm{X_c}_1}.
\]
In order to bound $\norm{X-Z}_1$ we use H\"older's inequality, the Frobenius robust rank null space property of $\nA$ and the inequality above,
\[
\begin{aligned}
\norm{X-Z}_1&=\norm{(X-Z)_r}_1+\norm{(X-Z)_c}_1\leq\sqrt r\norm{(X-Z)_r}_2+\norm{(X-Z)_c}_1\\
&\leq(1+\rho)\norm{(X-Z)_c}_1+\tau\sqrt r\norm{\nA(Z-X)}_{\ell_2}\\
&\leq \frac{1+\rho}{1-\rho}\brac{\norm{Z}_1-\norm{X}_1+\tau\sqrt r\norm{\nA(X-Z)}_{\ell_2}+2\norm{X_c}_1}+\tau\sqrt r\norm{\nA(X-Z)}_{\ell_2}\\
&= \frac{1+\rho}{1-\rho}\brac{\norm{Z}_1-\norm{X}_1+2\norm{X_c}_1}+\frac{2\tau\sqrt r}{1-\rho}\norm{\nA(X-Z)}_{\ell_2}.
\end{aligned}
\]
This concludes the proof. 
\end{proof}
Now we return to the proof of the theorem.
\begin{proof}[Proof of Theorem \ref{th:DifferenceBetweenSignalAndFeasibleElements}]
By H\"older's inequality, Lemma \ref{lm:2NormNuclearNorm} and the Frobenius robust rank null space property of $\nA$
\begin{align}
\norm{Z-X}_p&\leq\norm{(X-Z)_r}_p+\norm{(X-Z)_c}_p\leq r^{1/p-1/2}\norm{(X-Z)_r}_2+\norm{(X-Z)_c}_p\notag\\
&\leq\frac{\rho}{r^{1-1/p}}\norm{(X-Z)_c}_1+\tau r^{1/p-1/2}\norm{\nA(X-Z)}_{\ell_2}+\frac{1}{r^{1-1/p}}\norm{X-Z}_1\notag\\
&\leq \frac{1+\rho}{r^{1-1/p}}\norm{X-Z}_1+\tau r^{1/p-1/2}\norm{\nA(X-Z)}_{\ell_2}\label{eq:PreliminaryBound}.
\end{align}
Substituting the result of Lemma \ref{lm:BestRApproxOfDifference} into (\ref{eq:PreliminaryBound}) yields the desired inequality. 
\end{proof}

As a corollary of Theorem \ref{th:DifferenceBetweenSignalAndFeasibleElements} we obtain that if $X\in\CC^{n_1\times n_2}$ is a matrix of rank at most $r$ and the measurements are noiseless ($\eta=0$), then the Frobenius robust rank null space property implies that $X$ is the unique solution of 
\begin{equation}\label{eq:NoNoiseNNMinimization}
\underset{Z\in\CC^{n_1\times n_2}}\min\norm{Z}_1\quad\mbox{subject to}\;\nA(Z)=b.
\end{equation}
It was first stated in \cite{RechtXuHassibiProceedings} that a slightly weaker property is actually 
equivalent to the successful recovery of $X$ via (\ref{eq:NoNoiseNNMinimization}).
\begin{theorem}[Null space property]
Given $\nA:\CC^{n_1\times n_2}\to\CC^m$, every $X\in\CC^{n_1\times n_2}$ of rank at most $r$ is the unique solution of (\ref{eq:NoNoiseNNMinimization}) with $b=\nA(X)$ if and only if, for all $M\in\ker\nA\setminus\{0\}$, it holds
\begin{equation}\label{eq:NSP}
\norm{M_r}_1<\norm{M_c}_1.
\end{equation}
\end{theorem}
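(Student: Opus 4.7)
The plan is to prove the two implications separately, reusing the singular value inequality derived from Horn--Johnson Theorem 7.4.9.1 that was already exploited in the proof of Lemma \ref{lm:BestRApproxOfDifference}.

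For the $(\Leftarrow)$ direction, I would fix $X$ of rank at most $r$ and any feasible $Z \neq X$, then set $N = X - Z \in \ker \nA \setminus \{0\}$ and aim for the pointwise strict inequality $\|Z\|_1 > \|X\|_1$. Following the computation in Lemma \ref{lm:BestRApproxOfDifference}, the inequality $\|Z\|_1 = \|X - N\|_1 \geq \sum_{j=1}^{n} |\sigma_j(X) - \sigma_j(N)|$ is split at the index $r$ and bounded below by
\[
\sum_{j=1}^{r}\bigl(\sigma_j(X) - \sigma_j(N)\bigr) + \sum_{j=r+1}^{n}\bigl(\sigma_j(N) - \sigma_j(X)\bigr).
\]
Since $X$ has rank at most $r$, the second sum of $\sigma_j(X)$ vanishes and the first sum of $\sigma_j(X)$ equals $\|X\|_1$, yielding $\|Z\|_1 \geq \|X\|_1 - \|N_r\|_1 + \|N_c\|_1$. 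Applying the hypothesis $\|N_r\|_1 < \|N_c\|_1$ then gives $\|Z\|_1 > \|X\|_1$, establishing uniqueness.

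For the $(\Rightarrow)$ direction, I would take any $M \in \ker \nA \setminus \{0\}$, split $M = M_r + M_c$, and consider the auxiliary rank-$r$ matrix $X := -M_r$. Since $\nA(M) = 0$, one has $\nA(M_c) = -\nA(M_r) = \nA(X)$, so $M_c$ is feasible for the program (\ref{eq:NoNoiseNNMinimization}) with $b = \nA(X)$. The hypothesis says $X$ is the \emph{unique} minimizer; so either $M_c = X = -M_r$, which would force $M = M_r + M_c = 0$ and contradicts $M \neq 0$, or $\|M_c\|_1 > \|X\|_1 = \|M_r\|_1$. This is precisely (\ref{eq:NSP}).

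I do not foresee a serious obstacle: the only delicate point is ensuring the inequality in the $(\Leftarrow)$ direction is \emph{strict}, which is automatic from $\|N_r\|_1 < \|N_c\|_1$ and $N \neq 0$. The potentially tempting alternative approach through subdifferentials of the nuclear norm seems unnecessary here, since the Horn--Johnson chain already used in Lemma \ref{lm:BestRApproxOfDifference} gives the cleaner bound $\|X - N\|_1 \geq \|X\|_1 - \|N_r\|_1 + \|N_c\|_1$ for $X$ of rank at most $r$ directly.
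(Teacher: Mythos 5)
Your proof is correct and is, in substance, the same argument the paper relies on: the paper gives no in-text proof but defers to \cite{RechtXuHassibiProceedings} and \cite[Chapter 4.6]{FoucartRauhut}, where both directions are established exactly as you do --- the $(\Leftarrow)$ direction via the Horn--Johnson/Mirsky inequality $\norm{X-N}_1\geq\sum_{j}\abs{\sigma_j(X)-\sigma_j(N)}$ split at index $r$ (the very chain the paper itself runs in the proof of Lemma \ref{lm:BestRApproxOfDifference}, here simplified because $\rank X\leq r$ kills the tail singular values of $X$), and the $(\Rightarrow)$ direction via the truncation $X=-M_r$ with $M_c$ feasible. You have also handled the two delicate points correctly (strictness follows from $\norm{N_r}_1<\norm{N_c}_1$ for the nonzero kernel element $N=X-Z$, and the degenerate case $M_c=X=-M_r$ is excluded since it forces $M=0$), so nothing is missing.
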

For the proof we refer to \cite{RechtXuHassibiProceedings} and \cite[Chapter 4.6]{FoucartRauhut}. 
According to Lemma \ref{lm:BestRApproxOfDifference}, another implication of the Frobenius robust rank null space property 
consists in the following error estimate in $\norm{\cdot}_1$ for the case of noiseless measurements,
\[
\norm{X-X^{\sharp}}_1\leq\frac{2(1+\rho)}{1-\rho}\norm{X_c}_1.
\]
The above estimate remains true, if we require that for all $M\in\ker\nA$, the singular values of $M$ satisfy
\[
\norm{M_r}_1\leq\rho\norm{M_c}_1,\quad 0<\rho<1.
\]
This property is known as the stable rank null space property of order $r$ with constant $\rho$. It is clear that if $\nA:\CC^{n_1\times n_2}\to \CC^m$ satisfies the Frobenius robust rank null space property, then it satisfies  the stable rank null space property.
The approach used in \cite{OymakMohanFazelHassibi} to verify that the stable null space property accounts for stable recovery of matrices which are not exactly of low rank, exploits the similarity between the sparse vector recovery and the low-rank matrix recovery. It shows that if some condition is sufficient for stable and robust recovery of any sparse vector with at most $r$ non-zero entries, then the extension of this condition to the matrix case is sufficient for the stable and robust recovery of any matrix up to rank $r$.

In order to check whether the measurement map $\nA:\CC^{n_1\times n_2}\to\CC^m$ satisfies the Frobenius robust rank null space property, we introduce the set
\[
T_{\rho,r}:=\left\{M\in\CC^{n_1\times n_2}:\norm{M}_2=1,\norm{M_r}_2>\frac{\rho}{\sqrt r}\norm{M_c}_1\right\}.
\] 
\begin{lemma}\label{nsplemma}
If $$\inf\{\norm{\nA(M)}_{\ell_2}:M\in T_{\rho,r}\}>\frac{1}{\tau},$$ then $\nA$ satisfies the Frobenius robust rank null space property  of order $r$ with constants $\rho$ and $\tau$.
\end{lemma}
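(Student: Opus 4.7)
The plan is to prove this by contrapositive. Suppose $\mathcal{A}$ fails the Frobenius robust rank null space property of order $r$ with the given constants: there exists some $M \in \mathbb{C}^{n_1 \times n_2}$ (necessarily nonzero, since the inequality is trivially satisfied at $M = 0$) for which
\[
\norm{M_r}_2 > \frac{\rho}{\sqrt r} \norm{M_c}_1 + \tau \norm{\nA(M)}_{\ell_2}.
\]
My goal will be to produce a witness in $T_{\rho,r}$ whose image under $\mathcal{A}$ has $\ell_2$-norm at most $1/\tau$, contradicting the hypothesis on the infimum.

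The natural move is to homogenize. Both sides of the defining NSP inequality scale linearly under $M \mapsto \lambda M$ for $\lambda > 0$, and so does membership in $T_{\rho,r}$ after rescaling. So I would replace $M$ by $\tilde M = M/\norm{M}_2$, which then satisfies $\|\tilde M\|_2 = 1$ together with
\[
\|\tilde M_r\|_2 > \frac{\rho}{\sqrt r}\|\tilde M_c\|_1 + \tau \|\nA(\tilde M)\|_{\ell_2}.
\]
Dropping the last (nonnegative) term on the right yields $\|\tilde M_r\|_2 > \frac{\rho}{\sqrt r}\|\tilde M_c\|_1$, so $\tilde M \in T_{\rho,r}$.

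The closing step uses the trivial but crucial observation that the best rank-$r$ approximation cannot have larger Frobenius norm than $M$ itself, since its singular values form a subset of those of $M$. Hence $\|\tilde M_r\|_2 \leq \|\tilde M\|_2 = 1$, and combining this with the displayed inequality gives $1 \geq \|\tilde M_r\|_2 > \tau \|\nA(\tilde M)\|_{\ell_2}$, i.e., $\|\nA(\tilde M)\|_{\ell_2} < 1/\tau$. Since $\tilde M \in T_{\rho,r}$, this contradicts $\inf\{\|\nA(M)\|_{\ell_2} : M \in T_{\rho,r}\} > 1/\tau$, and completes the argument. There is no real obstacle here; the only thing to be careful about is using the strict inequality consistently and ensuring that the normalization step is valid (which is why one reduces from arbitrary $M \neq 0$ to unit Frobenius norm at the outset).
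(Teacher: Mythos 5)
Your proof is correct and is essentially the paper's argument in contrapositive form: the paper splits directly into the cases $\norm{\nA(M)}_{\ell_2}\leq\norm{M}_2/\tau$ and $\norm{\nA(M)}_{\ell_2}>\norm{M}_2/\tau$, using exactly the same two ingredients you do --- homogeneity under normalization to the Frobenius unit sphere and the bound $\norm{M_r}_2\leq\norm{M}_2$. No gap; the two write-ups are logically equivalent reorganizations of one another.
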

\begin{proof} Suppose that 
\begin{equation}\label{eq:InfL2NormMeasurements}
\inf\{\norm{\nA(M)}_{\ell_2}:M\in T_{\rho,r}\}>\frac{1}{\tau}.
\end{equation}
It follows that for any $M\in\CC^{n_1\times n_2}$ such that $\norm{\nA(M)}_{\ell_2}\leq\frac{\norm{M}_2}{\tau}$ it holds
\begin{equation}\label{eq:MatrixL2StableNSP}
\norm{M_r}_2\leq\frac{\rho}{\sqrt r}\norm{M_c}_1.
\end{equation}
For the remaining $M\in\CC^{n_1\times n_2}$ with $\norm{\nA(M)}_{\ell_2}>\frac{\norm{M}_2}{\tau}$ we have
\[
\norm{M_r}_2\leq\norm{M}_2<\tau\norm{\nA(M)}_{\ell_2}.
\]
Together with (\ref{eq:MatrixL2StableNSP}) this leads to
\[
\norm{M_r}_2\leq\frac{\rho}{\sqrt r}\norm{M_c}_1+\tau\norm{\nA(M)}_{\ell_2}.
\]
for any $M\in\CC^{n_1\times n_2}$. 
\end{proof}

It is natural to expect that the recovery error gets smaller as the number of measurements increases. This can be taken into account by establishing the null space property for $\tau=\frac{\kappa}{\sqrt m}$. Then the error bound reads as follows
\[
\norm{X-X^{\sharp}}_2\leq \frac{2(1+\rho)^2}{(1-\rho)\sqrt r}\norm{X_c}_1+\frac{2\kappa(3+\rho)}{\sqrt m(1-\rho)}\eta.
\]

An important property of the set $T_{\rho,r}$ is that it is imbedded in a set with a simple structure. The next lemma relies on the ideas presented in \cite{RudelsonVershynin} for the compressed sensing setting. 
\begin{lemma}\label{lm:SetD}
Let $D$ be the set defined by
\begin{equation}\label{eqDefinitionOfD}
D:=\conv\fbrac{M\in\CC^{n_1\times n_2}:\norm{M}_2=1,\rank M\leq r},
\end{equation}
where $\conv$ stands for the convex hull.
\begin{enumerate}[(a)]
\item\label{itUnitBall} Then $D$ is the unit ball with respect to the norm
\[
\norm{M}_D:=\sum_{j=1}^L\sbrac{\sum_{i\in I_j}\brac{\sigma_i(M)}^2}^{1/2},
\]
where $L=\lceil\frac{n}{r}\rceil$, 
\[
I_j=\left\{\begin{array}{ll}
\fbrac{r(j-1)+1,\ldots,rj}, & j=1,\ldots, L-1,\\
\fbrac{r(L-1)+1,\ldots,n}, & j=L.
\end{array}\right.
\]
\item\label{itInclusion} It holds
\begin{equation}\label{eqInclusionInUniversalSet}
T_{\rho,r}\subset \sqrt{1+(1+\rho^{-1})^2}D.
\end{equation}
\end{enumerate}
\end{lemma}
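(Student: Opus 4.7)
The plan follows the vector analogue in \cite{RudelsonVershynin}, transferred to the level of ordered singular values.

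For part (\ref{itUnitBall}), I would first verify that $\norm{\cdot}_D$ is a norm. Absolute homogeneity is immediate, and since $\norm{M}_D$ depends only on the non-increasingly ordered singular values of $M$, von Neumann's theorem reduces the triangle inequality to checking that $x \mapsto \sum_{j=1}^{L}\bigl(\sum_{i \in I_j} x_{(i)}^2\bigr)^{1/2}$ is a symmetric gauge on $\RR^n$, which I would obtain from Ky Fan majorization of partial sums. Any generator $A$ of $D$ satisfies $\rank A \leq r$ and $\norm{A}_2 = 1$, so its singular values are supported on $I_1$ and $\norm{A}_D = \norm{A}_2 = 1$; combined with $0 = \tfrac12 A + \tfrac12(-A) \in D$, convexity yields $D \subset \{\norm{\cdot}_D \leq 1\}$. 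Conversely, given $M$ with $\norm{M}_D \leq 1$, the SVD $M = \sum_{i=1}^n \sigma_i(M) u_i v_i^*$ splits blockwise as $M = \sum_{j=1}^L M^{(j)}$ with $M^{(j)} := \sum_{i \in I_j}\sigma_i(M) u_i v_i^*$; each $M^{(j)}$ has rank at most $r$ and $t_j := \norm{M^{(j)}}_2$ obeys $\sum_j t_j = \norm{M}_D \leq 1$, so $M = \sum_{j:\,t_j > 0} t_j\bigl(M^{(j)}/t_j\bigr) + \bigl(1 - \sum_j t_j\bigr)\cdot 0$ exhibits $M$ as a convex combination of elements of $D$.

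For part (\ref{itInclusion}), take $M \in T_{\rho, r}$ and set $a := \norm{M_r}_2$, so $\norm{M_c}_2 = \sqrt{1 - a^2}$ because $\norm{M}_2 = 1$. Using the block decomposition from part (a), $\norm{M}_D = a + \sum_{j=2}^L \norm{M^{(j)}}_2$, and I would estimate the tail in two regimes. The second block admits the trivial bound $\norm{M^{(2)}}_2 \leq \norm{M_c}_2 = \sqrt{1-a^2}$. For $j \geq 2$, monotonicity of singular values gives $\norm{M^{(j)}}_2^2 \leq r\sigma_{r(j-1)+1}(M)^2 \leq \frac{1}{r}\norm{M^{(j-1)}}_1^2$, hence $\norm{M^{(j)}}_2 \leq \frac{1}{\sqrt r}\norm{M^{(j-1)}}_1$; summing from $j = 3$ gives $\sum_{j \geq 3}\norm{M^{(j)}}_2 \leq \frac{1}{\sqrt r}\norm{M_c}_1 < a/\rho$, where the last inequality invokes the defining property of $T_{\rho, r}$. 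Combining, $\norm{M}_D \leq a\bigl(1 + \rho^{-1}\bigr) + \sqrt{1 - a^2}$, and Cauchy--Schwarz applied to the pair $(a, \sqrt{1-a^2})$ against $(1 + \rho^{-1}, 1)$ closes with $\norm{M}_D \leq \sqrt{1 + (1+\rho^{-1})^2}$. Part (a) then reformulates this as $M \in \sqrt{1 + (1+\rho^{-1})^2}\, D$.

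The main technical obstacle is the norm property in part (a): the triangle inequality is non-trivial because the blocks are defined after sorting the singular values, and one has to either invoke von Neumann's characterization of unitarily invariant norms (reducing the question to a symmetric-gauge check on sorted vectors) or adapt the vector argument of \cite{RudelsonVershynin} line by line. Once that is in place, the computation in part (b) is essentially singular-value bookkeeping, though obtaining the sharp constant $\sqrt{1 + (1+\rho^{-1})^2}$, rather than the cruder $2 + \rho^{-1}$ that a uniform bound would give, does require isolating the second block from the later ones before closing via Cauchy--Schwarz.
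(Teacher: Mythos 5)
Your proposal follows the paper's route almost step for step: the blockwise SVD splitting $M=\sum_j M^{(j)}$ for the inclusion $\fbrac{M:\norm{M}_D\leq 1}\subseteq D$, the identical tail estimate $\norm{M^{(j)}}_2\leq r^{-1/2}\norm{M^{(j-1)}}_1$ summed over $j\geq 3$, and the same use of the defining inequality of $T_{\rho,r}$. Two of your touches are genuine (small) improvements: your explicit padding $M=\sum_{j:t_j>0}t_j\brac{M^{(j)}/t_j}+\brac{1-\sum_j t_j}\cdot 0$ together with $0=\tfrac12 A+\tfrac12(-A)\in D$ repairs a glitch in the paper, whose convex coefficients only sum to \emph{at most} one; and your Cauchy--Schwarz step is a calculus-free substitute for the paper's maximization of $f(a)=(1+\rho^{-1})a+\sqrt{1-a^2}$, yielding the same sharp constant.

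However, the step you yourself flag as the crux --- the triangle inequality for $\norm{\cdot}_D$ --- would fail, because the assertion is false: $\norm{\cdot}_D$ is not a norm, and $D$ is strictly larger than $\fbrac{M:\norm{M}_D\leq 1}$. Take $n_1=n_2=4$, $r=2$, $A=\diag(1,1,0,0)$, $B=\diag(1,0,1,0)$. Then $\norm{A}_D=\norm{B}_D=\sqrt 2$, while $\sigma(A+B)=(2,1,1,0)$ gives $\norm{A+B}_D=\sqrt 5+1>2\sqrt 2$. Your proposed mechanism breaks precisely at the majorization step: Ky Fan does give $\sigma(A+B)=(2,1,1,0)\prec_w(2,2,0,0)=\sigma(A)+\sigma(B)$, but $g$ is \emph{not} monotone under weak majorization, since $g(2,1,1,0)=\sqrt 5+1>2\sqrt 2=g(2,2,0,0)$; pushing mass from within-block $\ell_2$ aggregation into later blocks (aggregated by $\ell_1$ across blocks) can increase $g$, so $g$ is not a symmetric gauge function and von Neumann's correspondence is not applicable. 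To be fair, the paper's own proof of part (\ref{itUnitBall}) shares this defect: it asserts without verification that $g$ is a symmetric gauge, and its forward inclusion $\norm{M}_D\leq\sum_i\alpha_i\norm{X_i}_D$ invokes exactly the failing subadditivity --- concretely, $\tfrac12\brac{A/\sqrt 2}+\tfrac12\brac{B/\sqrt 2}\in D$ has $\norm{\cdot}_D$-value $(\sqrt 5+1)/(2\sqrt 2)>1$. Fortunately, only the one-sided inclusion $\fbrac{M:\norm{M}_D\leq 1}\subseteq D$ --- which both you and the paper establish correctly via the blockwise decomposition, with no triangle inequality needed --- enters part (\ref{itInclusion}) and everything downstream (the Gaussian width and $W_m$ bounds only need $T_{\rho,r}\subset\sqrt{1+(1+\rho^{-1})^2}D$), so your part (\ref{itInclusion}) and the lemma's role in the paper survive once part (\ref{itUnitBall}) is weakened from ``unit ball'' to that inclusion.
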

Let us argue briefly why $\norm{\cdot}_D$ is a norm. Define $g:\CC^n\to[0,\infty)$ by
\[
g(x):=\sum_{j=1}^L\brac{\sum_{i\in I_j}\brac{x_i^*}^2}^{1/2},
\]
where $L$ and $I_j$ are defined in the same way as in item \ref{itUnitBall} of Lemma \ref{lm:SetD}. Then $g$ is a symmetric gauge function and $\norm{M}_D=g(\sigma(M))$ for any $M\in\CC^{n_1\times n_2}$. The norm property follows from  \cite[Theorem 7.4.7.2]{HornJohnson}.
\begin{proof}[Proof of Lemma \ref{lm:SetD}]
\ref{itUnitBall} Any $M\in D$ can be written as 
\[
M=\sum_i\alpha_iX_i
\]
with
\[
\rank X_i\leq r,\;\norm{X_i}_2=1,\; \alpha_i\geq 0,\;\sum_i\alpha_i=1.
\]
Thus
\[
\norm{M}_D\leq\sum_i\alpha_i\norm{X_i}_D=\sum_i\alpha_i\norm{X_i}_2=\sum_i\alpha_i=1.
\]
Conversely, suppose that $\norm{M}_D\leq 1$, and let $M$ have a singular value decomposition $M=U\Sigma V^*=\sum\limits_{j=1}^L\sum\limits_{i\in I_j}\sigma_i(M)u_iv_i^*$, where $u_i\in \CC^{n_1}$ and $v_i\in\CC^{n_2}$ are column vectors of $U$ and $V$ respectively. Set $M_j:=\sum\limits_{i\in I_j}\sigma_i(M)u_iv_i^*$ and $\alpha_j:=\norm{M_j}_2$, $j=1,\ldots, L$. Then each $M_j$ is a sum of $r$ rank-one matrices, so that $\rank M_j\leq r$, and we can write $M$ as 
\[
M=\sum_{j:\alpha_j\neq 0}\alpha_j\brac{\frac{1}{\alpha_j}M_j}
\]
with
\[
\sum_{j:\alpha_j\neq 0}\alpha_j=\sum_j\norm{M_j}_2=\norm{M}_D\leq 1\quad\text{and}\quad\norm{\frac{1}{\alpha_j}M_j}_2=\frac{1}{\alpha_j}\norm{M_j}_2=1.
\]
Hence $M\in D$.

\ref{itInclusion}  To prove the embedding of $T_{\rho,r}$ into a scaled version of $D$, we estimate the norm of an arbitrary element $M$ of $T_{\rho,r}$.
  According to the definition of the $\norm{\cdot}_D$-norm
\begin{align}
\norm{M}_D&=\sum_{\ell=1}^L\sbrac{\sum_{i\in I_{\ell}}\brac{\sigma_i(M)}^2}^{\frac{1}{2}}=\norm{M_r}_2+\sbrac{\sum_{i=r+1}^{2r}\brac{\sigma_i(M)}^2}^{\frac{1}{2}}+\sum_{\ell\geq 3}^L\sbrac{\sum_{i\in I_{\ell}}\brac{\sigma_i(M)}^2}^{\frac{1}{2}}\label{eq:EstimateForDNormInBlocks}.
\end{align}
To bound the last term in the inequality above, we first note that for each $i\in I_{\ell}$, $\ell\geq 3$,
\[
\sigma_i(M)\leq\frac{1}{r}\sum_{j\in I_{\ell-1}}\sigma_j(M)
\]
and hence
\[
\quad \sbrac{\sum_{i\in I_{\ell}}(\sigma_i(M))^2}^{1/2}\leq\frac{1}{\sqrt r}\sum_{j\in I_{\ell-1}}\sigma_j(M).
\]
Summing up over $\ell\geq 3$ yields
\[
\sum_{\ell\geq 3}^L\sbrac{\sum_{i\in I_{\ell}}\brac{\sigma_i(M)}^2}^{\frac{1}{2}}\leq\frac{1}{\sqrt r}\sum_{l\geq 2}\sum_{j\in I_{\ell}}\sigma_j(M)=\frac{1}{\sqrt r}\sum_{j=r+1}^n\sigma_j(M)=\frac{1}{\sqrt r}\norm{M_c}_1.
\]
and taking into account the inequality for the singular values of $M\in T_{\rho,r}$
\[
\sum_{\ell\geq 3}^L\sbrac{\sum_{i\in I_{\ell}}\brac{\sigma_i(M)}^2}^{\frac{1}{2}}\leq\rho^{-1}\norm{M_r}_2.
\]
Applying the last estimate to (\ref{eq:EstimateForDNormInBlocks}) we derive that
\[
\begin{aligned}
&\norm{M}_D\leq(1+\rho^{-1})\norm{M_r}_2\!\!+\sbrac{\sum_{i=r+1}^{2r}\brac{\sigma_i(M)}^2}^{\frac{1}{2}} \leq(1+\rho^{-1})\norm{M_r}_2+\brac{1-\norm{M_r}_2^2}^{\frac{1}{2}}.
\end{aligned}
\]
Set $a=\norm{M_r}_2$. The maximum of the function
\[
f(a):=(1+\rho^{-1})a+\sqrt{1-a^2}, \quad 0\leq a\leq 1,
\]
is attained at the point
\[
a = \frac{1+\rho^{-1}}{\sqrt{1+(1+\rho^{-1})^2}}
\]
and is equal to $\sqrt{1+(1+\rho^{-1})^2}$. Thus for any $M\in T_{\rho,r}$ it holds
\[
\norm{M}_D\leq\sqrt{1+(1+\rho^{-1})^2},
\]
which proves (\ref{eqInclusionInUniversalSet}). 
\end{proof}

\begin{remark}
The previous results hold true in the real-valued case and in the case of Hermitian 
matrices, when the nuclear norm minimization problem is solved over the set of matrices of that special type.  As a set $D$ we then take the convex hull of corresponding matrices of rank $r$ and unit Frobenius norm. The only difference in the proof of Lemma \ref{lm:SetD} occurs at the point, where we have to show that any $M$ with $\norm{M}_D\leq 1$ belongs to $D$. Say, $M\in\CC^{n_\times n}$ is Hermitian  
and $\norm{M}_D\leq 1$. Then $M=U\Lambda U^*=\sum\limits_{j=1}^L\sum\limits_{i\in I_j}\sigma_i(M)u_iu_i^*$, where $u_i\in\CC^n$, and $M_j:=\sum\limits_{i\in I_j}\sigma_i(M)u_iu_i^*$ is Hermitian.  
The rest of the proof remains unchained.    
\end{remark}

Employing the matrix representation of the measurement map $\nA$, the problem of estimating the probability of the event (\ref{eq:InfL2NormMeasurements}) is reduced to the problem of giving a lower bound for the quantities of the form $\underset{x\in T}\inf\norm{Ax}_2$. This is not an easy task for deterministic matrices, but the situation significantly changes for matrices chosen at random.

\section{Gaussian measurements}

Our main result for Gaussian measurements reads as follows.
\begin{theorem}\label{th:GaussianMeas}
Let $\nA:\RR^{n_1\times n_2}\to\RR^m$ be the linear map \eqref{eq:MeasurementProcess} generated by
a sequence $A_1, \hdots, A_m$ of independent standard Gaussian matrices, let $0<\rho<1$, $\kappa >1$ and $0<\eps<1$. If
\begin{equation}\label{eq:NumberOfMeasurementsForL2RobustRecovery}
\frac{m^2}{m+1}\geq \frac{r(1+(1+\rho^{-1})^2)\kappa^2}{(\kappa-1)^2}\left[\sqrt{n_1}+\sqrt{n_2}+\sqrt\frac{2\ln(\eps^{-1})}{r(1+(1+\rho^{-1})^2)}\right]^2,
\end{equation}
then with probability at least $1-\eps$, for every $X\in\RR^{n_1\times n_2}$, a solution $X^{\sharp}$ of (\ref{eqNNMinimization}) with $b=\nA(X)+w$, $\norm{w}_{\ell_2}\leq\eta$, approximates $X$ with  error
\[
\norm{X-X^{\sharp}}_2\leq \frac{2(1+\rho)^2}{(1-\rho)\sqrt r}\norm{X_c}_1+\frac{2\kappa\sqrt 2(3+\rho)}{\sqrt m(1-\rho)}\eta.
\]
\end{theorem}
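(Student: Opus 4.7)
My plan is to combine the deterministic recovery guarantee of Theorem \ref{th:FrobeniusNSP} with a probabilistic verification of the Frobenius robust rank null space property for Gaussian $\nA$. By Theorem \ref{th:FrobeniusNSP}, it suffices to show that, with probability at least $1-\eps$, $\nA$ satisfies the Frobenius robust rank NSP with constants $\rho$ and $\tau=\kappa\sqrt{2}/\sqrt{m}$; matching this with the error estimate gives the bound in the statement. By Lemma \ref{nsplemma}, this reduces to the quantitative lower bound
\[
\inf_{M\in T_{\rho,r}}\norm{\nA(M)}_{\ell_2} \;>\; \frac{\sqrt m}{\kappa\sqrt 2}.
\]

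The strategy is to invoke Gordon's escape through a mesh theorem. Observe that $\nA$, as a linear map on the vectorisation of $M$, has i.i.d.\ standard Gaussian coefficients, so Gordon's inequality yields
\[
\mean\inf_{M\in T_{\rho,r}}\norm{\nA(M)}_{\ell_2} \;\ge\; E_m - w(T_{\rho,r}),
\]
where $E_m=\mean\norm{g}_{\ell_2}$ for a standard Gaussian $g\in\RR^m$ and $w(\cdot)$ is the Gaussian width. I would use the sharp lower bound $E_m\ge m/\sqrt{m+1}$, which is precisely what produces the factor $m^2/(m+1)$ in the assumption. For the width, Lemma \ref{lm:SetD} gives $T_{\rho,r}\subset\sqrt{1+(1+\rho^{-1})^2}\,D$, and the extreme points of $D$ are rank-$r$ matrices of unit Frobenius norm, so for any Gaussian matrix $G\in\RR^{n_1\times n_2}$,
\[
\sup_{M\in D}\langle G,M\rangle \;=\; \sup_{\rank M\le r,\;\norm M_F=1}\langle G,M\rangle \;\le\; \sqrt r\,\norm{G}_{\mathrm{op}}.
\]
Taking expectations and using Gordon's bound $\mean\norm G_{\mathrm{op}}\le\sqrt{n_1}+\sqrt{n_2}$ yields
\[
w(T_{\rho,r}) \;\le\; \sqrt{r\bigl(1+(1+\rho^{-1})^2\bigr)}\bigl(\sqrt{n_1}+\sqrt{n_2}\bigr).
\]

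To pass from expectation to high probability I would apply Gaussian concentration: the map $A\mapsto\inf_{M\in T_{\rho,r}}\norm{\nA(M)}_{\ell_2}$ is $1$-Lipschitz on the Euclidean space of the $A_j$'s, because for $M$ with $\norm M_F=1$ Cauchy--Schwarz gives $\norm{(\nA-\nA')(M)}_{\ell_2}\le(\sum_j\norm{A_j-A_j'}_F^2)^{1/2}$. Therefore, with probability at least $1-e^{-t^2/2}$,
\[
\inf_{M\in T_{\rho,r}}\norm{\nA(M)}_{\ell_2} \;\ge\; E_m - w(T_{\rho,r}) - t.
\]
Choosing $t=\sqrt{2\ln(\eps^{-1})}$ achieves probability $\ge1-\eps$. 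The assumption on $m$, rearranged as
\[
\Bigl(1-\tfrac1\kappa\Bigr)\frac{m}{\sqrt{m+1}} \;\ge\; \sqrt{r\bigl(1+(1+\rho^{-1})^2\bigr)}\bigl(\sqrt{n_1}+\sqrt{n_2}\bigr)+\sqrt{2\ln(\eps^{-1})},
\]
forces the right-hand side of the concentration inequality to exceed $m/(\kappa\sqrt{m+1})\ge\sqrt m/(\kappa\sqrt 2)$ (the latter inequality holding trivially for $m\ge1$), which is exactly $1/\tau$. This establishes the NSP and concludes the proof via Theorem \ref{th:FrobeniusNSP}.

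The main obstacle is bookkeeping: one must identify Gordon's lower bound $E_m\ge m/\sqrt{m+1}$ as the source of the $m^2/(m+1)$ term, and check that the width factor $\sqrt{1+(1+\rho^{-1})^2}$ combines cleanly with $\sqrt r(\sqrt{n_1}+\sqrt{n_2})$ and the concentration slack $\sqrt{2\ln(\eps^{-1})}$ to reproduce the exact bracketed expression in (\ref{eq:NumberOfMeasurementsForL2RobustRecovery}); the $\kappa/(\kappa-1)$ prefactor encodes the amount of slack one leaves between the concentration bound and the threshold $1/\tau$.
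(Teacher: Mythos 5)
Your proposal is correct and follows essentially the same route as the paper's proof: reduction to the Frobenius robust rank null space property with $\tau=\kappa\sqrt{2}/\sqrt{m}$ via Theorem \ref{th:FrobeniusNSP} and Lemma \ref{nsplemma}, the width bound $\ell(T_{\rho,r})\le\sqrt{r(1+(1+\rho^{-1})^2)}\,(\sqrt{n_1}+\sqrt{n_2})$ via Lemma \ref{lm:SetD}, and the choice $t=\sqrt{2\ln(\eps^{-1})}$ together with $E_m\ge m/\sqrt{m+1}$ and $m/\sqrt{m+1}\ge\sqrt{m/2}$ for $m\ge 1$. The only cosmetic difference is that you re-derive the high-probability escape statement from Gordon's expectation bound plus Gaussian concentration for the $1$-Lipschitz infimum, whereas the paper invokes Theorem \ref{thGordonsEscapeThroughTheMesh} directly, which already packages exactly these two steps.
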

In order to prove Theorem \ref{th:GaussianMeas} we employ Gordon's escape through a mesh theorem that provides an estimate of the probability of the event (\ref{eq:InfL2NormMeasurements}). First we recall some definitions. Let $g\in\RR^m$ be a standard Gaussian random vector, that is, a vector of independent mean zero, variance one normal distributed random variables. Then for 
\[
E_m:=\mean\norm{g}_2=\sqrt{2}\;\frac{\Gamma\brac{(m+1)/2}}{\Gamma\brac{m/2}}
\]  
we have
\[
\frac{m}{\sqrt{m+1}}\leq E_m\leq\sqrt{m},
\]
see \cite{Gordon,FoucartRauhut}. For a set $T\subset\RR^n$ we define its Gaussian width by 
\[
\ell(T):=\mean\underset{x\in T}\sup\abrac{x,g},
\]
where $g\in\RR^n$ is a standard Gaussian random vector.
\begin{theorem}[Gordon's escape through a mesh \cite{Gordon}]\label{thGordonsEscapeThroughTheMesh}
Let $A\in\RR^{m\times n}$ be a Gaussian random matrix and $T$ be a subset of the unit sphere $\Sph^{n-1}$. Then, for $t>0$, 
\begin{equation}\label{eqGordonsEscapeThroughTheMesh}
\PP\brac{\underset{x\in T}\inf\norm{Ax}_2> E_m-\ell(T)-t}\geq 1-e^{-\frac{t^2}{2}}.
\end{equation}
\end{theorem}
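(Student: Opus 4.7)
The plan is to derive the theorem from Gordon's Gaussian min-max comparison inequality combined with Gaussian concentration of Lipschitz functions.

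First, I would rewrite $\inf_{x \in T} \norm{Ax}_2$ as a min-max of a Gaussian process. Since $\norm{Ax}_2 = \sup_{y \in \Sph^{m-1}} \abrac{y, Ax}$ for every $x$,
\[
\inf_{x \in T} \norm{Ax}_2 = \inf_{x \in T} \sup_{y \in \Sph^{m-1}} X_{x,y}, \qquad X_{x,y} := \abrac{y, Ax} = \sum_{i,j} A_{ij} y_i x_j,
\]
which is a centered Gaussian process on $T \times \Sph^{m-1}$ with covariance $\mean X_{x,y} X_{x',y'} = \abrac{x,x'}\abrac{y,y'}$.

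Second, I introduce the comparison process $Y_{x,y} := \abrac{g, x} + \abrac{h, y}$, where $g \in \RR^n$ and $h \in \RR^m$ are independent standard Gaussian vectors, with covariance $\mean Y_{x,y} Y_{x',y'} = \abrac{x,x'} + \abrac{y,y'}$. For $x, x' \in T \subset \Sph^{n-1}$ and $y, y' \in \Sph^{m-1}$ the increment variances obey
\[
\mean(Y_{x,y} - Y_{x',y'})^2 - \mean(X_{x,y} - X_{x',y'})^2 = 2(1 - \abrac{x,x'})(1 - \abrac{y,y'}) \geq 0,
\]
with equality whenever $x = x'$ (both sides then equal $2(1 - \abrac{y,y'})$). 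This is exactly the hypothesis of Gordon's inf-sup Gaussian comparison inequality with inner (inf) index $x$, yielding
\[
\mean \inf_{x \in T} \sup_{y \in \Sph^{m-1}} X_{x,y} \geq \mean \inf_{x \in T} \sup_{y \in \Sph^{m-1}} Y_{x,y}.
\]
The right-hand side decouples: $\sup_{y \in \Sph^{m-1}} \abrac{h, y} = \norm{h}_2$ is independent of $x$ and contributes $\mean \norm{h}_2 = E_m$, while $\mean \inf_{x \in T} \abrac{g,x} = -\mean \sup_{x \in T} \abrac{g,x} = -\ell(T)$ (using $g \sim -g$). Hence
\[
\mean \inf_{x \in T} \norm{Ax}_2 \geq E_m - \ell(T).
\]

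Third, I upgrade the expectation bound to the tail estimate via Gaussian concentration. The map $\Phi \colon \RR^{m \times n} \to \RR$, $\Phi(A) := \inf_{x \in T} \norm{Ax}_2$, is $1$-Lipschitz in the Frobenius norm: for any matrices $A, B$,
\[
\abs{\Phi(A) - \Phi(B)} \leq \sup_{x \in T} \abs{\norm{Ax}_2 - \norm{Bx}_2} \leq \sup_{x \in T} \norm{(A-B)x}_2 \leq \norm{A-B}_F,
\]
using $\norm{x}_2 = 1$. Since the entries of $A$ are i.i.d.\ $N(0,1)$, Gaussian concentration for $1$-Lipschitz functions gives $\PP\brac{\Phi(A) \leq \mean \Phi(A) - t} \leq e^{-t^2/2}$, which combined with the previous step produces exactly \eqref{eqGordonsEscapeThroughTheMesh}.

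The main obstacle is Gordon's Gaussian min-max comparison itself, which I would invoke as a black box. Its classical proof proceeds by Gaussian interpolation: on an enlarged probability space making $X$ and $Y$ independent, set $Z_\theta := \sqrt{\theta}\, X + \sqrt{1-\theta}\, Y$ for $\theta \in [0,1]$ and approximate $\inf_x \sup_y$ by a smooth log-sum-exp functional $F_\beta$ with $\beta \to \infty$. Gaussian integration by parts then shows that $\tfrac{d}{d\theta} \mean F_\beta(Z_\theta)$ has a sign determined precisely by the increment comparison, and the inner equality condition $\mean(X_{x,y} - X_{x,y'})^2 = \mean(Y_{x,y} - Y_{x,y'})^2$ is what forces the infimum (rather than a mere expectation) to be preserved in the conclusion. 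The non-finiteness of $T \times \Sph^{m-1}$ is handled by passing to finite nets and using continuity of the Gaussian processes in question.
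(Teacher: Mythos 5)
Your proposal is correct and follows exactly the classical argument behind this result: the paper itself gives no proof (it imports the theorem from Gordon's work, see also \cite{FoucartRauhut}), and the standard proof there is precisely your route --- the Gaussian min-max comparison between $X_{x,y}=\abrac{y,Ax}$ and $Y_{x,y}=\abrac{g,x}+\abrac{h,y}$ in increment (Sudakov--Fernique) form, which correctly sidesteps the unequal variances $\mean X_{x,y}^2=1\neq 2=\mean Y_{x,y}^2$, followed by Borell--TIS concentration for the $1$-Lipschitz map $A\mapsto\inf_{x\in T}\norm{Ax}_2$. Your covariance computations, the identity $\mean(Y_{x,y}-Y_{x',y'})^2-\mean(X_{x,y}-X_{x',y'})^2=2(1-\abrac{x,x'})(1-\abrac{y,y'})$ with equality at $x=x'$, and the final assembly of the tail bound are all accurate.
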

In order to apply this result to our measurement process (\ref{eq:MeasurementProcess}) we unravel the columns of $A_j$, $j=1,\ldots,m$, into a single row and collect all of these in a $m\times n_1n_2$-matrix $A$, so that $n = n_1 n_2$ when
applying \eqref{eqGordonsEscapeThroughTheMesh}.
In order to give a bound on the number of Gaussian measurements, Theorem \ref{thGordonsEscapeThroughTheMesh} requires to estimate the Gaussian width of the set $T_{\rho,r}$ from above. 
As it was pointed out in the previous section, $T_{\rho,r}$ is a subset of a scaled version of $D$, which has a relatively simple structure. So instead of evaluating $\ell(T_{\rho,r})$, we consider $\ell(D)$.
\begin{lemma}
For the set $D$ defined by (\ref{eqDefinitionOfD}) it holds
\begin{equation}\label{eq:EstimateOfGWOfD}
\ell(D)\leq\sqrt{r}(\sqrt{n_1}+\sqrt{n_2}).
\end{equation}
\end{lemma}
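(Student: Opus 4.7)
The plan is to reduce the Gaussian width computation to a familiar object, namely the operator norm of a Gaussian matrix, and then invoke a classical bound.

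First, since the supremum of a linear functional over a set equals its supremum over the convex hull, I would write
\[
\ell(D) = \mathbb{E} \sup_{M \in D} \langle M, G \rangle = \mathbb{E} \sup_{\substack{\rank M \leq r \\ \norm{M}_2 = 1}} \langle M, G \rangle,
\]
where $G \in \RR^{n_1 \times n_2}$ is a standard Gaussian matrix and $\langle \cdot, \cdot \rangle$ is the Frobenius inner product. The extreme points of $D$ are precisely the unit-Frobenius-norm matrices of rank at most $r$, so this identity is immediate.

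Next, I would evaluate the inner supremum for a fixed $G$. Writing $G = \sum_j \sigma_j(G) u_j v_j^*$ in singular value decomposition, the Eckart--Young theorem tells us that among rank-$\leq r$ matrices with unit Frobenius norm, the maximizer of $\langle M, G \rangle$ is $G_r / \norm{G_r}_2$, giving the value $\norm{G_r}_2 = \bigl(\sum_{i=1}^r \sigma_i(G)^2\bigr)^{1/2}$. Bounding all $r$ leading singular values by the largest yields the pointwise estimate
\[
\sup_{\substack{\rank M \leq r \\ \norm{M}_2 = 1}} \langle M, G \rangle = \Bigl(\sum_{i=1}^r \sigma_i(G)^2\Bigr)^{1/2} \leq \sqrt{r}\, \sigma_1(G) = \sqrt{r}\, \norm{G}_{2 \to 2}.
\]

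Taking expectations, I am left with the task of estimating the operator norm of a Gaussian matrix. This is standard: the Gordon--Davidson--Szarek bound gives $\mathbb{E}\, \norm{G}_{2 \to 2} \leq \sqrt{n_1} + \sqrt{n_2}$, which can itself be derived from a Slepian-type comparison. Combining this with the previous display yields
\[
\ell(D) \leq \sqrt{r}\, \mathbb{E}\, \norm{G}_{2 \to 2} \leq \sqrt{r}(\sqrt{n_1} + \sqrt{n_2}),
\]
as claimed. No step here is delicate; the only ingredient external to the paper is the expectation bound for $\norm{G}_{2 \to 2}$, which is a textbook result.
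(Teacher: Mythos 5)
Your proof is correct and takes essentially the same route as the paper: both pass from $D$ to its generating set of unit-norm rank-$\leq r$ matrices, bound the inner supremum pointwise by $\sqrt{r}\,\sigma_1(G)$, and conclude with the standard estimate $\mathbb{E}\,\sigma_1(G)\leq\sqrt{n_1}+\sqrt{n_2}$ for a Gaussian matrix. The only cosmetic difference is that you evaluate the supremum exactly as $\norm{G_r}_2$ via an Eckart--Young argument before relaxing to $\sqrt{r}\,\sigma_1(G)$, whereas the paper reaches the same bound directly from H\"older's inequality, $\langle \Gamma, M\rangle \leq \norm{\Gamma}_\infty \norm{M}_1 \leq \sqrt{r}\,\norm{\Gamma}_\infty \norm{M}_2$.
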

\begin{proof}
Let $\Gamma\in\RR^{n_1\times n_2}$ have independent standard normal distributed entries. Then $\ell(D)=\mean\underset{M\in D}\sup\abrac{\Gamma, M}$. Since a convex continuous real-valued function attains its maximum value at one of the extreme points, it holds $\ell(D)=\mean\underset{\substack{\norm{M}_2=1\\ \rank M\leq r}}\sup\abrac{\Gamma, M}$. By H\"older's inequality,
\[
\ell(D)\leq\mean\underset{\substack{\norm{M}_2=1\\ \rank M\leq r}}\sup\norm{\Gamma}_{\infty}\norm{M}_1\leq\sqrt r\underset{\substack{\norm{M}_2=1\\ \rank M\leq r}}\sup\norm{M}_2\mean\sigma_1(\Gamma)\leq\sqrt r(\sqrt{n_1}+\sqrt{n_2}),
\]   
where the last inequality follows from an estimate for the expectation of the largest singular value of a Gaussian matrix, see \cite[Chapter 9.3]{FoucartRauhut}. 
\end{proof}

\begin{proof}[Proof of Theorem \ref{th:GaussianMeas}]
Set $t:=\sqrt{2\ln(\eps^{-1})}$. If $m$ satisfies (\ref{eq:NumberOfMeasurementsForL2RobustRecovery}), then 
\[
E_m\brac{1-\frac{1}{\kappa}}\geq\sqrt{r(1+(1+\rho^{-1})^2)}(\sqrt{n_1}+\sqrt{n_2})+t.
\]
Together with (\ref{eqInclusionInUniversalSet}) and (\ref{eq:EstimateOfGWOfD}) this yields
\[
E_m-\ell(T_{\rho,r})-t\geq\frac{E_m}{\kappa}\geq\frac{1}{\kappa}\sqrt\frac{m}{2}.
\]
According to Theorem \ref{thGordonsEscapeThroughTheMesh}
\[
\begin{aligned}
\PP\brac{\underset{M\in T_{\rho,r}}\inf\norm{\nA(M)}_2>\frac{\sqrt m}{\kappa\sqrt 2}}\geq 1-\eps,
\end{aligned}
\]
which means that with probability at least $1-\eps$ map $\nA$ satisfies the Frobenius robust rank null space property with constants $\rho$ and $\frac{\kappa\sqrt 2}{\sqrt m}$. The error estimate follows from Theorem \ref{th:FrobeniusNSP}. 
\end{proof}

\section{Measurement matrices with independent entries and four finite moments}

In this section we prove Theorem \ref{th:indepNSP}, which is the generalization of Theorem \ref{th:GaussianMeas} to the case when the map $\nA:\RR^{n_1\times n_2}\to\RR^m$ is obtained from $m$ independent samples of a random matrix
$\Phi=(X_{ij})_{i,j}$ with the following properties:
\begin{itemize}
\item The $X_{ij}$ are independent random variables of  mean zero,
\item $\mathbb E X_{ij}^2=1$  and  $\mathbb E X_{ij}^4\leq C_4 $ for all $i,j$ and some constant $C_4$.
\end{itemize}
Note that (by H\"{o}lder's inequality) $C_4\geq 1$. 

As before the idea of the proof is to show that the event (\ref{eq:InfL2NormMeasurements}) holds with high probability. In order to do so we apply Mendelson's small ball method \cite{KoltchinskiiMendelson,Mendelson,tr14} in the manner of \cite{tr14}.
\begin{theorem}[\cite{KoltchinskiiMendelson,Mendelson,tr14}]\label{KMT}
Fix $E\subset \mathbb R^d$ and let $\phi_1,\hdots,\phi_m$ be independent copies of a random vector $\phi$ in $\mathbb R^d$. 
For $\xi >0$ let 
\[
Q_{\xi}(E;\phi) =\inf_{u\in E}\mathbb P\{ \vert \langle \phi, u \rangle \vert \geq \xi\}
\]
and
\[
W_m(E;\phi)=\mathbb E \sup_{u\in E}\langle h,u\rangle,
\]
where $h=\frac{1}{\sqrt m} \sum_{j=1}^{m}\varepsilon_j \phi_j$ with $(\varepsilon_j)$ being a Rademacher sequence \footnote{i.e., the $\varepsilon_j$ are independent and assume the values $1$ and $-1$ with probability $1/2$, respectively.}. Then for any $\xi >0$ and any $t\geq 0$ with probability at least $1-e^{-2 t^2}$ 
\[
\inf_{u\in E} \left( \sum_{i=1}^{m}\vert \langle \phi_i, u \rangle \vert^2  \right )^{1/2}\geq \xi \sqrt m Q_{2\xi}(E;\phi)-2W_m(E;\phi)-\xi t.
\]
\end{theorem}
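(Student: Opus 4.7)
The plan is to follow the Koltchinskii--Mendelson small-ball argument in the sharpened form due to Tropp: first, use a Lipschitz soft indicator to turn $|\langle \phi_i,u\rangle|^2$ into a $[0,1]$-valued function, thereby reducing the claim to a uniform lower bound on a bounded empirical mean; then control the resulting one-sided empirical process by combining McDiarmid's bounded-differences inequality with Gin\'e--Zinn symmetrization and the Ledoux--Talagrand contraction principle.

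Fix $\xi>0$ and take a piecewise linear function $\psi_\xi:[0,\infty)\to [0,1]$ with $\psi_\xi(s)=0$ for $s\le \xi$, $\psi_\xi(s)=1$ for $s\ge 2\xi$, and linear on the transition interval; then $\psi_\xi$ is $\xi^{-1}$-Lipschitz and satisfies $\mathbf 1_{[2\xi,\infty)}\le \psi_\xi\le \mathbf 1_{[\xi,\infty)}$. The upper sandwich yields the pointwise bound $|\langle \phi_i,u\rangle|^2\ge \xi^2\psi_\xi(|\langle \phi_i,u\rangle|)$; summing, taking square roots, and using $\sqrt z\ge z$ for $z\in[0,1]$, we get
\[
\Bigl(\sum_{i=1}^m|\langle \phi_i,u\rangle|^2\Bigr)^{1/2}\ \ge\ \xi\sqrt{m\,Z(u)}\ \ge\ \xi\sqrt m\,Z(u),\qquad Z(u):=\frac{1}{m}\sum_{i=1}^m \psi_\xi(|\langle \phi_i,u\rangle|).
\]
Taking the infimum over $u\in E$ reduces the problem to producing a uniform lower bound on $Z(u)$. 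The lower sandwich $\psi_\xi\ge \mathbf 1_{[2\xi,\infty)}$ gives $\mathbb E Z(u)\ge \mathbb P\{|\langle \phi,u\rangle|\ge 2\xi\}\ge Q_{2\xi}(E;\phi)$ uniformly in $u\in E$, so decomposing $Z(u)=\mathbb E Z(u)-(\mathbb E Z(u)-Z(u))$ and passing to infima gives
\[
\inf_{u\in E}Z(u)\ \ge\ Q_{2\xi}(E;\phi)\ -\ \sup_{u\in E}\bigl(\mathbb E Z(u)-Z(u)\bigr).
\]

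Because $\psi_\xi(|\cdot|)\in[0,1]$, replacing a single $\phi_i$ by an independent copy changes $Z(u)$ by at most $1/m$, and hence changes the one-sided supremum above by at most $1/m$. McDiarmid's bounded-differences inequality therefore yields, for every $t\ge 0$,
\[
\sup_{u\in E}\bigl(\mathbb E Z(u)-Z(u)\bigr)\ \le\ \mathbb E\sup_{u\in E}\bigl(\mathbb E Z(u)-Z(u)\bigr)\ +\ \frac{t}{\sqrt m}
\]
with probability at least $1-e^{-2t^2}$. For the expected supremum, the standard one-sided Gin\'e--Zinn symmetrization (factor $2$) produces a Rademacher process, and since $s\mapsto\psi_\xi(|s|)$ is $\xi^{-1}$-Lipschitz and vanishes at $0$, the one-sided Ledoux--Talagrand contraction principle (no additional factor of $2$) converts this Rademacher process on $\psi_\xi$ into one on the linear functionals $\langle \phi_i,\cdot\rangle$:
\[
\mathbb E\sup_{u\in E}\bigl(\mathbb E Z(u)-Z(u)\bigr)\ \le\ \frac{2}{m\xi}\,\mathbb E\sup_{u\in E}\sum_{i=1}^m \varepsilon_i\langle \phi_i,u\rangle\ =\ \frac{2\,W_m(E;\phi)}{\xi\sqrt m},
\]
where the identity uses $\sum_i\varepsilon_i\phi_i=\sqrt m\, h$ and the definition of $W_m(E;\phi)$.

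Combining these estimates and multiplying through by $\xi\sqrt m$ gives the claimed lower bound on $\inf_{u\in E}(\sum_i|\langle \phi_i,u\rangle|^2)^{1/2}$ on the same event (with the bound being trivially valid when its right-hand side is negative). The two delicate points are worth flagging. First, the quadratic-to-linear step via $\sqrt z\ge z$ on $[0,1]$ is slightly wasteful, and the whole argument is designed so that this loss is absorbed by the gap between $\xi$ and $2\xi$ in the sandwich defining $\psi_\xi$---this is precisely why $Q_{2\xi}$ (rather than $Q_\xi$) appears on the right-hand side. Second, obtaining the clean constant $2$ in front of $W_m(E;\phi)$ requires using the \emph{one-sided} forms of both symmetrization and the contraction principle; the cruder two-sided versions would yield a factor of $4$.
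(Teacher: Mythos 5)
Your proof is correct and follows essentially the same route as the paper's source for this statement: the paper gives no proof of Theorem \ref{KMT}, citing \cite{KoltchinskiiMendelson,Mendelson,tr14}, and your argument is precisely the small-ball proof of \cite{tr14} --- the soft-indicator sandwich reducing the claim to a lower bound on $Z(u)=\frac1m\sum_i\psi_\xi(|\langle\phi_i,u\rangle|)$, McDiarmid with bounded differences $1/m$ giving the $e^{-2t^2}$ tail at deviation $t/\sqrt m$, and one-sided symmetrization plus the one-sided contraction principle producing $2W_m(E;\phi)/(\xi\sqrt m)$. All constants check out, including your correctly flagged use of the one-sided forms to get the clean factor $2$ rather than $4$.
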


We start with two lemmas.
\begin{lemma}\label{qabsch}
$$
\inf_{\{Y, \Vert Y \Vert_2=1 \}}\mathbb P(\vert\langle\Phi, Y \rangle \vert\geq \frac{1}{\sqrt 2} )\geq \frac{1}{4C_5}, 
$$where $C_5=\max\{3,C_4\}$.
\end{lemma}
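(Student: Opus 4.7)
The plan is to apply the Paley--Zygmund inequality to the real random variable $Z := \langle \Phi, Y\rangle = \sum_{i,j} X_{ij}Y_{ij}$. Because the entries of $\Phi$ have mean zero and unit variance, and $\norm{Y}_2 = 1$ (Frobenius norm), the first and second moment computation is immediate:
\[
\mathbb{E} Z = 0, \qquad \mathbb{E} Z^2 = \sum_{i,j} Y_{ij}^2 \,\mathbb{E} X_{ij}^2 = \norm{Y}_2^2 = 1.
\]
Paley--Zygmund applied to the nonnegative random variable $V = Z^2$ with threshold $\theta = 1/2$ then yields
\[
\PP\!\left(|Z| \geq \tfrac{1}{\sqrt 2}\right) = \PP\!\left(V \geq \tfrac{1}{2}\,\mathbb{E} V\right) \;\geq\; \frac{1}{4}\cdot \frac{(\mathbb{E} V)^2}{\mathbb{E} V^2} \;=\; \frac{1}{4\,\mathbb{E} Z^4}.
\]
So the whole lemma reduces to the fourth-moment bound $\mathbb{E} Z^4 \leq C_5 = \max\{3, C_4\}$, which I expect to be the only real (and very mild) piece of work.

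For that bound I would expand
\[
\mathbb{E} Z^4 = \sum_{\alpha_1,\alpha_2,\alpha_3,\alpha_4} Y_{\alpha_1}Y_{\alpha_2}Y_{\alpha_3}Y_{\alpha_4}\,\mathbb{E}\!\left[X_{\alpha_1}X_{\alpha_2}X_{\alpha_3}X_{\alpha_4}\right],
\]
where $\alpha$ ranges over index pairs $(i,j)$. By independence and the mean-zero assumption, the only surviving terms are those in which each distinct index appears an even number of times, so the contributing configurations are either (i) all four indices equal, or (ii) the four positions split into two disjoint matched pairs. The first type contributes at most $C_4 \sum_\alpha Y_\alpha^4$, and for the second type there are three ways to pair up the four positions, each contributing $\sum_{\alpha\neq\beta} Y_\alpha^2 Y_\beta^2$. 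Combining,
\[
\mathbb{E} Z^4 \;\leq\; C_4 \sum_\alpha Y_\alpha^4 + 3\!\left[\Bigl(\sum_\alpha Y_\alpha^2\Bigr)^{\!2} - \sum_\alpha Y_\alpha^4\right] \;=\; 3 + (C_4 - 3)\sum_\alpha Y_\alpha^4.
\]

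Now I finish with a two-case bookkeeping argument using $\sum_\alpha Y_\alpha^4 \leq \bigl(\sum_\alpha Y_\alpha^2\bigr)^2 = 1$: if $C_4 \leq 3$ the second summand is non-positive so $\mathbb{E} Z^4 \leq 3$; if $C_4 > 3$ then $\mathbb{E} Z^4 \leq 3 + (C_4 - 3) = C_4$. Either way $\mathbb{E} Z^4 \leq \max\{3,C_4\} = C_5$. Substituting this into the Paley--Zygmund bound above gives $\PP(|Z| \geq 1/\sqrt 2) \geq 1/(4C_5)$ for every fixed unit-Frobenius-norm $Y$, and taking the infimum completes the proof. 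The only subtle point is remembering to keep the $\sum_\alpha Y_\alpha^4$ term rather than discarding it naively, since bounding it by $1$ before comparing $C_4$ with $3$ would only yield $\mathbb{E} Z^4 \leq C_4 + 3$, which is strictly worse than the $C_5$ announced in the lemma.
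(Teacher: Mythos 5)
Your proof is correct and follows essentially the same route as the paper: the Paley--Zygmund inequality applied to $\langle \Phi, Y\rangle^2$, the computation $\mathbb{E}\langle\Phi,Y\rangle^2 = 1$, and the fourth-moment expansion into diagonal and paired terms. The only (immaterial) difference is at the final step, where the paper bounds both coefficients $C_4$ and $3$ jointly by $C_5$ to get $\mathbb{E}\langle\Phi,Y\rangle^4 \leq C_5\bigl(\sum_{i,j} Y_{ij}^2\bigr)^2 = C_5$ in one line, whereas you reach the same bound via the rearrangement $3 + (C_4-3)\sum_{\alpha} Y_\alpha^4$ and a two-case check.
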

\begin{proof}
Assume that  $Y$ has Frobenius norm one.
The  Payley-Zygmund inequality (see e.g.\ \cite[Lemma~7.16]{FoucartRauhut}, and also \cite{tr14}), implies 
\begin{equation}\label{PZ}
\mathbb P \{\vert \langle \Phi,Y\rangle \vert^2\geq \frac{1}{2}(\mathbb E \vert \langle \Phi,Y\rangle \vert^2) \}\geq \frac{1}{4}\cdot \frac{(\mathbb E \vert \langle \Phi,Y \rangle \vert^2)^2 }{\mathbb E \vert \langle \Phi,Y \rangle \vert^4 }.
\end{equation}  
We compute numerator and denominator. $$\mathbb E \vert \langle \Phi,Y\rangle \vert^2=\sum_{i,j,k,l}\mathbb E (X_{ij}X_{kl})\cdot  Y_{ij}Y_{kl}=\sum_{i,j}\mathbb E X_{ij}^2\cdot Y_{ij}^2=\sum_{i,j} Y_{ij}^2=1.$$
Likewise,
\begin{align*}
\mathbb E \vert \langle \Phi,Y\rangle \vert^4&=\sum_{i_1,\hdots,i_4, j_1,\hdots,j_4}\mathbb E (X_{i_1j_1}\cdots X_{i_4j_4})\cdot  Y_{i_1j_1}\cdots Y_{i_4j_4}\\&=\sum_{i,j}\mathbb E X_{ij}^4\cdot Y_{ij}^4+3\sum_{i_1,i_2, j_1,j_2\atop (i_1,j_1)\neq (i_2,j_2)}\mathbb E (X_{i_1j_1}^2 X_{i_2j_2}^2)\cdot  Y_{i_1j_1}^2 Y_{i_2j_2}^2\\
&=\sum_{i,j}\mathbb E X_{ij}^4\cdot Y_{ij}^4+3\sum_{i_1,i_2, j_1,j_2\atop (i_1,j_1)\neq (i_2,j_2)}   Y_{i_1j_1}^2 Y_{i_2j_2}^2
\leq \sum_{i,j}C_4\cdot Y_{ij}^4+3\sum_{i_1,i_2, j_1,j_2\atop (i_1,j_1)\neq (i_2,j_2)}   Y_{i_1j_1}^2 Y_{i_2j_2}^2\\
&\leq C_5\sum_{i_1,i_2, j_1,j_2}   Y_{i_1j_1}^2 Y_{i_2j_2}^2=C_5(\sum_{i,j} Y_{ij}^2)^2=C_5.
\end{align*}
Combining this with $(\mathbb E \vert \langle \Phi,Y\rangle \vert^2)^2=1$ and the estimate (\ref{PZ}), the claim follows. 
\end{proof}

\begin{lemma}\label{opnormabsch}
Let $\Phi_1,\hdots,\Phi_m$ be independent copies of  a random matrix $\Phi$ as above. Let $\varepsilon_1,\hdots,\varepsilon_m$ be independent Rademacher variables independent of everything else and let $H=\frac{1}{\sqrt m}\sum_{k=1}^{m}\varepsilon_k \Phi_k$. Then
$$
\mathbb E \Vert H\Vert_{\infty}\leq C_1\sqrt{n}.
$$ Here $C_1$ is a constant that only depends on $C_4$.
\end{lemma}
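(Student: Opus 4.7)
The plan is to reduce the operator-norm bound to a known sharp estimate for random matrices with independent entries and bounded fourth moments, by conditioning on the Rademacher signs.

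First I would observe that, \emph{conditionally} on $\varepsilon = (\varepsilon_1,\ldots,\varepsilon_m)$, the matrix $H$ has independent entries. Indeed, $H_{ij} = \frac{1}{\sqrt m}\sum_{k}\varepsilon_k (\Phi_k)_{ij}$ depends only on the collection $\bigl((\Phi_k)_{ij}\bigr)_{k=1}^m$, and for distinct $(i,j)\neq (i',j')$ these collections are independent because each $\Phi_k$ has independent entries and the $\Phi_k$'s are themselves independent. Each conditional entry is a mean-zero sum of $m$ independent variables.

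Next I would compute the first two relevant conditional moments of $H_{ij}$. Using $\varepsilon_k^{2}=1$ and $\mathbb{E}(\Phi_k)_{ij}^2=1$ one immediately obtains $\mathbb{E}[H_{ij}^{2}\mid\varepsilon]=1$. Expanding the fourth power, killing odd cross terms by mean-zero independence, and invoking the fourth moment hypothesis yields
\[
\mathbb{E}[H_{ij}^{4}\mid\varepsilon]
= \frac{1}{m^2}\Bigl(\sum_{k}\mathbb{E}(\Phi_k)_{ij}^{4} + 3\sum_{k\neq l}\mathbb{E}(\Phi_k)_{ij}^{2}\mathbb{E}(\Phi_l)_{ij}^{2}\Bigr)
\le \frac{C_{4}}{m}+3 \le C_4+3.
\]
In particular, the conditional second and fourth moment bounds are deterministic and independent of $\varepsilon$.

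The key tool I would then invoke is Latala's theorem (or an equivalent Seginer-type estimate): for a random matrix $X\in\RR^{n_1\times n_2}$ with independent mean-zero entries,
\[
\mathbb{E}\|X\|_{\infty}\le c\Bigl(\max_{i}\sqrt{\sum_{j}\mathbb{E}X_{ij}^{2}}+\max_{j}\sqrt{\sum_{i}\mathbb{E}X_{ij}^{2}}+\bigl(\sum_{ij}\mathbb{E}X_{ij}^{4}\bigr)^{1/4}\Bigr).
\]
Applied to the conditional law of $H$ given $\varepsilon$, the previous moment estimates yield
\[
\mathbb{E}[\|H\|_{\infty}\mid\varepsilon]\le c\Bigl(\sqrt{n_1}+\sqrt{n_2}+\bigl((C_4+3)n_1n_2\bigr)^{1/4}\Bigr)\le c'(\sqrt{n_1}+\sqrt{n_2}),
\]
where I used $(n_1n_2)^{1/4}\le\sqrt{\max(n_1,n_2)}$. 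Taking the expectation over $\varepsilon$ preserves the bound and gives $\mathbb{E}\|H\|_{\infty}\le C_1\sqrt{n}$ with $n\asymp n_1+n_2$, as desired, and the constant $C_1$ depends only on $C_4$ through $C_4+3$.

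The main subtlety is that $H$ is \emph{not} a matrix with unconditionally independent entries, since all entries share the same Rademacher signs $\varepsilon_k$. Conditioning removes this obstruction cleanly because the conditional moment bounds do not depend on the values of the $\varepsilon_k$'s, so Latala's bound can be applied with a universal constant and the final expectation integrates trivially. If one wishes to avoid citing Latala, a viable alternative is a symmetrization-plus-Gaussian-comparison argument followed by Gordon's/Slepian-type estimates for the operator norm of a standard Gaussian matrix, but the cleanest route is the conditioning argument outlined above.
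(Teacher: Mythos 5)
Your proof is correct, and it rests on the same key estimate as the paper's own proof --- Lata{\l}a's bound \cite{la05} for matrices with independent mean-zero entries, together with essentially identical second- and fourth-moment computations --- but you dispose of the shared Rademacher signs differently. The paper first \emph{desymmetrizes}, invoking \cite[Lemma 6.3]{lt91} to get $\mathbb{E}\Vert H\Vert_{\infty}\le \frac{2}{\sqrt m}\,\mathbb{E}\Vert S\Vert_{\infty}$ with $S=\sum_{k=1}^m \Phi_k$, whose entries \emph{are} unconditionally independent, and only then applies Lata{\l}a to $S$. You instead condition on $\varepsilon$, correctly observing that the entries of $H$ are conditionally independent (they are not unconditionally independent, as you rightly flag) and that the conditional moment bounds $\mathbb{E}[H_{ij}^2\mid\varepsilon]=1$ and $\mathbb{E}[H_{ij}^4\mid\varepsilon]\le C_4/m+3$ are uniform in $\varepsilon$, so Lata{\l}a's universal-constant bound integrates trivially over the signs. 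The two routes are interchangeable here: yours avoids the desymmetrization lemma (and its factor $2$) at the price of a conditional application of Lata{\l}a, while the paper's works with a single unconditional application; both yield $\mathbb{E}\Vert H\Vert_{\infty}\le C_1\sqrt n$ with $C_1$ depending only on $C_4$. One cosmetic point: as in the paper's own computation, the bound really produces $\sqrt{n_1}+\sqrt{n_2}\asymp\sqrt{\max(n_1,n_2)}$, so the $n$ in the lemma must be read as $\max(n_1,n_2)$ (equivalently $n_1+n_2$ up to constants) rather than the $\min\{n_1,n_2\}$ convention used elsewhere in the paper --- your closing remark $n\asymp n_1+n_2$ handles this correctly.
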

\begin{proof}
Let $S=\sum_{k=1}^{m} \Phi_k$.
We first desymmetrize the sum $H$ (see \cite[Lemma 6.3]{lt91}) and obtain
$$
\mathbb E\Vert H\Vert_{\infty}\leq \frac{2}{\sqrt{m}}\mathbb E\Vert S\Vert_{\infty}.
$$

Therefore, it is enough to show that $\mathbb E\Vert S\Vert_{\infty}\leq c_3\sqrt{mn}$ for a suitable constant $c_3.$
The matrix $S$ has independent mean zero entries, hence by a result Lata{\l}a (see \cite{la05})  the following estimate holds for some universal constant $C_2$,
$$
\mathbb E \Vert S\Vert_{\infty}\leq C_2\left(\max_{i} \sqrt{\sum_j \mathbb E S_{ij}^2}+\max_{j} \sqrt{\sum_i \mathbb E S_{ij}^2}+\sqrt[4]{\sum_{i,j} \mathbb E S_{ij}^4}\right).
$$
 Denoting the entries of $\Phi_k$ by $X_{k;ij}$, we have $S_{ij}=\sum_k X_{k;ij}$. Hence, using the independence of the $X_{k;ij}$, we obtain   $\mathbb E S_{ij}^2=\mathbb E (\sum_k X_{k;ij})^2=\sum_k \mathbb E X_{k;ij}^2=m$. Thus, $\sqrt{\sum_j \mathbb E S_{ij}^2} \leq \sqrt{nm}$ for any $i$ and $ \sqrt{\sum_i \mathbb E S_{ij}^2}\leq \sqrt{nm}$ for any $j$.
Finally to estimate $\sqrt[4]{\sum_{i,j} \mathbb E S_{ij}^4}$ we calculate $\mathbb E S_{ij}^4=\mathbb E (\sum_k{X_{k;ij}})^4$. Using again that  the $X_{k;ij}$ are independent and have mean zero we obtain $$
 E S_{ij}^4=\sum_k \mathbb E X_{k;ij}^4 + 3\sum_{k_1\neq k_2}\mathbb E X_{k_1;ij}^2\mathbb E X_{k_2;ij}^2.
$$ Using that $\mathbb E X_{k;ij}^2=1$ for all $i,j,k$, we obtain $\mathbb E S_{ij}^4\leq C_5m^2$, where $C_5=\max\{3,C_4\}$ and hence   $$
\sqrt[4]{\sum_{i,j} \mathbb E S_{ij}^4}\leq \sqrt[4]{C_5m^2n^2}=\sqrt[4]{C_4}\sqrt{mn}.$$ Hence, indeed $\mathbb E\Vert S\Vert_{\infty}\leq c_3\sqrt{mn}$ for a suitable constant $c_3$ that depends only on $C_4$. 
\end{proof}

\begin{proof}[Proof Theorem \ref{th:indepNSP}]
Let now $T_{\rho,r}$ and $D$ be the sets defined in Section \ref{sec:NSP}, but restricted to the real-valued matrices.
By H\"older's inequality, for any  $n_1\times n_2$ matrix $Y$ of Frobenius norm $1$ and rank at most $r$ and any $n_1\times n_2$ matrix $H$,
$$
 \langle H, Y \rangle \leq \Vert Y \Vert_1 \Vert H \Vert_{\infty}\leq \sqrt r \Vert H \Vert_{\infty}.
$$
Hence
\begin{equation}\label{wmabsch}
\sup_{Y\in D} \langle H, Y \rangle \leq  \sqrt r \Vert H \Vert_{\infty}.
\end{equation}

\noindent  Let $H=\frac{1}{\sqrt m} \sum_{j=1}^{m}\varepsilon_j \Phi_j$ and
let $\xi = \frac{1}{2\sqrt 2}$ and $E=T_{\rho,r}$. Then it follows from Theorem \ref{KMT} that for any $t\geq 0$ with probability at least $1-e^{-2 t^2}$ 
\begin{equation}\label{KMTAnw}
\inf_{Y\in T_{\rho,r}} \left( \sum_{i=1}^{m}\vert \langle \Phi_i, Y \rangle \vert^2  \right )^{1/2}\geq   \frac{\sqrt m}{2\sqrt 2}Q_{\frac{1}{\sqrt 2}}(T_{\rho,r};\Phi)-2 W_m(T_{\rho,r},\Phi)  -\frac{1}{2\sqrt 2} t.
\end{equation}
Using Lemma \ref{qabsch} and the fact that all elements of $T_{\rho,r}$ have Frobenius norm $1$, we obtain 
\begin{equation}\label{Q1/2}
Q_{\frac{1}{\sqrt 2}}(T_{\rho,r};\Phi)\geq \frac{1}{4C_5}.
\end{equation}
Combining now the fact that $T_{\rho,r}\subseteq \sqrt{1+(1+\rho^{-1})^2}D $ (see Lemma \ref{lm:SetD}) with  estimate (\ref{wmabsch}) and  Lemma \ref{opnormabsch} leads to
\begin{equation}\label{WmT}
W_m(T_{\rho,r},\Phi)\leq \sqrt{1+(1+\rho^{-1})^2} \sqrt r\ \mathbb E \norm{H}_{\infty} \leq C_1 \sqrt{1+(1+\rho^{-1})^2} \sqrt r \sqrt n.
\end{equation}
Using (\ref{KMTAnw}), (\ref{Q1/2}) and (\ref{WmT}) we see that 
choosing $m\geq c_1\rho^{-2}nr$  and $t=c_4m$ for suitable constants $c_1,c_4$, we obtain with probability at least $1-e^{-c_2m}$
$$ \inf_{Y\in T_{\rho,r}} \left( \sum_{i=1}^{m}\vert \langle \Phi_i, Y \rangle \vert^2  \right )^{1/2}\geq c_3 \sqrt m$$ for suitable constants $c_2,c_3$. Now the claim follows from Lemma \ref{nsplemma} and Theorem \ref{th:FrobeniusNSP} (both of which also hold in the real valued version by the same proofs respectively). 
\end{proof}

\section{Rank one Gaussian measurements}

In this section we prove Theorem \ref{mainTh1}.  
The proof technique is an application of Mendelson's small ball method analogous to the proof of Theorem \ref{th:indepNSP}.
Let
\[
T^{\mathcal H}_{\rho,r}:=\fbrac{M\in\mathcal H_n: \norm{M}_2=1,\ \norm{M_r}_2>\frac{\rho}{\sqrt r}\norm{M_c}_1}.
\] 
Let $T_{\rho,r}$ be defined as $T^{\mathcal H}_{\rho,r}$ but with $\mathcal H_n$ replaced by the set of all complex $n\times n$-matrices (i.e. it is defined as before with $n_1=n_2=n$).
Then $T^{\mathcal H}_{\rho,r}\subseteq T_{\rho, r}$.  It is enough to show that with high probabiliy
\begin{equation}
 \inf_{Y\in T_{\rho, r}^{\mathcal H}} \left( \sum_{j=1}^{m}\vert \langle a_ja_j^*, Y \rangle \vert^2  \right )^{1/2}\geq \sqrt m/C_3 \label{eq:Gauss_tau}
\end{equation}
We apply Theorem~\ref{KMT} with $E=T_{\rho, r}^{\mathcal H}$.
The next lemma estimates the small ball probability $Q_{\frac{1}{\sqrt 2}}(E;\phi) $ used in Mendelson's method.
\begin{lemma}[see \cite{krt14}]
$Q_{\frac{1}{\sqrt 2}}(E;\phi)  :=\inf_{u\in E}\mathbb P\{ \vert \langle aa^*, u \rangle \vert \geq \frac{1}{\sqrt 2}\} \geq \frac{1}{96}$.
\end{lemma}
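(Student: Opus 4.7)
The plan is to apply the Paley--Zygmund second-moment inequality to the non-negative random variable $Z^2 := |\langle aa^*, u\rangle|^2$, where $Z = a^* u a \in \RR$ (real since $u$ is Hermitian). With threshold $\theta = 1/2$, it suffices to establish the two moment bounds
\[
\mean Z^2 \geq 1 \qquad \text{and} \qquad \mean Z^4 \leq 24\,(\mean Z^2)^2,
\]
because they yield, uniformly in $u \in E \subseteq T^{\CH}_{\rho,r}$,
\[
\PP\!\left(|Z| \geq \tfrac{1}{\sqrt 2}\right) \;\geq\; \PP\!\left(Z^2 \geq \tfrac{1}{2}\mean Z^2\right) \;\geq\; \frac{(1-1/2)^2\,(\mean Z^2)^2}{\mean Z^4} \;\geq\; \frac{1}{96},
\]
where the first inequality uses $\tfrac12 \mean Z^2 \geq \tfrac12$.

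First I would diagonalize the Hermitian matrix as $u = \sum_k \lambda_k v_k v_k^*$ with real eigenvalues $\lambda_k$ and an orthonormal eigenbasis $\{v_k\}$, so that the Frobenius normalization becomes $\sum_k \lambda_k^2 = 1$. Unitary invariance of the complex standard Gaussian distribution implies that $g_k := \langle v_k, a\rangle$ are i.i.d.\ standard complex Gaussians, hence $|g_k|^2$ are i.i.d.\ $\mathrm{Exp}(1)$ variables with moments $\mean|g_k|^{2p} = p!$. Consequently $Z = \sum_k \lambda_k |g_k|^2$, and every moment of $Z$ reduces to a polynomial in the power sums $p_j := \sum_k \lambda_k^j$ and the trace $t := \tr(u) = p_1$.

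Next I would compute these moments via the centering $W := Z - t = \sum_k \lambda_k (|g_k|^2-1)$. Independence combined with the $\mathrm{Exp}(1)$ central moments yields $\mean W = 0$, $\mean W^2 = p_2 = 1$, $\mean W^3 = 2 p_3$, and $\mean W^4 = 3 p_2^2 + 6 p_4$. Expanding $\mean Z^k = \mean (W+t)^k$ then gives
\[
\mean Z^2 = 1 + t^2, \qquad \mean Z^4 = 3 + 6 p_4 + 8 t p_3 + 6 t^2 + t^4.
\]
The elementary inequalities $p_4 \leq p_2^2 = 1$ and $|p_3| \leq \sqrt{p_2 p_4} \leq 1$ (Cauchy--Schwarz applied to the splitting $\lambda_k^3 = \lambda_k \cdot \lambda_k^2$) reduce matters to checking the scalar inequality $9 + 8|t| + 6 t^2 + t^4 \leq 24\,(1+t^2)^2$ for all $t \in \RR$, which follows from a short case analysis separating $|t| \leq 1$ from $|t| \geq 1$.

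Combining both moment estimates with Paley--Zygmund closes the argument. The main obstacle is the fourth-moment calculation: although each ingredient is standard, the cross term $8 t p_3$ couples $\tr(u)$ to a higher-order power sum and does not collapse directly into $(\mean Z^2)^2$, so a careful (if elementary) case analysis is required to secure the universal constant $24$ and thereby the explicit value $1/96$. A less explicit alternative would be to invoke Gaussian hypercontractivity for quadratic chaos, giving $\norm{Z}_{L^4} \leq C\norm{Z}_{L^2}$ for some universal $C$, at the cost of a worse numerical constant in the final small-ball bound.
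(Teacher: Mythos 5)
Your proof is correct and follows essentially the same route as the proof in the cited source \cite{krt14} (which the paper invokes rather than reproduces): Paley--Zygmund applied to $|\langle aa^*,u\rangle|^2$ with threshold $\theta=1/2$, combined with the moment identities $\mathbb{E}Z^2 = 1+\tr(u)^2$ and $\mathbb{E}Z^4 = 3+6\tr(u^4)+8\tr(u)\tr(u^3)+6\tr(u)^2+\tr(u)^4$ for unit-Frobenius-norm Hermitian $u$, which give the ratio bound $\mathbb{E}Z^4 \leq 24\,(\mathbb{E}Z^2)^2$ and hence $\tfrac{1}{4}\cdot\tfrac{1}{24}=\tfrac{1}{96}$. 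All your computations check out, including the final scalar inequality, where with $s=|t|$ the reduction $23s^4+42s^2-8s+15\geq 0$ holds since $42s^2-8s+15$ has negative discriminant, so the case analysis is not even needed.
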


Let now (as in \cite{tr14,krt14})
\begin{equation}
H=\frac{1}{\sqrt m}\sum_{j=1}^m\varepsilon_ja_ja_j^*, \label{eq:H}
\end{equation}
where the $\varepsilon_j$  form a Rademacher sequence.
For any  $M\in \mathcal H_n$ and any $n\times n$ matrix $Y$ of Frobenius norm $1$ and rank at most $r$ $$
 \langle M, Y \rangle \leq \Vert Y \Vert_1 \Vert M \Vert_{\infty}\leq \sqrt r \Vert M \Vert_{\infty}.
$$
Since $E=T_{\rho, r}^{\mathcal H} \subseteq T_{\rho, r}\subseteq \sqrt{1+(1+\rho^{-1})^2}D$, this implies $$W_m(E,\phi) =\mathbb E \sup_{Y\in E}\langle H,Y\rangle \leq \sqrt{1+(1+\rho^{-1})^2}\sqrt r \mathbb E \Vert H\Vert_{\infty}.$$
As in \cite{krt14} we use now that by the arguments in \cite[Section~5.4.1]{ve12} we have $\mathbb E \Vert H\Vert_{\infty}\leq c_2 \sqrt n$ if $m\geq c_3 n$ for suitable constants $c_2, c_3$, 
see also \cite[Section~8]{tr14}. Now the claim of Theorem \ref{mainTh1} follows from   Theorem \ref{KMT}, comp. the proof of Theorem  \ref{th:indepNSP}.
\qed

\begin{remark}
Inspecting the above proof, resp.\ the proofs of the cited statements in \cite{krt14}, we see that the real valued analogue of Theorem \ref{mainTh1} is also true. We even may assume for this  that the $a_j$ are i.i.d.\ subgaussian  with $k$-th moments, where $k\leq 8$, equal to the corresponding $k$-th moments of the Gaussian standard distribution. The constants then depend only on the distribution of the $a_j$. We also note that a similar statement in the  real case  for the recovery of positive semidefinite matrices using subgaussian measurements has  been shown by Chen, Chi and Goldsmith in \cite{cchg13} using the rank restricted isometry property.
\end{remark}

\section{Rank one measurements generated by 4-designs}

Recall the definition of an approximate, weighted $t$-design. 

\begin{definition}[\emph{Approximate $t$-design}, Definition 2 in \cite{ambainis_quantum_2007}] \label{def:approx_design}
We call a weighted set  $\left\{p_i,w_i \right\}_{i=1}^N$ of normalized vectors
an approximate $t$-design of $p$-norm accuracy $\theta_p$, if
\begin{equation}
 \left\| \sum_{i=1}^N p_i \left( w_i w_i^* \right)^{\otimes t} - \int_{\|w \|_{\ell_2} = 1} \left( w w^* \right)^{\otimes t} \mathrm{d}w \right\|_p \leq  \binom{n+t-1}{t}^{-1} \theta_p.
\label{eq:approx_designs}
\end{equation}
\end{definition}

A set of unit vectors obeying $\theta_p = 0 $ for  $1 \leq p \leq \infty$ is called an \emph{exact $t$-design}, 
 see \cite{scott_tight_2006} and also \cite{krt14,gross_partial_2014}.

\begin{theorem} \label{Th2}
Let $\left\{p_i,w_i \right\}_{i=1}^N$ be a an approximate $4$-design with either $\theta_\infty \leq 1/(16r^2)$, or $\theta_1 \leq 1/4$ that furthermore obeys
$
\left\| \sum_{i=1}^N p_i w_i w_i^* - \frac{1}{n} \id \right\|_\infty \leq \frac{1}{n}
$.
Suppose that the measurement operator $\mathcal{A}$ is generated by 
\begin{equation*}
m\geq C_4\rho^{-2}nr\log n
\end{equation*}
measurement matrices $A_j =  \sqrt{n(n+1)}a_j a_j^*$, where each $a_j$ is drawn independently from $\left\{p_i, w_i \right\}_{i=1}^N$. 
Then, with probability at least $1 - \mathrm{e}^{-C_5 m}$, 
$\mathcal{A}$ obeys the Frobenius robust rank null space property of order $r$ with constants $0 < \rho < 1$ and $ \tau = C_6/\sqrt{m}$. 
Here, $C_4,C_5$ and $C_6$ denote  positive constants depending only on the design.
\end{theorem}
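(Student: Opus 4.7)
The plan is to mirror the argument used in the proof of Theorem \ref{mainTh1}, replacing the Gaussian computations by estimates that only use the (approximate) 4-design hypothesis. By Lemma \ref{nsplemma} together with Theorem \ref{th:FrobeniusNSP}, it suffices to exhibit a positive constant $c$ such that, with probability at least $1-\mathrm{e}^{-C_5 m}$,
\[
\inf_{Y \in T_{\rho,r}^{\mathcal H}} \left( \sum_{j=1}^m \abs{\langle A_j, Y \rangle}^2 \right)^{1/2} \geq c\sqrt{m},
\]
which then supplies the NSP with $\tau = C_6/\sqrt{m}$. As in the Gaussian proof, I would invoke Mendelson's small ball method (Theorem~\ref{KMT}) with $E = T_{\rho,r}^{\mathcal H}$ and $\phi = \sqrt{n(n+1)} \, a a^*$, reducing the problem to lower-bounding the small-ball function $Q_{2\xi}(E;\phi)$ and upper-bounding the empirical width $W_m(E;\phi)$.

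For the small-ball estimate, I would compute $\mean \abs{\langle \phi, Y\rangle}^2$ and $\mean \abs{\langle \phi, Y\rangle}^4$ for $Y \in \mathcal H_n$ with $\|Y\|_2 = 1$ by expanding them against $(ww^*)^{\otimes 2}$ and $(ww^*)^{\otimes 4}$ and applying the defining bound \eqref{eq:approx_designs} of an approximate 4-design. Up to terms of order $\theta_p$ (controlled by the hypothesis $\theta_\infty \leq 1/(16r^2)$ or $\theta_1 \leq 1/4$ together with $\|Y\|_1 \leq \sqrt{r}\,\|Y\|_2 = \sqrt{r}$ on the low-rank component of $T_{\rho,r}^{\mathcal H}$), these match the Haar integrals computed in \cite{krt14}, yielding a uniform bound $\mean|\langle \phi,Y\rangle|^2 \geq c_1$ and $\mean|\langle \phi,Y\rangle|^4 \leq c_2$. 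Paley--Zygmund then gives $Q_{2\xi}(E;\phi) \geq c_3 > 0$ for an absolute $\xi$, exactly as in the Gaussian proof.

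For the width, Lemma \ref{lm:SetD} gives $E \subseteq T_{\rho,r} \subseteq \sqrt{1+(1+\rho^{-1})^2}\, D$ and, via the trace duality used for \eqref{wmabsch}, one has $\sup_{Y \in D} \langle H, Y\rangle \leq \sqrt{r}\,\|H\|_{\infty}$, with
\[
H = \frac{1}{\sqrt{m}} \sum_{j=1}^m \varepsilon_j \sqrt{n(n+1)}\, a_j a_j^*.
\]
The crucial input is then the operator-norm bound $\mean \|H\|_\infty \leq c_4 \sqrt{n \log n}$. This is the step where the 4-design hypothesis only barely suffices: unlike the subgaussian case handled via \cite[Section~5.4.1]{ve12}, one must apply a non-commutative Khintchine / matrix Rosenthal inequality (as developed in \cite{krt14,gross_partial_2014}) to the rescaled rank-one summands, invoking the (approximate) 4-design property to control their fourth matrix moments, together with the operator-norm control on $\sum_i p_i w_i w_i^* - n^{-1}\id$ assumed in the hypothesis. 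This is the main technical obstacle, and it is precisely what introduces the extra $\log n$ factor and hence the sampling rate $m \gtrsim \rho^{-2} r n \log n$.

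Combining the two bounds in Theorem \ref{KMT} with $\xi$ an absolute constant and $t = c_5 m$, the condition $m \geq C_4 \rho^{-2} r n \log n$ is exactly what is required to absorb $2 W_m(E;\phi) \lesssim \sqrt{r(1+(1+\rho^{-1})^2)}\sqrt{n \log n}$ into $\xi\sqrt{m}\, Q_{2\xi}(E;\phi)$, leaving the desired lower bound $\geq c\sqrt{m}$ with probability $1 - \mathrm{e}^{-2t^2} \geq 1 - \mathrm{e}^{-C_5 m}$. Applying Lemma \ref{nsplemma} completes the proof.
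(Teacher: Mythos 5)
Your proposal follows essentially the same route as the paper's proof: Mendelson's small ball method (Theorem \ref{KMT}) applied to $T_{\rho,r}^{\mathcal H}$ with $\phi = \sqrt{n(n+1)}\,aa^*$, where the small-ball probability is obtained from the design's second and fourth moments via Paley--Zygmund (this is exactly \cite[Proposition 12]{krt14}, which the paper cites, with its generalization in Section~4.5.1 of loc.\ cit.\ covering the approximate case under the stated $\theta$-conditions), and the empirical width is bounded through Lemma \ref{lm:SetD}, trace duality, and the noncommutative-Khintchine-type estimate $\mathbb{E}\norm{H}_\infty \lesssim \sqrt{n\log n}$ (this is \cite[Proposition 13]{krt14}). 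The only slip --- one the paper itself shares in its proof of Theorem \ref{th:indepNSP} --- is that $t$ in Theorem \ref{KMT} must be taken proportional to $\sqrt{m}$ rather than $m$, since otherwise the subtracted term $\xi t$ would swamp $\xi\sqrt{m}\,Q_{2\xi}$; with $t = c\sqrt{m}$ the argument goes through verbatim and still yields failure probability $\mathrm{e}^{-C_5 m}$.
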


Theorem \ref{mainTh2} readily follows from combining this statement with Theorem \ref{th:DifferenceBetweenSignalAndFeasibleElements}.

\begin{proof}[Proof of Theorem \ref{Th2}]
We start by presenting a proof for measurements drawn from an exact 4-design.
Paralleling the proof of Theorem \ref{mainTh1}, the statement can be deduced from Theorem \ref{KMT} by utilizing results from \cite{krt14}.
Provided that $a$ is randomly chosen from a re-scaled, weighted $4$-design (such that each element has Euclidean length $\| w_i \|_{\ell_2} = \sqrt[4]{(n+1)n}$), 
\cite[Proposition 12]{krt14} implies that
\begin{equation}
\inf_{Z \in T_{\rho,r}} \mathbb{P} \left( |  \mathrm{tr} \left( a a^* Z \right)| \geq \xi \right) \geq  \inf_{\| Z \|_2 =1} \mathbb P \left( |  \mathrm{tr} \left( a a^* Z \right)| \geq \xi \right) \geq \frac{(1-\xi^2)^2}{24} \label{eq:designs_Q}
\end{equation}
is valid for all $\xi \in [0,1]$.
Now let $H = \sum_{i=1}^m \epsilon_i a_i a_i^*$ be as in Theorem \ref{KMT}. Lemma \ref{lm:SetD} together with the fact that $D$ is the convex hull of all matrices of rank at most $r$ and Frobenius norm 1
allows us to conclude for $m \geq 2 n \log n$, that,
\begin{align*}
W_m \left( T_{\rho,r}, aa^* \right)
&= \mathbb{E}  \sup_{M \in T_{\rho,r}} \tr \left( H M \right) 
\leq \sqrt{1+ (1+\rho^{-1})^2} \;\mathbb{E} \sup_{M \in D} \tr \left( H M \right) \\
&\leq \sqrt{1+ (1+\rho^{-1})^2} \sup_{M \in D} \| M \|_1  \mathbb{E} \| H \|_\infty 
\leq   \sqrt{1+ (1+\rho^{-1})^2} \sqrt{r} \;  \mathbb{E} \| H \|_\infty \\
& \leq 3.1049  \sqrt{1+ (1+\rho^{-1})^2 r n \log (2n)},
\end{align*}
where the last bound is due to  \cite[Proposition 13]{krt14}. 
Fixing $0 < \xi < 1/2$ arbitrarily and inserting these two bounds into Theorem~\ref{KMT} completes the proof.

An analogous statement for approximate 4-designs --- with slightly worse absolute constants --- can be obtained by resorting to
the generalized versions of \cite[Propositions 12 and 13]{krt14} presented in Section~4.5.1 in loc.\ cit.\
which are valid for approximate 4-designs that satisfy the conditions stated in Theorem~\ref{Th2}. 
\end{proof}

\section{The positive semidefinite case}

Finally, we focus on the case, where the matrices of interest are Hermitian and positive semidefinite and establish Theorem \ref{mainTh3}. 
In order to arrive at such a statement, we closely follow the ideas presented in \cite{kalev_informationally_2015} 
which in turn were inspired by \cite{bruckstein_uniqueness_2008} containing an analogous statement for a non-negative compressed sensing scenario. 

We require two further concepts from matrix analysis.
For every positive semidefinite matrix $W \succcurlyeq 0$ with eigenvalue decomposition $W = \sum_{i=1}^n \lambda_i w_i w_i^*$ we define its square root to be
$W^{1/2} := \sum_{i=1}^n \sqrt{\lambda_i} w_i w_i^*$. In other words, $W^{1/2}$ is the unique positive semidefinite 
matrix which acts on the eigenspace corresponding to the eigenvalue $\lambda_i$ of $W$ by 
multiplication by $\sqrt{\lambda_i}$. Note that this matrix  obeys $W^{1/2}\cdot W^{1/2} = W$. 
Also, recall that the condition number $\kappa (W)$ of a matrix $W$ is the ratio between its largest and smallest nonzero singular value. For an invertible Hermitian matrix with inverse $W^{-1}$ this number equals
\begin{equation*}
\kappa (W) = \| W \|_\infty \| W^{-1} \|_\infty.
\end{equation*}

Suppose that the measurement process \eqref{eq:measurements} is such that there exists $t\in\RR^m$ which assures that $W:=\sum_{j=1}^m t_jA_j$ is positive definite. We define the artificial measurement map 
\begin{equation}
\nA_{W^{1/2}}:\CH_n\to\RR^m, \quad Z\mapsto \nA(W^{-1/2}Z W^{-1/2}) \label{eq:artificial_measurements}
\end{equation}
and the endomorphism
\begin{equation}
Z \mapsto \tilde{Z} := W^{1/2} Z W^{1/2} \label{eq:mapping}
\end{equation}
of $\mathcal{H}_n$. 
Note that these definitions assure
\begin{equation}\label{eq:RelationBetweenAAB}
\nA(Z)=\nA_{W^{1/2}}(\tilde Z) \quad \mbox{ for all } Z \in \mathcal{H}_n
\end{equation}
and the singular values of $Z$ and $\tilde Z$ satisfy
\begin{equation}\label{eq:SingularValuesZandTildeZ}
 \sigma_j(\tilde Z)\leq \norm{W^{1/2}}_{\infty}^2\sigma_j(Z)=\norm{W}_{\infty}\sigma_j(Z),\quad \sigma_j(Z)\leq \norm{W^{-1/2}}_{\infty}^2\sigma_j(\tilde Z)=\norm{W^{-1}}_{\infty}\sigma_j(\tilde Z),
 \end{equation}
see \cite[p. 75]{BhatiaMatrixAnalysis}. Consequently, the mapping \eqref{eq:mapping} preserves the rank of any matrix.
The following result assures that the artificial measurement operator $\mathcal{A}_{W^{1/2}}$ obeys the Frobenius robust rank null space property, if the original $\mathcal{A}$ does. 

\begin{lemma}\label{NSPvgl}
Suppose that $\CA$ satifies the Frobenius robust rank null space property of order $r$ with constants $\rho$ and $\tau$ and suppose that $W=\sum_{j=1}^m t_jA_j$ is positive definite.
Then $\CA_{W^{1/2}}$ also  obeys the  Frobenius robust rank null space property of order $r$, but with constants $\tilde \rho= \kappa (W) \rho$ and $\tilde \tau=\norm{W}_{\infty}\tau$.
\end{lemma}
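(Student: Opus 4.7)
The plan is to reduce the desired null space property for $\mathcal{A}_{W^{1/2}}$ to the given one for $\mathcal{A}$ via the change of variables $N := W^{-1/2} M W^{-1/2}$, which inverts the endomorphism \eqref{eq:mapping}. Since $W$ is positive definite, $W^{1/2}$ is invertible on $\mathcal{H}_n$, so this change of variables is a bijection. It also yields $\tilde{N} = M$ and, via \eqref{eq:RelationBetweenAAB}, $\mathcal{A}(N) = \mathcal{A}_{W^{1/2}}(M)$.

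The core of the proof is to apply the Frobenius robust rank null space property assumed for $\mathcal{A}$ to the matrix $N$, obtaining
\[
\|N_r\|_2 \leq \frac{\rho}{\sqrt{r}} \|N_c\|_1 + \tau \|\mathcal{A}(N)\|_{\ell_2},
\]
and then translate each of the three terms to corresponding quantities for $M$ using the singular value comparison \eqref{eq:SingularValuesZandTildeZ}. Specifically, since $M = \tilde{N}$, the bound $\sigma_j(M) \leq \|W\|_\infty \sigma_j(N)$ (applied term-by-term and squared) yields $\|M_r\|_2 \leq \|W\|_\infty \|N_r\|_2$, while $\sigma_j(N) \leq \|W^{-1}\|_\infty \sigma_j(M)$ gives $\|N_c\|_1 \leq \|W^{-1}\|_\infty \|M_c\|_1$. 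Combining these with the identity $\|\mathcal{A}(N)\|_{\ell_2} = \|\mathcal{A}_{W^{1/2}}(M)\|_{\ell_2}$ delivers
\[
\|M_r\|_2 \leq \|W\|_\infty\!\left( \frac{\rho}{\sqrt{r}} \|W^{-1}\|_\infty \|M_c\|_1 + \tau \|\mathcal{A}_{W^{1/2}}(M)\|_{\ell_2} \right) = \frac{\kappa(W)\rho}{\sqrt{r}} \|M_c\|_1 + \|W\|_\infty \tau \|\mathcal{A}_{W^{1/2}}(M)\|_{\ell_2},
\]
which is the Frobenius robust rank null space property with the claimed constants.

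The argument is almost mechanical once the substitution is made. The only conceptually delicate point is the observation that, because the singular values on both sides of \eqref{eq:SingularValuesZandTildeZ} are indexed in non-increasing order, the bounds transfer termwise (and hence block-wise to $M_r,M_c$ and $N_r,N_c$) without any index reshuffling. There is no genuine obstacle here; one just has to take care that the singular value inequalities are applied in the correct direction for each term (using $\|W\|_\infty$ to bound $\|M_r\|_2$ in terms of $\|N_r\|_2$, and $\|W^{-1}\|_\infty$ to bound $\|N_c\|_1$ in terms of $\|M_c\|_1$), so that the product of the two operator norms assembles into the condition number $\kappa(W)$.
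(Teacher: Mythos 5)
Your proof is correct and follows essentially the same route as the paper: the paper likewise applies the null space property of $\mathcal{A}$ to $Z = W^{-1/2}\tilde{Z}W^{-1/2}$ and transfers the three terms via the termwise singular value bounds \eqref{eq:SingularValuesZandTildeZ} and the identity \eqref{eq:RelationBetweenAAB}, with your $(M,N)$ playing the roles of $(\tilde{Z},Z)$. Your remark that the singular value inequalities hold with matching indices (so the bounds pass block-wise to $M_r, M_c$) is exactly the point the paper implicitly relies on.
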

\begin{proof}
Let $\tilde Z\in\CH_n$. 
Relations (\ref{eq:RelationBetweenAAB}), (\ref{eq:SingularValuesZandTildeZ}) together with the Frobenius robust rank null space property of $\nA$ imply that
\[
\begin{aligned}
\norm{\tilde Z_r}_2&\leq\norm{W^{1/2}}_{\infty}^2\norm{Z_r}_2\leq\norm{W}_{\infty}\brac{\frac{\rho}{\sqrt r}\norm{Z_c}_1+\tau\norm{\nA(Z)}_{\ell_2}}\\
&\leq \frac{\norm{W}_{\infty}\norm{W^{-1}}_{\infty}\rho}{\sqrt r}\norm{\tilde Z_c}_1+\norm{W}_{\infty}\tau\norm{\nA_{W^{1/2}}(\tilde Z)}_{\ell_2}. \qquad 
\end{aligned}
\] \end{proof}

\begin{lemma}\label{Normdiff}
Suppose there is $t\in\RR^m$ such that $W:=\sum_{j=1}^m t_jA_j$ is positive definite. Let $\tilde X,\tilde Z$ be positive semidefinite. Then,
$$
\norm{\tilde{Z}}_1-\norm{\tilde{X}}_1\leq \norm{t}_{\ell_2}\norm{\nA_{W^{1/2}}(\tilde Z-\tilde X)}_{\ell_2}.
$$
\end{lemma}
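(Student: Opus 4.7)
The plan is to exploit the positive semidefiniteness of $\tilde X,\tilde Z$ to turn the nuclear norms into traces and then to identify the weighted combination $\sum_j t_j (\nA_{W^{1/2}}(\cdot))_j$ with the trace functional. Since $\tilde X,\tilde Z\succcurlyeq 0$, the singular values coincide with the eigenvalues and
\[
\norm{\tilde Z}_1-\norm{\tilde X}_1 \;=\; \tr(\tilde Z)-\tr(\tilde X) \;=\; \tr(\tilde Z-\tilde X).
\]
Thus the task reduces to bounding $\tr(\tilde Z-\tilde X)$ in terms of $\norm{\nA_{W^{1/2}}(\tilde Z-\tilde X)}_{\ell_2}$.

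The key step, and the one I expect to be the crux of the argument, is the identity
\[
\sum_{j=1}^m t_j\bigl(\nA_{W^{1/2}}(Y)\bigr)_j \;=\; \tr(Y) \qquad \text{for every } Y\in\mathcal H_n.
\]
To verify it, I unpack the definition of $\nA_{W^{1/2}}$ and use Hermiticity of the $A_j$:
\[
\sum_{j=1}^m t_j \tr\!\bigl(W^{-1/2}YW^{-1/2}A_j\bigr) \;=\; \tr\!\left(W^{-1/2}YW^{-1/2}\sum_{j=1}^m t_jA_j\right) \;=\; \tr(W^{-1/2}YW^{-1/2}W).
\]
Since $W\succ 0$ commutes with its functional calculus, $W^{-1/2}W=W^{1/2}$, and cyclicity of the trace gives $\tr(W^{-1/2}YW^{1/2})=\tr(W^{1/2}W^{-1/2}Y)=\tr(Y)$, as claimed.

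Applying the identity to $Y=\tilde Z-\tilde X$ and invoking the Cauchy--Schwarz inequality on $\RR^m$ yields
\[
\norm{\tilde Z}_1-\norm{\tilde X}_1 \;=\; \sum_{j=1}^m t_j\bigl(\nA_{W^{1/2}}(\tilde Z-\tilde X)\bigr)_j \;\leq\; \norm{t}_{\ell_2}\,\norm{\nA_{W^{1/2}}(\tilde Z-\tilde X)}_{\ell_2},
\]
which is the desired bound. The only non-routine element is recognizing that the coefficient vector $t$ defining $W$ simultaneously realizes the trace as a linear functional of $\nA_{W^{1/2}}(\cdot)$; everything else is trace manipulation and Cauchy--Schwarz.
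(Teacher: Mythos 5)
Your proposal is correct and follows essentially the same route as the paper: both proofs reduce $\norm{\tilde Z}_1-\norm{\tilde X}_1$ to $\tr(\tilde Z-\tilde X)$, identify this trace with the inner product $\abrac{t,\nA_{W^{1/2}}(\tilde Z-\tilde X)}$ via the definition of $W$ (the paper phrases this through the preimages $Z,X$ under the map $Y\mapsto W^{1/2}YW^{1/2}$, whereas you unpack $\nA_{W^{1/2}}$ directly, which is the same computation read in reverse), and finish with Cauchy--Schwarz.
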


\begin{proof}
The claim follows from positive semidefiniteness of both $\tilde{Z}$ and $\tilde{X}$ and  our choice of the endomorphism (\ref{eq:mapping}).
Indeed,
\begin{align*}
\norm{\tilde Z}_1&=\tr(\tilde Z- \tilde X)+\norm{\tilde X}_1=\tr(W^{1/2}(Z-X)W^{1/2})+\norm{\tilde X}_1=\tr(W(Z-X))+\norm{\tilde X}_1\\
&=\sum_{j=1}^m t_j\tr(A_j(Z-X))+\norm{\tilde X}_1=\langle t,\CA(Z-X) \rangle +\norm{\tilde X}_1\\
&=\abrac{t,\nA_{W^{1/2}}(\tilde Z-\tilde X)}+\norm{\tilde X}_1\leq\norm{t}_{\ell_2}\norm{\nA_{W^{1/2}}(\tilde Z-\tilde X)}_{\ell_2}+\norm{\tilde X}_1.
\end{align*}
Here $X$ resp.\ $Z$ denote the preimage of $\tilde X$ resp $\tilde Z$ under the map (\ref{eq:mapping}). 
\end{proof}

This simple technical statement allows us to establish the main result of this section.

\begin{theorem}\label{th:NSPforPositiveMatrices}
Suppose there exists $t\in\RR^m$ such that $W:=\sum_{j=1}^m t_jA_j$ is positive definite and $\nA$ satisfies the Frobenius robust rank null space property with constants $0<\rho<\frac{1}{\kappa(W)}$ and $\tau>0$. Let $1\leq p\leq 2$. Then, for any $X,Z\succcurlyeq 0$,
\begin{equation}
\norm{Z-X}_p\leq\frac{2C\kappa(W)}{r^{1-1/p}}\norm{X_c}_1+r^{1/p-1/2}\norm{\nA(Z)-\nA(X)}_{\ell_2} \norm{W^{-1}}_{\infty} \brac{\frac{C\norm{t}_2}{\sqrt r}+D\norm{W}_{\infty}\tau}
\label{eq:NSPforPositiveMatrices}
\end{equation}
with constants
$
C=\frac{(1+\kappa(W)\rho)^2}{1-\kappa(W)\rho}
$
and
$ D=\frac{3+\kappa(W)\rho}{1-\kappa(W)\rho}.
$
\end{theorem}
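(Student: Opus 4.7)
The plan is to reduce the statement to Theorem \ref{th:DifferenceBetweenSignalAndFeasibleElements} applied to the transformed measurement map $\mathcal{A}_{W^{1/2}}$ and the images $\tilde X = W^{1/2}XW^{1/2}$, $\tilde Z = W^{1/2}ZW^{1/2}$ of $X,Z$ under the endomorphism \eqref{eq:mapping}. First I invoke Lemma \ref{NSPvgl}: from the null space property of $\mathcal{A}$ with constants $\rho,\tau$ it follows that $\mathcal{A}_{W^{1/2}}$ satisfies the Frobenius robust rank null space property with the modified constants $\tilde\rho = \kappa(W)\rho$ and $\tilde\tau = \|W\|_\infty \tau$. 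The hypothesis $\rho < 1/\kappa(W)$ guarantees $\tilde\rho < 1$, so Theorem \ref{th:DifferenceBetweenSignalAndFeasibleElements} applies to the pair $\tilde X,\tilde Z$ and the map $\mathcal{A}_{W^{1/2}}$, producing a bound on $\|\tilde Z - \tilde X\|_p$ in terms of $\|\tilde Z\|_1 - \|\tilde X\|_1$, $\|\tilde X_c\|_1$ and $\|\mathcal{A}_{W^{1/2}}(\tilde Z - \tilde X)\|_{\ell_2}$, with precisely the constants $C$ and $D$ stated in the theorem (since these are exactly the $\rho,\tau$-dependent constants of Theorem \ref{th:DifferenceBetweenSignalAndFeasibleElements} evaluated at $\tilde\rho = \kappa(W)\rho$).

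Next, I translate this bound back to the original quantities via three elementary identities. The identity \eqref{eq:RelationBetweenAAB} gives $\|\mathcal{A}_{W^{1/2}}(\tilde Z - \tilde X)\|_{\ell_2} = \|\mathcal{A}(Z)-\mathcal{A}(X)\|_{\ell_2}$. The singular value inequalities \eqref{eq:SingularValuesZandTildeZ} applied termwise to the ordered singular value sequences yield both $\|Z-X\|_p \leq \|W^{-1}\|_\infty \|\tilde Z - \tilde X\|_p$ and $\|\tilde X_c\|_1 \leq \|W\|_\infty \|X_c\|_1$. Finally, and this is where positive semidefiniteness of $X$ and $Z$ is essential, Lemma \ref{Normdiff} converts the Schatten-1 difference into a residual: $\|\tilde Z\|_1 - \|\tilde X\|_1 \leq \|t\|_{\ell_2}\|\mathcal{A}(Z-X)\|_{\ell_2}$.

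Substituting these three estimates and collecting the $\|X_c\|_1$-contribution separately from the $\|\mathcal{A}(Z-X)\|_{\ell_2}$-contributions yields the claimed inequality. The rank-truncation coefficient becomes $2C\,\|W^{-1}\|_\infty\|W\|_\infty /r^{1-1/p} = 2C\kappa(W)/r^{1-1/p}$, while the residual picks up an overall prefactor $\|W^{-1}\|_\infty$ multiplying the bracketed sum of $C\|t\|_{\ell_2}/r^{1-1/p} = C\|t\|_{\ell_2}\, r^{1/p-1/2}/\sqrt r$ (coming from the Schatten-1 difference via Lemma \ref{Normdiff}) and $D\|W\|_\infty \tau\, r^{1/p-1/2}$ (coming directly from the residual term in Theorem \ref{th:DifferenceBetweenSignalAndFeasibleElements}). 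No serious obstacle is anticipated; the argument is a bookkeeping exercise that assembles Lemmas \ref{NSPvgl} and \ref{Normdiff} with Theorem \ref{th:DifferenceBetweenSignalAndFeasibleElements}, and the only delicate point is checking that the hypothesis $\rho<1/\kappa(W)$ is exactly what renders the transferred null space property nontrivial, so that its constants $C,D$ are well-defined.
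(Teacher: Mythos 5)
Your proposal is correct and follows essentially the same route as the paper's own proof: transfer the null space property to $\nA_{W^{1/2}}$ via Lemma \ref{NSPvgl}, apply Theorem \ref{th:DifferenceBetweenSignalAndFeasibleElements} to $\tilde X,\tilde Z$, control $\norm{\tilde Z}_1-\norm{\tilde X}_1$ by Lemma \ref{Normdiff}, and translate back using \eqref{eq:RelationBetweenAAB} and \eqref{eq:SingularValuesZandTildeZ}, with the identical constant bookkeeping (in particular $\norm{W}_\infty\norm{W^{-1}}_\infty=\kappa(W)$ for the $\norm{X_c}_1$ term and $r^{1/p-1}=r^{1/p-1/2}/\sqrt r$ for the residual term). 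No gaps.
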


\begin{proof}
Let $X,Z \succcurlyeq 0 $ be arbitrary. Then
\begin{equation*}
\| Z - X \|_p = \left\| W^{-1/2} \left( \tilde{Z} - \tilde{X} \right) W^{-1/2} \right\|_p \leq \| W^{-1} \|_\infty \| \tilde{Z} - \tilde{X} \|_p
\end{equation*}
holds and the resulting matrices $\tilde{Z},\tilde{X}$ are again positive-semidefinite. 
Also, since $\nA$ satisfies the Frobenius robust rank null space property with constants $0<\rho<\frac{1}{\kappa(W)}$ and $\tau>0$, 
Lemma \ref{NSPvgl} assures that $\mathcal{A}_{W^{1/2}}$ does the same with constants $0 < \tilde{\rho} < 1$ and $\tilde{\tau} = \| W \|_\infty \tau >0$.
Combining this with Theorem \ref{th:DifferenceBetweenSignalAndFeasibleElements} and Lemma \ref{Normdiff} implies
\[
\begin{aligned}
\norm{\tilde Z-\tilde X}_p& \leq\frac{C}{r^{1-1/p}}\brac{\norm{\tilde Z}_1-\norm{\tilde X}_1+2\norm{\tilde X_c}_1}+D\norm{W}_{\infty}\tau r^{1/p-1/2}\norm{\nA_{W^{1/2}}(\tilde Z-\tilde X)}_{\ell_2}\\
&\leq\frac{C}{r^{1-1/p}}\brac{\norm{t}_{\ell_2}\norm{\nA_{W^{1/2}}(\tilde Z-\tilde X)}_{\ell_2}+2\norm{\tilde X_c}_1}+D\norm{W}_{\infty}\tau r^{1/p-1/2}\norm{\nA_{W^{1/2}}(\tilde Z-\tilde X)}_{\ell_2}\\
&\leq \frac{2C}{r^{1-1/p}}\norm{\tilde X_c}_1+r^{1/p-1/2}\norm{\nA_{W^{1/2}}(\tilde Z-\tilde X)}_{\ell_2}\brac{\frac{C\norm{t}_{\ell_2}}{\sqrt r}+D\norm{W}_{\infty}\tau }.
\end{aligned}
\]
The desired statement follows from this estimate by taking into account (\ref{eq:RelationBetweenAAB}) and (\ref{eq:SingularValuesZandTildeZ}). 
\end{proof}

Note that in contrast to other recovery guarantees established here, Theorem~\ref{th:NSPforPositiveMatrices} does not require any convex optimization procedure. 
However, it does require the measurement process to obey an additional criterion: the intersection of the span of measurement matrices with the cone of positive definite matrices must be non-empty.
We show that this is the case for the rank-one projective measurements introduced in the previous section with high probability. 
Since it has already been established that sufficiently many measurements of this kind obey the Frobenius robust rank null space property with high probability (see Theorems \ref{mainTh1} and \ref{Th2} and their respective proofs),
Theorem \ref{mainTh3} can then be established by taking the union bound over the individual probabilities of failure.

\begin{proposition} \label{prop:WGauss}
Suppose $m \geq 4 n$  and let $A_1,\ldots,A_m$ be matrices of the form $a_j a_j^*$, where each $a_i \in \mathbb{C}^n$ is a random complex standard Gaussian vector.
Then with probability at least $1- 2\mathrm{e}^{-C_{10} m}$, $W := \frac{1}{m} \sum_{j=1}^m A_j$ is positive definite and obeys
\begin{equation}
\max \left\{ \| W \|_\infty, \| W^{-1} \|_\infty,  \kappa (W) \right\}  \leq C_{11}.
\label{eq:WGauss}
\end{equation}
Here, $C_9, C_{10}, C_{11}>0$ denote universal positive constants.
\end{proposition}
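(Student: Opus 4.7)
The plan is to recognise $W = \tfrac{1}{m}\sum_{j=1}^m a_j a_j^*$ as a sample covariance matrix and to deduce the claim from a non-asymptotic bound on the extreme singular values of a tall complex Gaussian matrix. Since each $a_j$ is a complex standard Gaussian vector, $\mathbb{E}[a_j a_j^*] = I_n$ and hence $\mathbb{E}[W] = I_n$; the statement is really that the whole spectrum of $W$ concentrates around $1$ once $m$ is a large enough multiple of $n$. Writing $A \in \CC^{m\times n}$ for the matrix whose $j$-th row is $a_j^*$, one has $W = \tfrac{1}{m} A^* A$, and consequently $\lambda_i(W) = \sigma_i(A)^2 / m$ for every $i$.

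The key input I would quote is the Davidson--Szarek / Gaussian concentration bound for singular values of a tall Gaussian matrix (see e.g.~\cite[Section~5.3]{ve12}): there exists a universal constant $c_0 > 0$ such that, for every $t \geq 0$,
\begin{equation*}
\mathbb{P}\bigl(\sqrt{m} - \sqrt{n} - t \;\leq\; \sigma_{\min}(A) \leq \sigma_{\max}(A) \;\leq\; \sqrt{m} + \sqrt{n} + t \bigr) \;\geq\; 1 - 2 e^{-c_0 t^2}.
\end{equation*}
I would instantiate this with $t = \alpha \sqrt{m}$ for a small fixed $\alpha \in (0, \tfrac{1}{2})$. Under the hypothesis $m \geq 4n$ one has $\sqrt{n} \leq \tfrac{1}{2}\sqrt{m}$, so the bounds become $\sigma_{\min}(A) \geq (\tfrac12 - \alpha)\sqrt{m}$ and $\sigma_{\max}(A) \leq (\tfrac32 + \alpha)\sqrt{m}$, on an event of probability at least $1 - 2 e^{-C_{10} m}$ with $C_{10} = c_0 \alpha^2$.

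Dividing by $\sqrt{m}$ converts these into eigenvalue bounds on $W$: on the same event
\begin{equation*}
\bigl(\tfrac12 - \alpha\bigr)^2 \;\leq\; \lambda_{\min}(W) \;\leq\; \lambda_{\max}(W) \;\leq\; \bigl(\tfrac32 + \alpha\bigr)^2.
\end{equation*}
In particular $W$ is positive definite and, setting $\beta_1 := (\tfrac12 - \alpha)^2$ and $\beta_2 := (\tfrac32 + \alpha)^2$, one immediately reads off $\norm{W}_\infty \leq \beta_2$, $\norm{W^{-1}}_\infty \leq \beta_1^{-1}$ and $\kappa(W) = \norm{W}_\infty \norm{W^{-1}}_\infty \leq \beta_2/\beta_1$, which is exactly \eqref{eq:WGauss} with $C_{11} := \max\{\beta_2, \beta_1^{-1}, \beta_2/\beta_1\}$.

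The only delicate point is essentially bookkeeping: one has to check that $\alpha$ can be chosen as an absolute constant (independent of $m,n$) while still giving a positive lower bound $\tfrac12 - \alpha > 0$ \emph{and} a probability decaying exponentially in $m$. The quadratic Gaussian concentration in $t$ makes this straightforward, but it is exactly the step that uses the specific threshold $m \geq 4n$ rather than some larger multiple of $n$; if one preferred a sharper absolute constant $C_{11}$ one could either strengthen the threshold to $m \geq C n$ for larger $C$ or invoke the Marchenko--Pastur-type sharper complex-Gaussian bounds, neither of which affects the form of the conclusion.
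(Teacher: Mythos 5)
Your proposal is correct and takes essentially the same route as the paper's proof: both recognize $W = \frac{1}{m}A^*A$ as a Wishart-type sample covariance matrix, apply the Davidson--Szarek concentration bound for the extreme singular values of a tall Gaussian matrix with deviation proportional to $\sqrt{m}$, and use $m \geq 4n$ to keep $\lambda_{\min}(W)$ bounded away from zero. The only cosmetic differences are that the paper fixes the deviation parameter (taking $\tau = 1/4$, which yields the explicit constant $C_{11}=49$) and carries out the computation for real Gaussian vectors while remarking that the complex case is analogous, whereas you keep a free parameter $\alpha$ and quote the complex-Gaussian bound directly.
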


Note that such a construction corresponds to setting $t = \frac{1}{m} (1,\ldots,1)^T \in \mathbb{R}^m$ which obeys $\| t \|_{\ell_2} = 1/\sqrt{m}$.

\begin{proof}
For the sake of simplicity, we are going to establish the statement for real standard Gaussian vectors. Establishing the complex case can be done analogously and leads to slightly different constants.
Let $e_1,\ldots,e_m$ denote the standard basis in $\mathbb{R}^m$. We define the auxiliary $m \times n$ matrix $A := \sum_{i=1}^m e_i a_i^*$ which obeys
\begin{equation*}
\frac{1}{m} A^T A = \frac{1}{m} \sum_{i=1}^m a_i e_i^* \sum_{j=1}^m e_j a_j^* = \frac{1}{m} \sum_{i=1}^m a_i a_i^* = \frac{1}{m} \sum_{i=1}^m A_i = W. \label{eq:wishart}
\end{equation*}
Also, by construction, $A$ is a random matrix with standard Gaussian entries.
Essentially, this relation implies that $m W$ is Wishart-distributed.
From \eqref{eq:wishart} and the defining properties of eigen- and singular values we infer that
\begin{equation}
\sqrt{ \lambda_{\min} (W)} = \frac{1}{\sqrt{m}} \sqrt{ \lambda_{\min} \left( A^T A \right) } = \frac{1}{\sqrt{m}} \lambda_{\min} \left( \sqrt{A^T A} \right) = \frac{1}{\sqrt{m}} \sigma_{\min} (A)
\label{eq:eig_sing}
\end{equation}
and an analogous statement is true for the largest eigenvalue $\lambda_{\max}(W)$.
Since $A$ is a Gaussian $m \times n$ matrix, concentration of measure implies that for any $\tilde{\tau} > 0$
\begin{equation}
\sqrt{m} - \sqrt{n} -\tilde{\tau} \leq \sigma_{\min}(A) \leq \sigma_{\max}(A) \leq \sqrt{m} + \sqrt{n} + \tilde{\tau} \label{eq:wishart_aux1}
\end{equation}
with probability at least $1 - 2\mathrm{e}^{-\tilde{\tau}^2/2}$ --- see e.g. \cite[Corollary 5.35]{ve12} or \cite[Theorem 9.26]{FoucartRauhut}. 
Combining this with \eqref{eq:eig_sing}, recalling the assumption $m \geq 4 n$ and defining $\tau = \tilde{\tau}/\sqrt{m}$ allows for establishing
\begin{equation*}
\frac{1}{2} - \tau \leq 1 - \sqrt{\frac{n}{m}} - \tau \leq \sqrt{\lambda_{\min}(W)} \leq \sqrt{\lambda_{\max} (W)} \leq 1 + \sqrt{\frac{n}{m}} + \tau \leq \frac{3}{2} + \tau
\end{equation*}
with probability at least $1 - 2\mathrm{e}^{-m\tau^2/2}$. This inequality chain remains valid, if we square the individual terms. Setting $\tau = 1/4$ thus allows us to conclude
\begin{equation}
\max \left\{ \lambda_{\max}(W), \lambda_{\min}^{-1} (W),  \frac{\lambda_{\max}(W)}{\lambda_{\min}(W)} \right\} 
\leq \left( \frac{ 3/2 + \tau}{1/2 - \tau} \right)^2 = 49 = C_{11},
\end{equation}
with probability at least $1 - 2\mathrm{e}^{-m/32}$. 
\end{proof}

Alternatively, we could have relied on bounds on the condition number of Gaussian random matrices presented in \cite{chen_condition_2005}. 
While these bounds would be slightly tighter, we feel that our derivation is more illustrative and it suffices for our purpose.

\begin{proposition} \label{prop:W4design}
Suppose $m \geq \tilde{C}_4 nr \log n $ and let $A_1,\ldots,A_m$ be matrices of the form $a_j a_j^*$, where each $a_j \in \mathbb{C}^n$ is chosen independently from a weighted set $\left\{p_i,w_i \right\}_{i=1}^N$ of vectors obeying $\| w_i \|_{\ell_2}^2 = \sqrt{n(n+1)}$ for all $1 \leq i \leq N$ and 
\begin{equation}
\left\| \sum_{i=1}^N p_i w_i w_i^* - \sqrt{\frac{n+1}{n}} \id \right\|_\infty \leq \frac{1}{2}.
\label{eq:approx_tight_frame}
\end{equation}  
Then with probability at least $1 - \mathrm{e}^{-\gamma \tilde{C}_4 r}$, the matrix $W := \frac{1}{m}\sum_{j=1}^m A_j$ is positive definite and obeys
\begin{equation}
\max \left\{ \| W \|_\infty,\  \| W^{-1} \|_\infty,\ \kappa (W) \right\}  \leq 8.
\label{eq:W4design}
\end{equation}
Here, $\tilde{C}_4 >1$ and $0 < \gamma \leq 1$ denote absolute constants of adequate size.
\end{proposition}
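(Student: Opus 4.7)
My plan is to show that $W = \frac{1}{m}\sum_{j=1}^m a_j a_j^*$ concentrates sharply around its expectation, which the hypothesis \eqref{eq:approx_tight_frame} already forces to be a well-conditioned perturbation of the identity. Everything should follow from a single application of the matrix Chernoff inequality for sums of independent positive semidefinite matrices, since each summand $A_j = a_j a_j^*$ is rank-one positive semidefinite with uniformly bounded operator norm $\|A_j\|_\infty = \|a_j\|_{\ell_2}^2 = \sqrt{n(n+1)}$.

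First I would record the baseline estimate on the mean. Since $a_j$ is drawn from $\{p_i,w_i\}$,
\[
\mathbb{E}[A_j] = \sum_{i=1}^N p_i w_i w_i^*,
\]
and the hypothesis \eqref{eq:approx_tight_frame} gives $\lambda_{\min}(\mathbb{E}[A_j]) \geq \sqrt{(n+1)/n}-1/2 \geq 1/2$ and $\lambda_{\max}(\mathbb{E}[A_j]) \leq \sqrt{(n+1)/n}+1/2 \leq 2$. So $\mu_{\min} := \lambda_{\min}(m\,\mathbb{E}[W]) \geq m/2$ and $\mu_{\max} := \lambda_{\max}(m\,\mathbb{E}[W]) \leq 2m$.

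Next I would invoke the matrix Chernoff bounds (Tropp): for independent PSD matrices $A_1,\dots,A_m$ with $\|A_j\|_\infty \leq R$ and any $0 < \delta < 1$,
\[
\mathbb{P}\bigl(\lambda_{\min}(\textstyle\sum_j A_j) \leq (1-\delta)\mu_{\min}\bigr) \leq n \left(\tfrac{e^{-\delta}}{(1-\delta)^{1-\delta}}\right)^{\mu_{\min}/R},
\]
\[
\mathbb{P}\bigl(\lambda_{\max}(\textstyle\sum_j A_j) \geq (1+\delta)\mu_{\max}\bigr) \leq n \left(\tfrac{e^{\delta}}{(1+\delta)^{1+\delta}}\right)^{\mu_{\max}/R}.
\]
Here $R = \sqrt{n(n+1)} \leq 2n$, so $\mu_{\min}/R \geq m/(4n)$ and $\mu_{\max}/R \leq 2m/n$. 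Choosing a fixed small $\delta$ (say $\delta = 1/3$) makes both base factors a universal constant strictly less than $1$, so the right-hand side of each bound becomes $n \cdot c^{\,m/n}$ for some $c \in (0,1)$. Under the hypothesis $m \geq \tilde{C}_4 n r \log n$ this is at most $n \cdot n^{-\alpha \tilde{C}_4 r}$ for some $\alpha > 0$, which for $\tilde{C}_4$ large enough is at most $\tfrac{1}{2}e^{-\gamma \tilde{C}_4 r}$. A union bound over the two tails then yields the claimed failure probability $\mathrm{e}^{-\gamma \tilde{C}_4 r}$.

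Finally, on the complementary event, $\lambda_{\min}(W) \geq (1-\delta)\mu_{\min}/m \geq (2/3)(1/2) = 1/3$ and $\lambda_{\max}(W) \leq (1+\delta)\mu_{\max}/m \leq (4/3)\cdot 2 = 8/3$. Hence $\|W\|_\infty \leq 8/3 \leq 8$, $\|W^{-1}\|_\infty \leq 3 \leq 8$, and $\kappa(W) = \lambda_{\max}(W)/\lambda_{\min}(W) \leq 8$, which is exactly \eqref{eq:W4design}. The main obstacle here is purely bookkeeping, namely arranging the constant $\delta$ and the absolute constant $\tilde{C}_4$ so that all three bounds in \eqref{eq:W4design} land under $8$ simultaneously and the probability translates from $e^{-\alpha m/n}$ to $e^{-\gamma \tilde{C}_4 r}$; conceptually the result is nothing more than a one-shot matrix Chernoff estimate made possible by the tight uniform bound $\|a_j\|_{\ell_2}^2 \leq 2n$ together with the approximate tight-frame hypothesis.
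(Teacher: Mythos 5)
Your proof is correct, but it takes a genuinely different route from the paper. The paper centers the summands, setting $M_k = \frac{1}{m}\left(a_k a_k^* - \mathbb{E}\left[a_k a_k^*\right]\right)$, and applies the matrix \emph{Bernstein} inequality with the uniform bound $R = \sqrt{(n+1)n}/m$ and variance proxy $\sigma^2 = 2\sqrt{(n+1)n}/m$, obtaining the additive deviation bound $\left\| W - \sqrt{(n+1)/n}\,\id \right\|_\infty \leq 3/4$ and hence $\lambda_{\min}(W) \geq 1/4$, $\lambda_{\max}(W) \leq 2$. You instead apply the matrix \emph{Chernoff} inequality directly to the uncentered positive semidefinite summands, which buys you a shorter argument: no centering and no second-moment computation are needed, only the deterministic norm bound $\|A_j\|_\infty = \sqrt{n(n+1)}$ and the spectral bounds on $\mathbb{E}[A_j]$ that follow from \eqref{eq:approx_tight_frame}. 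The trade-off is in the constants: your multiplicative bounds give $\lambda_{\min}(W) \geq 1/3$ and $\lambda_{\max}(W) \leq 8/3$, so $\kappa(W) \leq 8$ with no margin to spare, whereas the paper's route leaves slack (indeed, the paper's own footnote explicitly acknowledges that matrix Chernoff would work here, at the price of slightly worse numerical constants in the exact tight-frame case). One small slip in your write-up: for the upper-tail Chernoff bound you record $\mu_{\max}/R \leq 2m/n$, but since the base $e^{\delta}/(1+\delta)^{1+\delta}$ is less than $1$, what you actually need is a \emph{lower} bound on the exponent $\mu_{\max}/R$; this follows immediately from $\mu_{\max} \geq \mu_{\min} \geq m/2$, giving $\mu_{\max}/R \geq m/(4n)$, so the conclusion $n\,c^{\,m/(4n)}$ stands for both tails and your conversion to the failure probability $\mathrm{e}^{-\gamma \tilde{C}_4 r}$ via $m \geq \tilde{C}_4 n r \log n$ goes through as claimed (in fact, like the paper's proof, it yields the stronger bound $n^{1 - c\tilde{C}_4 r}$).
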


Note that condition (\ref{eq:approx_tight_frame}) is slightly stronger than the corresponding condition  in Theorem \ref{Th2}.  Also, the construction of $W$ again uses $t = \frac{1}{m} \left( 1 \ldots,1 \right)^T \in \mathbb{R}^m$. 

\begin{proof}
In order to show this statement, we are going to employ the matrix Bernstein inequality\footnote{Resorting to the matrix Chernoff inequality would allow for establishing a similar result. However, in the case of an exact tight frame, the numerical constants obtained by doing so are slightly worse.} \cite[Theorem 6.1]{tropp_user_2012}, see also \cite{ahwi02},
in order to establish
\begin{equation}
\left\| W - \sqrt{\frac{n+1}{n}} \id \right\|_\infty \leq \frac{3}{4}
\label{eq:W4design_aux1}
\end{equation}
with high probability. Let $\lambda_1 (W),\ldots,\lambda_n (W)$ denote the eigenvalues of $W$. Then such a bound together with 
the definition of the operator norm  assures
\begin{align*}
1 - \lambda_{\min}(W) &\leq  \sqrt{\frac{n+1}{n}}- \lambda_{\min}(W)  \leq \left| \sqrt{\frac{n+1}{n}} - \lambda_{\min}(W) \right| \leq \max_{ 1 \leq i \leq n} \left| \sqrt{\frac{n+1}{n}}- \lambda_i (W) \right| \\&= \left\| \sqrt{\frac{n+1}{n}}\id - W \right\|_\infty \leq 3/4, \\
\lambda_{\max}(W) - \sqrt{\frac{n+1}{n}} & \leq \left| \lambda_{\max}(W) - \sqrt{\frac{n+1}{n}} \right| \leq \max_{1 \leq i \leq n} \left| \sqrt{\frac{n+1}{n}} - \lambda_i (W) \right|\\ & = \left\| W - \sqrt{\frac{n+1}{n}}\id \right\|_\infty \leq 3/4.
\end{align*}
This in turn implies
$
\lambda_{\min} (W) \geq 1/4 
$
as well as
$
\lambda_{\max}(W)\leq 3/4+\sqrt{\frac{n+1}{n}} \leq 2
$ for $n\geq 2$ and the desired bound \eqref{eq:W4design} readily follows.

It remains to assure the validity of \eqref{eq:W4design_aux1} with high probability.
To this end, for $1 \leq k \leq m$, we define the random matrices $M_k := \frac{1}{m} \left( a_k a_k^* - \mathbb{E} \left[ a_k a_k^* \right] \right)$, where each $a_k$ is chosen independently at random from the weighted set $\left\{ p_i,w_i \right\}_{i=1}^N$.
This definition assures
\begin{equation}
\left\| W - \sqrt{\frac{n+1}{n}}\id \right\|_\infty
= \left\| \sum_{k=1}^m \big(M_k + \mathbb{E} \left[ a_k a_k^* \right]\big) - \sqrt{\frac{n+1}{n}} \id \right\|_\infty
\leq \left\| \sum_{k=1}^m M_k \right\|_\infty + \frac{1}{2}
\label{eq:W4design_aux2}
\end{equation}
via the triangle inequality and assumption \eqref{eq:approx_tight_frame}
and along similar lines
\begin{equation}
\left\| \mathbb{E} \left[ a_k a_k^* \right] \right\|_\infty \leq \frac{1}{2} + \sqrt{\frac{n+1}{n}} \leq 2
\end{equation}
readily follows for any $1 \leq k \leq m$. 
The random matrices $M_k$ have mean-zero by construction and each of them obeys
\begin{align*}
\left\| M_k \right\|_\infty &= \frac{1}{m} \left\| a_k a_k^* - \mathbb{E} \left[ a_k a_k^* \right] \right\|_\infty
\leq \frac{1}{m} \max \left\{ \| a_k a_k^* \|_\infty, \| \mathbb{E} \left[ a_k a_k^* \right] \|_\infty \right\} 
= \frac{1}{m}\| a_k \|_{\ell_2}^2 = \frac{\sqrt{(n+1)n}}{m},
\end{align*}
as well as
\begin{align*}
\left\| \mathbb{E} \left[ M_k^2 \right] \right\|_\infty
&= \frac{1}{m^2} \left\| \mathbb{E} \left[ \left( a_k a_k^* \right)^2 \right] - \mathbb{E} \left[ a_k a_k^* \right]^2 \right\|_\infty 
= \frac{1}{m^2}\left\| \sqrt{(n+1)n} \mathbb{E } \left[ a_k a_k^* \right] - \mathbb{E} \left[ a_k a_k^* \right]^2 \right\|_\infty \\
&=\frac{2}{m^2} \max \left\{ \sqrt{(n+1)n} \left\| \mathbb{E} \left[  a_k a_k^* \right] \right\|_\infty, \left\| \mathbb{E} \left[ a_k a_k^* \right] \right\|_\infty^2 \right\} 
 \leq \frac{2 \sqrt{(n+1)n}}{m^2}. 
\end{align*}
Hence $$\left\|\sum_{k=1}^m \mathbb{E} \left[ M_k^2 \right] \right\|_\infty\leq  \frac{2 \sqrt{(n+1)n}}{m}.$$
These bounds allow us to set $R:= \frac{\sqrt{(n+1)n}}{m}$, $\sigma^2 := \frac{2 \sqrt{(n+1)n}}{m}$
and apply the matrix Bernstein inequality (\cite[Theorem 6.1]{tropp_user_2012}, \cite{ahwi02}) in order to establish
\begin{equation*}
\Pr \left[ \left\| \sum_{k=1}^m M_k \right\|_\infty \geq \tau \right] \leq  n\; \mathrm{exp}\left(-\frac{\tau^2/2}{\sigma^2+R\tau} \right) \leq n \; \mathrm{exp} \left( - \frac{3 \tau^2 m}{16 \sqrt{(n+1)n}} \right)
\end{equation*}
for $0 < \tau \leq \sigma^2/R =2$. 
Setting $\tau = 1/4$ and inserting $m \geq \tilde{C}_4 n r \log(n) $ (where $\tilde{C}_4 $ is large enough)
assures that \eqref{eq:W4design_aux1} holds with probability of failure smaller than $\mathrm{e}^{-\gamma \tilde{C}_4 r}$ via \eqref{eq:W4design_aux2} for a suitable $\gamma >0$.  
\end{proof}

Finally, we are ready to prove Theorem \ref{mainTh3}. 

\begin{proof}[Proof of Theorem \ref{mainTh3}]

We content ourselves with establishing the design case and point out that the Gaussian case can be proved analogously (albeit with different constants). 
Fix $0 < \rho < 1/8$ and suppose that $$m \geq C_3 \left( 1 + \left( 1 + \rho^{-1} \right)^2 \right) n r  \log n$$ measurement vectors have been chosen independently from an approximate 4-design.
Theorem \ref{Th2}  then assures that the resulting measurement operator $\mathcal{A}$ obeys the robust Frobenius rank null space property with constants $\rho < 1/8$ and $\tau \leq \tilde{C}_6/ \sqrt{m}$ with probability at least $1 - \mathrm{e}^{-\tilde{C}_5 m}$. 
Likewise, Proposition \ref{prop:W4design} assures that with probability at least $1 - \mathrm{e}^{-\gamma \tilde{C}_4 r}$, setting $t = \frac{1}{\sqrt{m}} (1,\ldots,1)^T \in \mathbb{R}^m$ 
leads to a positive definite $W = \sum_{j=1}^m t_j A_j$ obeying $\kappa (W) \leq 8$. Note that such a $t$ obeys $\| t \|_{\ell_2} = 1/\sqrt{m}$
and also $0 < \rho <1/8 \leq 1/ \kappa (W)$ holds by construction. 
The union bound over these two assertions failing implies that the requirements of Theorem \ref{th:NSPforPositiveMatrices} are met with probability at least
\begin{equation*}
1 - \mathrm{e}^{-\tilde{C}_5 m} - \mathrm{e}^{-\gamma \tilde{C}_4 r} \geq 1 - \mathrm{e}^{- \tilde{\gamma} \tilde{C}_4 r},
\end{equation*}
 where $\tilde{\gamma}$ denotes a sufficiently small absolute constant and $ \tilde{C}_4=m/n r  \log n$. The constants $C_4$ and $s$ presented in Theorem \ref{mainTh3}
then amount to $s=\tilde{\gamma} \tilde{C}_4$ and $C_2\geq \tilde{C}_4$.
Inserting $\| t \|_{\ell_2} = 1/\sqrt{m}$ and the bounds on $\| W \|_\infty, \| W^{-1} \|_\infty, \kappa (W)$ from Proposition \ref{prop:W4design} into \eqref{eq:NSPforPositiveMatrices}
yields
\begin{align*}
\norm{Z-X}_p & \leq\frac{2C\kappa(W)}{r^{1-1/p}}\norm{X_c}_1+r^{1/p-1/2}\norm{\nA(Z)-\nA(X)}_{\ell_2} \norm{W^{-1}}_{\infty} \brac{\frac{C\norm{t}_2}{\sqrt r}+D\norm{W}_{\infty}\tau} \\
&\leq \frac{16 C}{r^{1-1/p}} \norm{X_c}_1 + 8 r^{1/p-1/2} \| \mathcal{A} (Z) - \mathcal{A}(X) \|_{\ell_2} \left( \frac{C}{\sqrt{rm}} + \frac{9D \tilde{C}_6}{\sqrt{m}} \right) \\
& \leq \frac{C_3}{r^{1-1/p}} \norm{X_c}_1 + \frac{C_4 r^{1/p-1/2}}{\sqrt{m}} \left\| \mathcal{A} (Z) - \mathcal{A}(X) \right\|_{\ell_2}
\end{align*}
with constants $C_3 = 16 C$ and $C_4 = 8 C + 8 D \tilde{C}_6$ (where $C,D$ were introduced in Theorem \ref{th:NSPforPositiveMatrices} and $\tilde{C}_6$ is ). 
\end{proof}

\begin{remark} \label{rem:quantum_improvement}
In Corollary \ref{cor:tomography} we focus on recovering density operators, i.e., positive semidefinite matrices $X$ with trace one. 
This trace constraint can be re-interpreted as an additional perfectly noiseless measurement
\begin{equation*}
b_0 = \tr \left( \id X \right) = \tr (X) = 1
\end{equation*}
corresponding to the measurement matrix $A_0 = \id$. Setting $t = (1,0,\ldots,0)^T \in \mathbb{R}^{m+1}$ in Theorem \ref{th:NSPforPositiveMatrices}
then leads to $W = \id$ which obeys $\| W \|_{\infty} = \| W^{-1} \|_\infty = \kappa (W) = 1$ and furthermore assures that the endomorphism \eqref{eq:mapping} is trivial, i.e. $\tilde{Z} = Z$ for all $Z \in \mathcal{H}_n$.
Moreover, these properties render the estimate provided in Lemma \ref{Normdiff} redundant, because any two density operators $X,Z$ obey
\begin{equation*}
\| \tilde{Z} \|_1 - \| \tilde{X} \|_1
= \| Z \|_1 - \| Z \|_1 = \tr \left( Z \right) - \tr \left( X \right) =0.
\end{equation*}
Such a refinement then allows for dropping the term containing $\| t \|_{\ell_2}$ in \eqref{eq:NSPforPositiveMatrices} and by inserting $W = \id$ we arrive at the following conclusion: 
Any measurement operator $\mathcal{A}$ that obeys the Frobenius robust rank null space property with constants $0 < \rho <1 $ and $\tau >0$ 
assures for $1 \leq p \leq 2$ and any two density operators $X,Z$:
\begin{equation*}
\| Z - X\|_p \leq \frac{2 \left( 1 + \rho \right)^2}{1 - \rho} \norm{X_c}_1 + \tau \frac{r^{1/p - 1/2}(3 + \rho)}{1 - \rho} \| \mathcal{A}(Z) - \mathcal{A} (X) \|_{\ell_2}.
\end{equation*} 
Corollary \ref{cor:tomography} then follows from combining this assertion with Theorem \ref{Th2} and setting $p=1$.

\end{remark}

\subsection*{Acknowledgements}

MK, HR and UT acknowledge funding by the European Research Council through the Starting Grant StG 258926.
The work of RK is supported by the Excellence Initiative of the
German Federal and State Governments (Grants ZUK 43 \& 81), the ARO under
contracts W911NF-14-1-0098 and W911NF-14-1-0133 (Quantum
Characterization, Verification, and Validation), the Freiburg Research
Innovation Fund, the DFG (GRO 4334 \& SPP1798), and the 
State Graduate
Funding Program of Baden-W\"urttemberg.

\section*{Appendix}

\subsection*{A brief review of finite-dimensional quantum mechanics}

For the sake of being self-contained we briefly recapitulate crucial concepts of (finite dimensional) quantum mechanics without going too much into detail. For further reading on the topics introduced here, we defer the interested reader to \cite[Chapter 2.2]{nielsen_quantum_2010}. 

An isolated  quantum mechanical system is fully described by its \emph{density operator}.
For a finite $n$-dimensional quantum system, such a density operator corresponds to an Hermitian, positive semidefinite matrix $\rho$ with unit trace.

The most general notion of a measurement is that of a \emph{positive operator-valued measure} (POVM). 
For an $n$-dimensional quantum system, a  POVM corresponds to a collection $\mathcal{M}=\left\{ E_m \right\}_{m \in I}$ of positive semidefinite 
$n \times n$ matrices that sum up to identity, i.e.,
\begin{equation*}
\sum_{m \in I} E_m = \id.
\end{equation*}
The indices $m\in I$ indicate the possible measurement outcomes of performing such a POVM measurement.
Upon performing $\mathcal{M}$ on a system described by $\rho$, quantum mechanics then postulates that
the probability of obtaining the outcome (labeled by) $m$ corresponds to
\begin{equation*}
p (m, \rho) = \tr \left( E_m \rho \right).
\end{equation*}
Repeating the same measurement (i.e., preparing $\rho$ and measuring $\mathcal{M}$) many times allows one to estimate the $n$ probabilities $p (\lambda_i, \rho )$ ever more accurately. 

Note that the definitions of $\rho$ and $\mathcal{M}$ assure that ${p(m, \rho)}_{m \in I}$ is in fact a valid probability distribution. 
Indeed, $p(m,\rho) \geq 0$ follows from positive-semidefiniteness of both $\rho$ and $E_m$. 
Unit trace of $\rho$ assures proper normalization via
\begin{align*}
\sum_{m \in I} p (m, \rho ) = \sum_{m \in I} \tr \left( E_m \rho \right) = \tr \left( \id \rho \right) = \tr (\rho) = 1.
\end{align*}

\bibliographystyle{abbrv}

\end{document}